%% file: main.tex
\documentclass[11pt]{article}

\usepackage{style}
\usepackage{shortcuts}

\title{3SUM Is Really Hard: A Real-to-Integer Reduction}
\author{
  Nick Fischer  \\ {\small Max-Planck Institute for Informatics} \and
  Adam Polak    \\ {\small Bocconi University}                   \and
  Jonas Schmidt \\ {\small Bocconi University}
}
\date{}

\begin{document}

\maketitle

\begin{abstract}
\noindent
We show that if the 3SUM problem on integer-valued inputs can be solved in truly subquadratic time, then it can also be solved in truly subquadratic time on real-valued inputs. This answers an open problem posed by Chan, Vassilevska Williams, and Xu [STOC 2022], and constitutes the first such tight real-to-integer self-reduction in fine-grained complexity. Our proof relies on a surprising combination of the Frank--Tardos theorem from optimization with Freiman-type theorems from additive combinatorics.
\end{abstract}

\input{sections/intro}
\input{sections/prelims}
\input{sections/toolkit}
\input{sections/real-to-real-conv}
\input{sections/real-conv-to-int}

\section*{Acknowledgment}

We are very grateful to Karol Węgrzycki for pointing us to the Frank--Tardos theorem~\cite{FrankT87}.

The work of Adam Polak has been funded by the Italian Ministry of University and Research (MUR) under Ministerial Decree No.~18169 of 17 November 2025 -- FIS 3 Call CUP: J53C25002160001

\bibliographystyle{alphaurl}
\bibliography{main}

\end{document}

%% file: sections/intro.tex
\section{Introduction} \label{sec:intro}
Given $n$ numbers, the \emph{3SUM} problem asks whether any three of them sum up to zero. It can be solved in time $O(n^2)$. One of the three main hypotheses of fine-grained complexity \cite{VW18}, the \emph{3SUM Hypothesis}, asserts that this running time is optimal, up to subpolynomial $n^{o(1)}$ factors.

Some mild polylogarithmic improvements over the quadratic running time are however possible. In particular, Baran, Demaine and Pătrașcu~\cite{BaranDP08} showed that relatively simple hashing and bit-packing tricks suffice to obtain an \smash{$O\big(\frac{n^2}{\log^2n}\polyloglog(n)\big)$}-time algorithm for Integer 3SUM, i.e., 3SUM with input numbers required to be integers. It took much more time and a series of papers \cite{GronlundP18,GoldS17,Freund17,Chan20} to arrive at the same running time also for real-valued inputs, and these Real 3SUM algorithms employ a significantly more complex approach based on linear decision trees.

Gajentaan and Overmars~\cite{GajentaanO95} first introduced 3SUM in the context of conditional lower bounds for computational geometry problems, where it is natural to consider real-valued inputs. Later, Pătrașcu~\cite{Patrascu10} showed how to use hashing to reduce Integer 3SUM to combinatorial problems, such as finding triangles in graphs. This technique has quickly become prevalent in 3SUM-based hardness proofs (see~\cite{FischerKP24}), and in modern fine-grained complexity, 3SUM is essentially synonymous with Integer 3SUM.

Chan, Vassilevska Williams, and Xu~\cite{ChanVX22} asked whether fine-grained reductions from Integer 3SUM can be turned into reductions from Real 3SUM, and they made a partial progress on this question. In particular, they give non-tight reductions from Real 3SUM to the Integer All-Numbers 3SUM, All-Color-Pairs Triangle Collection, All-Edges Sparse Triangle, and All-Edges Monochromatic Triangle problems, as well as tight reductions to counting versions of the last two problems. They remark that
\begin{center}\itshape
``(...) it would be extremely interesting if one could show that \\ the Real 3SUM hypothesis implies the Integer 3SUM hypothesis.''
\end{center}
Our main result establishes this implication.

\begin{theorem}[Real-to-Integer Reduction for 3SUM] \label{thm:main}
If Integer 3SUM can be solved in time~$O(n^{2-\epsilon})$, for some constant $\epsilon > 0$, then Real 3SUM can be solved in time $O(n^{2-\Omega(\epsilon^4)})$ (in the comparison-addition model). 
\end{theorem}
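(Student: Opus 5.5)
The plan is to go through an intermediate structured problem suggested by the section titles: first reduce Real 3SUM to a ``real-to-real'' convolution-type problem with additive structure, and then reduce that structured real problem to Integer 3SUM using the Frank--Tardos theorem. Concretely, I will work with 3SUM on sets $A,B,C$ of reals, $|A|=|B|=|C|=n$.

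\textbf{Step 1: extract additive structure.} Run a standard additive-combinatorics dichotomy on the input, in the spirit of the BSG/Freiman approach. Either $A+B$ has few ``popular'' sums, or there are large subsets $A'\subseteq A$ and $B'\subseteq B$ with small doubling, $|A'+B'|\le K|A'|$. In the unstructured case, a standard counting/partition argument shows that only $O(n^{2}/K)$ pairs are not covered by structured pieces, and these pairs can be checked directly. In the structured case, I will apply a Freiman-type theorem over the reals (a torsion-free group) to place each piece inside a generalized arithmetic progression of dimension $d=O(\log K)$ or $\mathrm{poly}(K)$ and size $\mathrm{poly}(K)\cdot n$. Every element is then written as $x_0+\sum_i k_i v_i$ with integer coordinates $k_i$ bounded by $\mathrm{poly}(n)$. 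Iterating this over a covering of $A\times B$ by structured pieces, with losses polynomial in $K$, should yield the exponent $\Omega(\epsilon^4)$.

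\textbf{Step 2: real-to-integer via Frank--Tardos.} Within one GAP, a sum $a+b+c$ is a linear form $\sum_i (k_i+k'_i+k''_i)v_i + (x_0+x'_0+x''_0)$ in the real generators $v_i$. Only its sign matters, over a bounded integer box of coefficient vectors. The Frank--Tardos theorem turns the real weight vector $(v_1,\dots,v_d,x_0\text{-terms})$ into an integer vector $w$ with entries bounded by $2^{O(d^3)}\cdot N^{O(d^2)}$ that has the same sign on every integer vector of $\ell_1$-norm at most $N$. This requires only $\mathrm{poly}(d)$ comparisons and additions, so it fits the comparison-addition model. Replacing each element by its integer encoding $\langle w,k\rangle$ preserves every zero test. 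As long as $d$ is constant, the integers have $O(\log n)$ bits, so the hypothesized $O(n^{2-\epsilon})$ Integer 3SUM algorithm can be invoked on each structured subinstance. Subinstances of total size $m$ cost $m^{2-\epsilon}$, and summing over the pieces gives a truly subquadratic total.

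\textbf{Main obstacle.} The hardest step is controlling the GAP dimension and the covering so that Frank--Tardos produces integers of polynomial magnitude. For dimension $d$ the integers grow like $n^{O(d^2)}$, so $d$ must be $O(1)$, or at most $O(\sqrt{\log n})$ with word-RAM tricks. This forces $K$ to be $n^{O(\epsilon)}$-small. The accounting also has to balance the brute-force cost of the unstructured pairs against the Integer 3SUM calls. My first attempt would use a Freiman theorem with linear dimension bound $d\le K$, which applies to sets of small doubling in torsion-free groups, together with a Balog--Szemer\'edi--Gowers decomposition. I would set $K=n^{\delta}$ with $\delta=\Theta(\epsilon)$. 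The polynomial losses in BSG, roughly $K^{O(1)}$ composed twice, are the likely source of the $\epsilon^4$ in the final exponent.
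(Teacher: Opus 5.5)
Your Step~1 has a genuine gap: it needs \emph{exact} additive structure, and a generic real instance has none. If $A$, $B$, $C$ consist of random reals, then $|A+B| = n^2$ and the additive energy is $O(n^2)$. The Balog--Szemer\'edi--Gowers dichotomy then produces no structured pieces, and every pair in $A \times B$ is ``uncovered''. In the BSG covering form, the only pairs whose coverage is controlled are those with $a+b \in -C$, i.e., the unknown solutions themselves. Low energy tells you there are few solutions, not how to find them, so ``checking the remaining pairs directly'' is either quadratic or circular. Integer reductions escape this by hashing into a small universe, which forces structure; that tool does not exist for reals.

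The missing idea is to \emph{force} structure by passing to $\Delta$-approximate solutions. This is an implicit rounding, since floors are not allowed in the model. $\Delta$ is chosen so that approximate solutions are plentiful enough to produce large \emph{approximate} energy, yet few enough to be listed. No single $\Delta$ works when inputs spread over many scales, so the paper first separates scales (\cref{lem:reduction-to-clustered}). Each part is replaced by $g = n^{1-\Theta(\epsilon)}$ well-spaced model points. All approximate solutions with an approximately unpopular element can then be enumerated in $O(g^2)$ time, and only the popular part is hashed into a Real Convolution 3SUM instance. The second stage (\cref{thm:conv-to-integer-reduction}) repeats this with an adaptively chosen threshold (\cref{lem:approx-threshold}). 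It lists popular solutions via listing-to-decision and handles popular/unpopular mixtures with a Fredman-type trick (\cref{lem:type3}). The $\epsilon^4$ comes from composing these two stages, each turning $\epsilon$ into $\Omega(\epsilon^2)$, not from BSG losses.

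Step~2 also does not go through as stated. With $K = n^{\Theta(\epsilon)}$, containment in a single GAP can force rank polynomial in $n$. For example, an arithmetic progression together with $K$ rationally independent reals has doubling $O(K)$, yet any GAP containing it has rank at least $K+1$. So your own requirement $d = O(1)$ cannot be met. Even $d = \Theta(\log n)$ would leave Frank--Tardos with only a $2^{O(\log^3 n)}$ bound, far beyond $O(\log n)$ bits. The paper instead accepts $(K\log n)^{O(1)}$-bit integers and only afterwards shrinks the universe with Dietzfelbinger's almost-linear hashing (\cref{lem:linear-hashing}). Random offsets from $\Lambda$ repair almost-additivity, and the resulting false positives are bounded separately.

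You also do not say how to compute GAP coordinates of real inputs in the comparison-addition model. The modeling lemma behind Freiman's theorem over torsion-free groups is non-constructive. Making it algorithmic essentially means running Frank--Tardos on all $n$ reals, whose output alone can need $\Omega(n^2)$ bits. The paper replaces Freiman's theorem by its Poor Man's version (\cref{lem:subset-sum-cover,thm:poor-man-freiman}). For a well-spaced set in which every element is approximately popular, random popular differences plus a Bor\r{u}vka-style connectivity argument give a subset sum cover of rank $O(K\log^2 n)$. Only this small cover is fed to Frank--Tardos (\cref{lem:hash-popular-to-int}).
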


In other words, we show that Integer 3SUM is as hard as Real 3SUM. Thus, if we believe that Real 3SUM is really hard, then also Integer 3SUM and all the numerous problems with known (Integer) 3SUM-based fine-grained lower bounds are also really hard. We consider this result quite unexpected. To our knowledge, it constitutes the first and only tight real-to-integer self-reduction in fine-grained algorithms and complexity.

Our result fits a broader line of work that tries to understand how much of the hashing-based machinery developed around 3SUM (and other related problems such as Subset Sum) is actually necessary. One part of this line deals with derandomization~\cite{ChanH20,FischerKP24}, another (including this work) with real-valued inputs~\cite{ChanVX22,Fischer25}.

\paragraph{Further Consequences.} Our reduction from Real to Integer 3SUM goes through an intermediate problem, the Real Convolution 3SUM problem, which asks, for a given input sequence~$a\in\Real^n$, to determine if there exist $i, j, k$ such that $i + j = k$ and $a[i] + a[j] = a[k]$. The integer-valued version of this problem was first introduced by Pătrașcu~\cite{Patrascu10}, who showed that it is equivalent to Integer 3SUM under subquadratic fine-grained reductions, and used it as an intermediate problem for showing hardness of a number of combinatorial problems. Pătrașcu's reduction from 3SUM to Convolution 3SUM relies on linear hashing as in~\cite{BaranDP08}---a technique inherently only suited to integers. It was later simplified and derandomized by Chan and He~\cite{ChanH20}, who ``leave open the question of whether (...) a similar reduction from 3SUM to Convolution 3SUM is possible for real numbers.'' We answer this question in the affirmative with our intermediate reduction (\cref{thm:real-to-real-conv}).

Another important problem in fine-grained complexity is the Exact Triangle problem, which asks to determine whether a given edge-weighted graph contains a triangle whose total edge weight equals exactly zero. In the world of integer-valued inputs, Exact Triangle is known to be harder than both 3SUM~\cite{WilliamsW13} and APSP~\cite{VassilevskaW18} (All-Pairs Shortest Path, the second of the three central problems in fine-grained complexity), and thus its conjectured hardness gives rise to a weaker unifying hypothesis, which turns out to be sufficient to prove hardness of many problems previously known to be 3SUM-hard~\cite{VassilevskaX20b,ChanX24}. Up until now it was not known if there is a real analogue of the reduction from (Integer) 3SUM to (Integer) Exact Triangle. This reduction goes through the Convolution 3SUM problem and applies a generic convolution-to-product transformation~\cite{WilliamsW13}---the latter would work as well with real-valued inputs, but only our \cref{thm:real-to-real-conv} establishes a real counterpart of the former.

\begin{corollary}
If Real Exact Triangle can be solved in time $O(n^{2-\epsilon})$, for some constant $\epsilon > 0$, then Real 3SUM can be solved in time $O(n^{2 - \Omega(\epsilon^2)})$ (in the comparison-addition model).
\end{corollary}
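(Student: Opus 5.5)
The plan is to compose two reductions: first the paper's real-to-real reduction from Real 3SUM to Real Convolution 3SUM (\cref{thm:real-to-real-conv}), and then the standard convolution-to-product transformation of Vassilevska Williams and Williams~\cite{WilliamsW13} from Convolution 3SUM to Exact Triangle. The second step only shifts the real values by fixed amounts and compares sums, so it is valid for reals in the comparison-addition model. I will assume that \cref{thm:real-to-real-conv} has the form: if Real Convolution 3SUM is solvable in time $O(n^{2-\delta})$, then Real 3SUM is solvable in time $O(n^{2-\Omega(\delta)})$. The quadratic loss $\Omega(\epsilon^2)$ in the corollary then has to come from the convolution-to-triangle step. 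If instead the theorem already loses a square, the second step must be linear in the exponent; in either case the bookkeeping below is what I would verify.

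\textbf{Step 1: convolution to triangles.} Let $a\in\Real^n$ be a Real Convolution 3SUM instance and fix a parameter $s$ (a power of two, say). Write each index as $i = i_1 s + i_2$ with $0\le i_2<s$. For $i+j=k$ we have $k_2 = i_2+j_2-cs$ with carry $c\in\{0,1\}$, and $k_1 = i_1+j_1+c$. Guess the carry $c$ (two cases). Also guess the high part $i_1$ of $i$: there are $n/s$ choices.

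For fixed $c$ and $i_1$, build a tripartite graph whose three parts each have $O(\max(s,n/s))$ vertices:
\begin{itemize}
\item a vertex $x$ for each $i_2\in[s]$;
\item a vertex $y$ for each $j_1\in[n/s]$;
\item a vertex $z$ for each $k_2\in[s]$, which determines $j_2=k_2-i_2+cs$ (so the $x$–$z$ edge depends on $i_2,k_2$).
\end{itemize}
Edge weights are assigned as follows, with $i=i_1s+i_2$, $j=j_1s+j_2$, $k=(i_1+j_1+c)s+k_2$:
\begin{itemize}
\item edge $(x,z)$ gets $a[i]$;
\item edge $(y,z)$ gets $a[j]$;
\item edge $(x,y)$ gets $-a[k]$.
\end{itemize}
These assignments are inconsistent, since $a[j]$ depends on $j_2$, which depends on both $i_2$ and $k_2$, and $a[k]$ depends on $k_2$, not on the $x$–$y$ pair. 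The correct fix is the standard WW13 choice of which two coordinates each edge sees. Index the three parts by $(i_2)$, $(j_1)$ and $(k_2)$ with $i_1$ fixed. Then:
\begin{itemize}
\item $a[i]$ depends only on $i_2$;
\item $a[j]$ depends on $(j_1,j_2)$, i.e.\ on $(j_1,i_2,k_2)$, which is three coordinates;
\item $a[k]$ depends on $(j_1,k_2)$.
\end{itemize}
So I would instead guess $j_1$ as well and use the parts $(i_2)$, $(j_2)$ and $(k_2)$. This does not work either, because the constraint $i_2+j_2=k_2+cs$ is linear.

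The robust approach is to use WW13 as a black box. They show that Convolution 3SUM on $n$ numbers reduces to $O(\sqrt n)$ instances of Exact Triangle on $O(\sqrt n)$-vertex graphs, and the reduction only copies input values onto edges. An $O(N^{3-\epsilon'})$ Exact Triangle algorithm on $N$-vertex graphs therefore gives Convolution 3SUM in time $O(n^{1/2}\cdot n^{(3-\epsilon')/2}) = O(n^{2-\epsilon'/2})$. I will check that each edge weight is a single input value, possibly negated, so that the reduction holds verbatim over the reals.

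\textbf{Step 2: parameters.} I interpret the corollary's ``$O(n^{2-\epsilon})$'' for Exact Triangle as being measured in the input size $n=N^2$, i.e.\ time $N^{3-\epsilon'}$ with $\epsilon'$ linear in $\epsilon$. Step 1 then yields Real Convolution 3SUM in time $n^{2-\Omega(\epsilon)}$, and \cref{thm:real-to-real-conv} converts this into Real 3SUM in time $n^{2-\Omega(\epsilon^2)}$. I expect the main obstacle to be matching the exponent loss exactly to the quantitative form of \cref{thm:real-to-real-conv}, together with confirming that the WW13 transformation uses no hashing or integrality. I believe the latter holds, since that transformation is a pure reindexing.
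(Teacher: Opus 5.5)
This is essentially the paper's argument: compose \cref{thm:real-to-real-conv} with the convolution-to-triangle transformation of~\cite{WilliamsW13}, which the paper also uses as a black box and which works verbatim over the reals because it only places input values on edges. That theorem already loses $\epsilon \mapsto \Omega(\epsilon^2)$, so your Step~2 bookkeeping is the right one and your initial guess is not; beyond that, two small repairs are needed. First, read the hypothesis as truly subcubic in the number $N$ of vertices: your reading $n=N^2$ gives $N^{4-2\epsilon}$, not $N^{3-\epsilon'}$, while the literal $O(N^{2-\epsilon})$ is only a stronger assumption. Second, the explicit construction you were missing is to write $i=i_1m+i_2$ and $j=j_1m+j_2$ with $m=\sqrt n$, guess $g=i_2+j_2\in[0,2m)$, and take parts indexed by $i_1$, $i_2$, $j_1$ with edge weights $a[i_1m+i_2]$, $a[j_1m+g-i_2]$ (present only if $0\le g-i_2<m$) and $-a[(i_1+j_1)m+g]$; this gives $O(\sqrt n)$ instances on $O(\sqrt n)$ vertices each.
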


Another immediate consequence of our result is that all problems previously known to be as hard as Integer 3SUM (e.g., All-Pairs Sparse Triangle~\cite{Patrascu10,KopelowitzPP16}) are now also known to be as hard as Real 3SUM.

\subsection{Computational Model} \label{sec:intro:sec:model}
In fine-grained complexity it is customary work in the word RAM model of computation with $O(\log n)$-bit words, which is a natural choice when the input consists of polynomially bounded integers. However, to deal with real-valued inputs, we need to move to a real RAM model with dedicated registers for storing real numbers and performing specific operations on them. Here we need to be particularly careful when describing the set of allowed operations on these real numbers, because reals are inherently infinite-precision and thus the model can easily gain unreasonable computational power. For instance, it is known that when supporting infinite-precision arithmetic operations and flooring allows solving PSPACE-complete problems in polynomial time~\cite{HartmanisS74}. But even without flooring, the real RAM model is unexpectedly powerful. The fine-grained complexity landscape would look very differently in that model; for instance, computing all-pairs shortest paths (in graphs with integer edge weights) is in truly subcubic time~\cite{Yuval76}. While 3SUM with real-valued input may still require quadratic time in such an unreasonably powerful computational model, we prefer to state our reductions in a model as weak as possible.

It turns out that our reduction can be implemented in a very weak model, the \emph{comparison-addition} model. This model, on top of the word RAM model, supports only two real-valued operations: comparisons ($a < b$) and additions---no multiplication, no flooring, no logarithms.

In fact, our reduction works in an even more restricted model which we will refer to as the \emph{realistic} comparison-addition model. This model supports only a single operation, namely to evaluate a comparison of the form
\begin{equation*}
    \sum_{i=1}^m n_i \alpha_i \geq 0,
\end{equation*}
where the $n_i$'s are arbitrary non-zero integers and the $\alpha_i$'s are input reals; each such operation is assumed to take time $(\sum_{i=1}^m \log |n_i|)^{O(1)}$.\footnote{Note that we can still simulate \emph{additions} in the realistic comparison-addition model, by performing all additions \emph{symbolically}. Concretely, define a symbolic number as a tuple $(n_1, \dots, n_m, \alpha_1, \dots, \alpha_m)$ representing $\sum_{i=1}^m n_i \alpha_i$, where the $n_i$'s are integers and the $\alpha_i$'s are pointers to the input reals. Two symbolic numbers can be added in time $(\sum_{i=1}^m \log |n_i|)^{O(1)}$ without invoking any operations on reals. Only a comparison of two symbolic numbers becomes one operation in the realistic comparison-addition model. One can similarly simulate subtractions and multiplications/divisions by integers.} Note that this is exactly the time it would take to evaluate the comparison in the word RAM model if the $\alpha_i$'s were polynomially-bounded integers. In this sense, the realistic comparison-addition model is a safe choice, as it \emph{must} inherit the same fine-grained complexity landscape as the word RAM model for integer inputs. It is possible to impose further restrictions on the model (such as further restricting the number of terms in each sum); we refer to~\cite[Appendix A in the arXiv version]{ChanVX22} for a deeper discussion.

Curiously, the \emph{only} reason why we cannot show that Integer 3SUM is \emph{equivalent} to Real 3SUM in the comparison-addition model is that the simple direction, reducing Integer 3SUM to Real 3SUM, does not work. After all, a hypothetical $O(n^{2-\epsilon})$-time algorithm for Real 3SUM in the comparison-addition model might somehow rely on the unbounded power of the model (e.g., by constructing huge numbers with repeated squaring), and may for this reason not be transferable in the word RAM model. In contrast, the realistic comparison-addition model can be simulated by the word RAM model. For this reason, we do obtain the anticipated equivalence in the realistic comparison-addition model:

\begin{theorem}[Equivalence] \label{thm:equiv}
Integer 3SUM can be solved in the word RAM model in time $O(n^{2-\epsilon})$, for some constant $\epsilon > 0$, if and only if Real 3SUM can be solved in the realistic comparison-addition model in time $O(n^{2-\epsilon'})$, for some constant $\epsilon' > 0$.
\end{theorem}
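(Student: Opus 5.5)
The equivalence splits into two directions, and I would handle them separately.

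\emph{Integer to real.} Suppose Real 3SUM can be solved in time $O(n^{2-\epsilon'})$ in the realistic comparison-addition model. Given an Integer 3SUM instance with polynomially bounded integers, I would run that algorithm with the input integers in the role of the reals. Each primitive operation is a test $\sum_{i=1}^m n_i\alpha_i \ge 0$ with $\alpha_i$ input integers. On the word RAM this test can be evaluated exactly in time $(\sum_i \log|n_i|)^{O(1)}$: there are $m$ terms, each $\log|n_i|\ge 0$ with $n_i\neq 0$, and the accumulated sum has bit length $O(\sum_i \log|n_i| + m\log n)$. Up to the usual care with $|n_i|=1$ terms, which are charged $O(1)$ each, this matches the cost the model assigns to the operation, up to polynomial factors in $\log n$. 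The word-RAM bookkeeping of the real-RAM algorithm (control flow, symbolic coefficient manipulation) is already word-RAM work and carries over unchanged. Hence the total word-RAM time is $O(n^{2-\epsilon'}\polylog n)$, which is $O(n^{2-\epsilon'/2})$. This direction is routine.

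\emph{Real to integer.} Suppose Integer 3SUM is solvable on the word RAM in time $O(n^{2-\epsilon})$. I would apply \cref{thm:main}, which gives a Real 3SUM algorithm running in time $O(n^{2-\Omega(\epsilon^4)})$, and take $\epsilon' = \Omega(\epsilon^4)$. The point requiring care is that \cref{thm:main} is stated in the comparison-addition model, whereas we need the realistic model. So I would audit the reduction chain: \cref{thm:real-to-real-conv} from Real 3SUM to Real Convolution 3SUM, and then the real-convolution-to-integer step via Frank--Tardos. I need to check that every real operation there is either an addition of input reals with integer coefficients or a comparison of such combinations. Additions are then performed symbolically, as in the footnote, and each comparison becomes one realistic operation. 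The cost of such an operation is polylogarithmic provided all coefficients have $\mathrm{poly}(\log n)$ bits and all sums have few terms, or more generally $\sum_i \log|n_i| = n^{o(1)}$ per comparison.

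The main obstacle is exactly this coefficient bound in the Frank--Tardos step. That step replaces the real weights by integers preserving the signs of all relevant linear forms, and it computes these integers through comparisons of combinations whose coefficients grow during simultaneous Diophantine approximation. I would invoke the Frank--Tardos guarantee that the output integers have $O(d^3)$ bits in dimension $d$, and note that the combinations compared arise from lattice reduction with polynomially bounded bit lengths in $d$. With $d$ chosen as $n^{o(1)}$ (or constant), as in the reduction, every comparison costs $n^{o(1)}$. If the reduction instead uses dimension $n^{\delta}$, the overhead $n^{O(\delta)}$ is absorbed by shrinking $\delta$ relative to $\epsilon$, still yielding some constant $\epsilon'>0$. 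Everything in the Freiman-type step is purely combinatorial on indices and integer data, so it costs no real operations beyond these.

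Combining both directions gives \cref{thm:equiv}.
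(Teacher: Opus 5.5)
Your proposal is correct and follows the paper's own argument. The easy direction is a direct word-RAM simulation of the realistic model, and the hard direction is \cref{thm:main} once every real operation is checked to be a comparison of integer combinations of inputs, with additions kept symbolic and coefficients of $d^{O(1)}$ bits in the Frank--Tardos step. Two small corrections: in Frank--Tardos the real comparisons occur only in the binary-search rounding of \cref{cor:dirichlet-real}, since lattice reduction runs on rationals. Also, the subset-sum-cover and popular/unpopular steps are not free of real operations (e.g., testing $a - b = d \pm \Delta$), though these comparisons are cheap and already covered by your general audit.
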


\subsection{Technical Overview} \label{sec:intro:sec:overview}
In this section, we describe the high-level ideas behind our real-to-integer reduction. It differs fundamentally from the previous non-tight reduction of Chan, Vassilevska Williams, and Xu~\cite{ChanVX22}. Their proof is built around the All-Edges Sparse Triangle problem, which asks to identify all edges that participate in triangles in a given sparse graph. This problem is known to admit a tight reduction from (Integer) 3SUM~\cite{Patrascu10,KopelowitzPP16}, as well as a simple non-tight reduction to 3SUM. One of the main results of~\cite{ChanVX22} is to extend the former reduction to reals, establishing a non-tight reduction from Real 3SUM to All-Edges Sparse Triangle.\footnote{Notably, as a direct corollary of \cref{thm:main}, we obtain a tight reduction from Real 3SUM to All-Edges Sparse Triangle.} By transitivity, this yields a non-tight reduction from Real 3SUM to 3SUM.

This is a conceptually clean approach, but it seems unlikely to yield a tight reduction as in \cref{thm:main}. A major obstacle is that APSP is also known to admit a tight reduction to All-Edges Sparse Triangle~\cite{VassilevskaX20b,ChanVX22}. Consequently, a tight reduction along these lines would also imply a tight connection between 3SUM and APSP---a major open question in fine-grained complexity. For this reason, we depart entirely from this framework, and start over from scratch.

Let $A \subseteq \Real$ denote a Real 3SUM instance. Perhaps the most basic and naive idea is to aim for a simple mapping from $A$ to an equivalent set of integers. Concretely, does there exist a function $h : A \to \Int$ such that
\begin{equation} \label{eq:dream-h}
    a + b + c = 0 \;\;\text{if and only if}\;\; h(a) + h(b) + h(c) = 0?
\end{equation}
And if so, can it be computed efficiently? With access to such a function $h$, the reduction would be trivial: simply replace $A$ by the equivalent Integer 3SUM instance $\set{h(a) : a \in A}$. To our initial surprise, the short answer is: yes, essentially! The detailed answer involves four interesting ideas.

\subsubsection*{Idea 1: Frank--Tardos Theorem}
The first idea, and our starting point, is to exploit the celebrated Frank--Tardos theorem~\cite{FrankT87} from optimization. In its original form, it states that any 0--1 optimization problem can be solved in polynomial time if and only it can be solved in \emph{strongly}\footnote{Informally, an algorithm with integer inputs is said to run in \emph{strongly} polynomial time if it runs in polynomial time, and performs only arithmetic operations (${+}, {-}, {\cdot}, {/}$) and comparisons $({=}, {\leq}, {\geq}, {<}, {>})$ on the integer inputs (and no other operations such as flooring). In particular, all intermediate results are integers that can be stored in polynomial space.} polynomial time. The key technical claim is as follows, intuitively stating that any set of reals can be replaced by (large) integers so that the order of its subset sums is preserved.

\begin{restatable}[Frank--Tardos~\cite{FrankT87}]{theorem}{thmfranktardos} \label{thm:frank-tardos}
There is an algorithm that, given reals $\alpha \in \Real^d$, runs in polynomial time $d^{O(1)}$ and computes integers $v \in \Int^d$ such that
\begin{enumerate}[label=(\roman*)]
    \item $\norm{v}_\infty \leq 2^{O(d^3)}$, and
    \item $\alpha^T x \geq 0$ if and only if $v^T x \geq 0$, for all $x \in \set{-1, 0, 1}^d$.
\end{enumerate}
\end{restatable}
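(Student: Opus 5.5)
The plan is the classical simultaneous Diophantine approximation argument of Frank and Tardos, built on the Lenstra--Lenstra--Lovász lattice basis reduction. The goal is an integer vector $v$ that agrees in sign with $\alpha$ on all of $\set{-1,0,1}^d$, with entries of magnitude $2^{O(d^3)}$.

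\textbf{Step 1: one round of approximation.} Set $N = 2^{\Theta(d)}$, for instance $N=d+2$ suffices for the sign argument, so that $\norm{x}_1 \le d < N-1$. Assume after normalizing that $\norm{\alpha}_\infty = 1$. Apply the LLL-based simultaneous Diophantine approximation algorithm to $\alpha$. It returns an integer $q$ with $1 \le q \le 2^{O(d^2)}$ and an integer vector $p$ such that $\norm{q\alpha - p}_\infty < 1/N$.

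This step needs only comparisons and additions together with integer arithmetic on the (polynomially long) integer data. It can be implemented via the approach that runs LLL on a basis involving rational approximations. A cleaner route for the comparison-addition model is to compute approximations by binary search on $\alpha$ using integer multiples.

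\textbf{Step 2: sign preservation up to a residual.} Write $\alpha = p/q + r/q$ with $\norm{r}_\infty < 1/N$, and fix $x\in\set{-1,0,1}^d$. If $p^Tx \ne 0$, then $|p^Tx|\ge 1 > |r^Tx|$, so $\operatorname{sign}(\alpha^Tx) = \operatorname{sign}(p^Tx)$. If $p^Tx=0$, the sign of $\alpha^Tx$ equals the sign of $r^Tx$. This suggests recursing on $r$, and then combining the rounds lexicographically:
\begin{equation*}
v = M^{d-1}p^{(1)} + M^{d-2}p^{(2)} + \dots + p^{(d)},
\end{equation*}
where $M$ exceeds $d\cdot\max\norm{p^{(i)}}_\infty$. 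With this choice the sign of $v^Tx$ is the sign of the first nonzero $p^{(i)T}x$. To make that first nonzero term the true sign, the residual after each round must be strictly smaller than the current vector, so that the recursion compares correctly.

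\textbf{Step 3: termination within $d$ rounds.} To guarantee at most $d$ rounds, I use the Frank--Tardos refinement. In each round, rescale the residual to have $\infty$-norm $1$, and require that the approximation $p^{(i)}$ match the residual exactly in the coordinate that attains the maximum. That coordinate becomes an integer multiple, so the residuals live in a subspace whose dimension drops by one per round, for instance by eliminating that coordinate. Equivalently, use the linear independence of the $p^{(i)}$. Hence after $d$ rounds the residual is $0$, and at that point $\alpha = \sum_i \lambda_i p^{(i)}$ with $\lambda_1 \gg \lambda_2 \gg \cdots > 0$.

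\textbf{Bounds and cost.} Each $\norm{p^{(i)}}_\infty \le 2^{O(d^2)}$, so $M = 2^{O(d^2)}$ and $\norm{v}_\infty \le M^d = 2^{O(d^3)}$, which gives (i). The running time is $d$ LLL calls, each polynomial, for $d^{O(1)}$ total.

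\textbf{Main obstacle.} The delicate point is the correctness of the lexicographic combination: I must ensure $\lambda_{i+1}\norm{p^{(i+1)}}$-scale terms never flip a nonzero sign from an earlier round. This is exactly guaranteed by $|p^{(i)T}x|\ge1$ versus residual norm $<1/N$ with $N>d$. A second subtlety is implementing LLL in the comparison-addition model without multiplications of reals. I would handle it by first replacing each $\alpha_i$ with a sufficiently fine rational approximation computed by repeated doubling and comparison; precision $2^{-O(d^2)}$ suffices, and bits are obtained by comparing integer combinations. LLL then runs on integers in the word RAM.
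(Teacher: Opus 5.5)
Your proposal is correct and is essentially the paper's proof. The shared ingredients are: repeated LLL-based Dirichlet approximation, with the reals first rounded by binary search using comparisons of integer combinations; sign preservation from $|p^Tx|\ge 1>|r^Tx|$; at most $d$ rounds because the maximal coordinate is zeroed each round; and an integer combination with huge multipliers. Your uniform base-$M$ lexicographic sum is the paper's recursive $v=Mp+\hat v$ with $M=\norm{\hat v}_1+1$, just written out with a uniform base.

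Two small remarks. First, the exact match $p_{i^*}=\pm q$ at the maximal coordinate need not be \emph{required}: it is forced by $|q\alpha_{i^*}-p_{i^*}|<1/N<1$, and zero coordinates stay zero for the same reason. Second, in the comparison-addition model the normalization must stay implicit, i.e., recurse on $q\alpha-\beta p$ with $\beta=\norm{\alpha}_\infty$ as the paper does, rather than dividing by $\beta$.
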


Although our setting differs from that considered by Frank and Tardos, it turns out that this theorem is similarly valid in the comparison-addition model for computing with reals. In \cref{sec:toolkit:sec:frank-tardos} we retrace the proof of \cref{thm:frank-tardos}, following the beautiful exposition by Eisenbrand~\cite{Eisenbrand10}, to verify that all required operations on reals can be implemented as claimed.

Returning to our earlier question, \cref{thm:frank-tardos} directly entails the existence of the desired function $h$. Indeed, let $\alpha_1, \dots, \alpha_n$ denote the elements of $A$, and apply the theorem with $d = n$ to obtain integers $v_1, \dots, v_n$. Then take the function $h$ defined by $h(\alpha_i) = v_i$; \cref{thm:frank-tardos} implies condition~\eqref{eq:dream-h} (as it implies both the $\geq$ and $\leq$ variants of this equivalence).

This is an encouraging starting point, but a fundamental obstacle remains: the Frank--Tardos algorithm is far too slow. It builds on top of the LLL lattice reduction algorithm~\cite{LenstraLL1982}, which itself already runs in time $O(d^6)$, and introduces some more polynomial factors along the way. For $d = n$, this yields a running time of $O(n^6)$, which is well beyond the subquadratic budget $n^{2-\Omega(1)}$ required for our reduction. Even worse, it is \emph{unconditionally impossible} to improve the running time of \cref{thm:frank-tardos} to subquadratic, since the output size is necessarily~$\Omega(d^2)$.\footnote{To see this, let $\alpha$ be the length-$2d$ vector where the first $d$ entries are defined by $\alpha_i = 2^i$ and the last $d$ entries consist of arbitrary integers $\alpha_{d+1}, \dots, a_{2d}$ in the range $[0, 2^d-1]$. Let $v$ denote the integer vector implied by \cref{thm:frank-tardos}. Then one can exactly reconstruct $\alpha_{d+1}, \dots, a_{2d}$ from $v$. Indeed, the bit-representation of $\alpha_i$ is the unique vector $b \in \set{0, 1}^d$ such that $\sum_{j=1}^d v_j b_j = v_i$. It takes $\Omega(d^2)$ bits to specify $\alpha_{d+1}, \dots, a_{2d}$, and thus also $\Omega(d^2)$ bits to specify $v$.}

Overcoming this issue is our key challenge in this paper. Our proposed solution is new and conceptually exciting, but technically quite involved. It relies on several more ideas to be described next.

\subsubsection*{Idea 2: Subset Sum Covers}
Since applying the Frank--Tardos theorem to all $n$ input reals is too expensive, a natural strategy is to invoke it on a much smaller set, say for $d = O(n^{0.1})$. In that case, the polynomial running time becomes subquadratic, which is within our time budget.

To understand which restricted set of reals to use, observe that we have not fully exhausted the power of the Frank--Tardos theorem before: it not only preserves the order of 3-sums $\alpha_i + \alpha_j + \alpha_k$, but in fact of arbitrary \emph{subset sums} $\alpha^T x$ (for $x \in \set{0, 1}^d$). Thus, to effectively use \cref{thm:frank-tardos} it would be sufficient to have access to a vector $\alpha$ so that each element in $A$ can be expressed as a subset sum of $\alpha$; in this case we call $\alpha$ a \emph{subset sum cover} of $A$. We refer to the length of $\alpha$ as the \emph{rank} of the subset sum cover. In this language, our goal is to find a rank-$O(n^{0.1})$ subset sum cover of $A$.

Unfortunately, such a low-rank subset sum cover might not exist---in fact, it typically does not exist! Any set~$A$ without additive structure (e.g., a set of $n$ random reals in some prescribed range) does not admit subset sum covers of rank smaller than $|A|$. Experience shows that these unstructured sets are likely among the hardest instances of 3SUM.

Surprisingly, we still manage to make this approach work. Even if $A$ might not admit a low-rank subset sum cover, we show that an appropriate approximation of $A$ always does. Specifically, we show the following two statements:
\begin{enumerate}[label=(\alph*)]
    \item any sufficiently \emph{additively structured} set has an $O(n^{0.1})$-rank subset sum cover, and \label{itm:goal-struct-implies-cover}
    \item any Real 3SUM instance $A$ can be \emph{forced} to become additively structured by appropriately \emph{rounding} it. \label{itm:goal-force-struct}
\end{enumerate}
In what follows we describe our solutions to these two subgoals.

\subsubsection*{Idea 3: Additive Combinatorics and Freiman's Theorem}
We begin with subgoal~\ref{itm:goal-struct-implies-cover}. Suppose for now that $A$ is additively structured; we show that in this case one can efficiently construct a low-rank subset sum cover of~$A$. To formalize this, we must specify what ``additively structured'' means. Additive combinatorics offers several such notions, and we later make use of multiple variants. For the purposes of this overview, however, we focus on the most basic one: small \emph{doubling}. Let
\begin{equation*}
    A + B \coloneqq \set{a + b : a \in A, b \in B}
\end{equation*}
denote the \emph{sumset} of two sets $A$ and $B$. The \emph{doubling} of a set~$A$ is defined as
\begin{equation*}
    K \coloneqq \frac{|A + A|}{|A|}.
\end{equation*}
This parameter ranges from $1$ to $|A|$. Highly structured sets (e.g., arithmetic progressions $A = \set{a, 2a, \dots, na}$) have $K = O(1)$, whereas unstructured sets (such as random sets) typically have $K$ close to $|A|$.

Our goal is thus to show that any low-doubling set admits a low-rank subset sum cover. This precisely concerns a central result in additive combinatorics, namely Freiman's theorem. To state it, we introduce one more definition. A \emph{generalized arithmetic progression} (GAP) of \emph{rank $d$} is a set of the form
\begin{equation*}
    \set*{\alpha^T x : 0 \leq x_i \leq N_i},
\end{equation*}
for some $\alpha \in \Real^d$ and $N \in \Int_{>0}^d$.
This should be viewed as a generalized set of subset sums of $\alpha$, where each item $\alpha_i$ may be used up to $N_i$ times. Freiman's theorem~\cite{Freiman66} states that every low-doubling set is contained in a low-rank GAP. The following quantitative bounds are due to Ruzsa~\cite{Ruzsa94} and Chang~\cite{Chang02}.\footnote{Historically, Freiman's original proof~\cite{Freiman66} gave weaker quantitative bounds; see also~\cite{Bilu99}. Ruzsa later provided an alternative proof~\cite{Ruzsa94}, which was sharpened by Chang~\cite{Chang02} to obtain the bounds in \cref{thm:freiman}. Some of these results are phrased for integers, but actually work for arbitrary torsion-free abelian groups, and were further generalized to arbitrary abelian groups by Ruzsa and Green~\cite{GreenR07}. Modern formulations often relax containment to covering by a few translates, yielding significantly improved bounds; see, e.g., the milestones~\cite{Sanders12,GowersGMT25}.}

\begin{theorem}[Freiman's Theorem] \label{thm:freiman}
Let $A \subseteq \Real$ be a set with doubling $|A + A| \leq K |A|$. Then $A$ is contained in a generalized arithmetic progression of size $|A| \cdot \exp(K^{O(1)})$ and rank $K^{O(1)}$.
\end{theorem}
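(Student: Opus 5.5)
The plan is to follow Ruzsa's proof, with Chang's refinement for the quantitative bounds, for a finite set $A\subseteq\Real$ of doubling at most $K$. The first step reduces from $\Real$ to a torsion-free group of bounded rank. The subgroup $G=\langle A\rangle$ of $(\Real,+)$ is finitely generated and torsion-free, so $G\cong\Int^r$ for some $r$. Freiman-isomorphism invariance lets us work inside $\Int^r$. Next I pick a prime $p$ larger than the diameter of $8A-8A$ in every coordinate direction, or equivalently a generic linear projection $\Int^r\to\Int$ that is injective on $8A-8A$. This yields a set $A'\subseteq\Int$ that is Freiman $8$-isomorphic to $A$. Since GAPs pull back under Freiman isomorphisms of sufficiently high order (applied to the set $2A-2A$ where the progression is built), it suffices to prove the theorem for $A\subseteq\Int$.

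Second, I embed $A$ into a cyclic group. Ruzsa's modeling lemma gives a subset $A_0\subseteq A$ with $|A_0|\ge|A|/8$ that is Freiman $8$-isomorphic to a subset $B$ of $\Int_N$, where $N=K^{O(1)}|A|$ is prime. Plünnecke--Ruzsa gives $|kA-\ell A|\le K^{k+\ell}|A|$, which keeps $B$ of density $\delta\ge K^{-O(1)}$ in $\Int_N$. Third, I find structure via Fourier analysis. Bogolyubov's lemma states that $2B-2B$ contains the Bohr set $\mathrm{Bohr}(S,1/4)$, where $S$ is the large spectrum $\set{\xi:|\hat{1}_B(\xi)|\ge\delta^{3/2}N}$. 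Parseval bounds $|S|\le\delta^{-2}$. Chang's lemma strengthens this: $S$ lies in the span of at most $O(\delta^{-1}\log(1/\delta))=K^{O(1)}$ frequencies, so the Bohr set has rank $d=K^{O(1)}$. By Minkowski's second theorem, or a geometry-of-numbers pigeonhole argument, a rank-$d$ Bohr set in $\Int_N$ contains a proper GAP $P$ of rank $d$ and size at least $(c/d)^dN\ge\exp(-K^{O(1)})|A|$. Pulling $P$ back through the Freiman isomorphism gives a proper GAP $P\subseteq 2A-2A$ in $\Real$ of the same rank and size.

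Finally, covering. Ruzsa's covering lemma, combined with Plünnecke, gives $A\subseteq X+P-P$ with $|X|\le|A+P|/|P|\le K^{O(1)}|A|/|P|\le\exp(K^{O(1)})$. Each element of $X$ is a rank-$1$ GAP of length $2$, namely $\set{0,x}$. Summing these rank-$1$ GAPs with $P-P$, which is itself a GAP of rank $d$ and size $2^d|P|$ up to a translate, gives a GAP containing $A$. Its rank is $d+|X|$ and its size is at most $2^{|X|}2^d|P|$. This bound is too weak, because $|X|$ can be $\exp(K^{O(1)})$. The fix is Chang's iterated covering. One covers greedily in stages, at each step adding a new element only if it enlarges the progression by at least a factor of $2$. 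This bounds the number of genuinely new generators by $O(\log|X|)+K^{O(1)}=K^{O(1)}$, giving rank $K^{O(1)}$ and size $|A|\exp(K^{O(1)})$.

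I expect the main obstacle to be the rank bound. Naive Bogolyubov gives rank $\delta^{-2}$, which is still $K^{O(1)}$ and acceptable. The harder point is keeping the final rank polynomial in $K$ rather than exponential, which is exactly where Chang's refinement of the covering step is needed. Since the paper cites this theorem from Ruzsa and Chang, I would present it as a black box with this sketch for orientation.
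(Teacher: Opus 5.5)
Your plan matches the paper, which gives no proof of this theorem. It quotes the result from Ruzsa and Chang. Its only remark on the real case is your first step: pass to $\langle A\rangle\cong\Int^r$ and use a Freiman-isomorphic integer model (Tao--Vu, Lemma~5.25). Citing it as a black box is therefore fine.

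Your orientation sketch does, however, misdescribe the step you single out as the crux. \emph{Why single-element doubling fails.} You cannot count additions of single elements $x_j\in A$ that each double the current progression, using the tools available at that stage. After $t$ such additions the set lies in $4A-4A+\{0,x_1\}+\dots+\{0,x_t\}$. Pl\"unnecke bounds its size only by roughly $K^{t+O(1)}|A|$. So the lower bound $2^t|P|$ gives no contradiction once $K\ge 2$. The true polynomial-in-$t$ growth of $|tA|$ is essentially Freiman's theorem itself, so appealing to it would be circular.

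\emph{What Chang's covering argument does instead.} It works in batches:
\begin{enumerate}
\item Start with the current set $B$ equal to $P$. Pick $S\subseteq A$ of about $2K^3$ elements whose translates $s+B$ are pairwise disjoint.
\item Then $|S+B|=|S|\,|B|$, while $|A+S+B|\le|2A+B|\le K^3|A+B|$ by Pl\"unnecke and the sum form of Ruzsa's triangle inequality. Hence $|A+B|/|B|$ at least halves; replace $B$ by $S+B$ and repeat.
\item This ratio is always at least $1$. So after at most $\log_2(|A+P|/|P|)=K^{O(1)}$ stages, a maximal family of disjoint translates has size less than $2K^3$.
\item At that point Ruzsa covering gives $A\subseteq X+B-B$ with $|X|<2K^3$, where $B-B=(P-P)+\sum_i(S_i-S_i)$.
\item Hence $A$ lies in the GAP generated by the generators of $P-P$, the elements of $X$, and the elements of all $S_i$ (with coefficients in $\{-1,0,1\}$). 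It has rank $d+K^{O(1)}$ and size $\exp(K^{O(1)})|A|$.
\end{enumerate}
The number of new generators is $O(K^3\log(|A+P|/|P|))$, a product rather than your $O(\log|X|)+K^{O(1)}$, though still $K^{O(1)}$.

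A smaller point concerns your claim that it ``suffices to prove the theorem for $A\subseteq\Int$''. This is not literally true. A GAP containing $\pi(A)$ in $\Int$ does not pull back along $\pi$, which is injective only on $8A-8A$. What pulls back is the proper progression inside $2A-2A$, via the induced Freiman $2$-isomorphism; the covering is then done in $\Real$. Your third paragraph in fact does exactly this.
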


Freiman's theorem immediately implies that any set with doubling $|A + A| \leq K|A|$ admits a subset sum cover of rank $K^{O(1)} \log |A|$. Indeed, \cref{thm:freiman} embeds $A$ into a GAP of rank~\makebox{$d = K^{O(1)}$}. Then replace each item~$\alpha_i$ by $\alpha_i, 2\alpha_i, \dots, 2^{\ceil{\log N_i}} \alpha_i$, so that each multiple $x_i \cdot \alpha_i$ for $0 \leq x_i \leq N_i$ can be now be expressed as a subset sum. Consequently, the desired subset sum cover exists.

But again, this argument is purely existential---we still require an efficient algorithm to construct such a cover. While algorithmic versions of Freiman's theorem have been studied for integer sets~\cite{RandolphW24}, extending them to real inputs introduces a substantial obstacle. A key step in proofs over the reals (and more generally torsion-free groups) is a \emph{modeling lemma}, which replaces $A \subseteq \Real$ by an integer set $A' \subseteq \Int$ with equivalent additive structure (a so-called Freiman-isomorphic set); see~\cite[Lemma~5.25]{TaoV06}. This lemma admits a simple non-constructive proof via the structure theorem for finitely-generated torsion-free groups~\cite[Corollary~3.6]{TaoV06}. But here we encounter a circularity: an algorithmic version of the modeling lemma essentially requires the Frank--Tardos theorem. But invoking Frank--Tardos is precisely what we sought to avoid, putting us back to square one.

We follow a different approach. Instead, we devise a new, completely different, and much simpler proof of a weaker version of Freiman's theorem:

\begin{theorem}[Poor Man's Freiman's Theorem] \label{thm:poor-man-freiman-exact}
Let $A \subseteq \Real$ be a set with doubling $|A + A| \leq K |A|$. Then $A$ has a subset sum cover of rank $(K \log |A|)^{O(1)}$.
\end{theorem}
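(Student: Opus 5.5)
We prove the statement with a greedy, iterative construction that never passes to an integer model. Fix an element $a_0 \in A$ and translate so that we only need to cover the difference set $A - a_0$ (an additional cover element $\pm a_0$ handles the translation; negative elements are handled by allowing both signs, which at most doubles the rank). The idea is to build the cover $\alpha$ incrementally. We maintain a set $S$ of reals such that $\Sigma(S) \coloneqq \set{\sum_{s \in T} s : T \subseteq S}$ covers a growing subset $A' \subseteq A$. At each step we add a few new elements so that a constant fraction of the still uncovered elements becomes covered. If every round covers a $\Omega(1/K^{O(1)})$ fraction of the remainder, then $O(K^{O(1)} \log |A|)$ rounds suffice. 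Hence the total rank is $(K \log|A|)^{O(1)}$, provided each round adds only $(K\log|A|)^{O(1)}$ elements.

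The key structural input is Plünnecke--Ruzsa. From $|A+A| \le K|A|$ we get $|\ell A - m A| \le K^{\ell+m}|A|$. In particular, the iterated sumsets of $A$ grow at most geometrically. We use this growth bound in a pigeonhole/doubling-dimension argument. Consider the sets $\ell A - \ell A$ for $\ell = 1, 2, 4, \dots, 2^t$. If every $a \in A$ were "independent" of the covered part, the sets $\Sigma(S) + (A - A)$ would have to grow by a large factor with each new element. Concretely, we show the following dichotomy for a candidate new element $x = a - a'$ with $a, a' \in A$. Either $x$ lies in $\Sigma$ of small multiples of the already chosen elements, or adding $x$ doubles the size of the set $P_S \coloneqq \set{\sum_{s \in S} c_s s : |c_s| \le L} \cap (\ell A - \ell A)$. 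Since $P_S \subseteq \ell A - \ell A$ has size at most $K^{2\ell}|A|$ and starts with size at least $1$, the number of "doubling" additions is at most $O(\ell \log K + \log|A|)$. Choosing $\ell = O(\log |A|)$ and $L$ a power of two, every element of $A - a_0$ eventually becomes a bounded-coefficient combination $\sum c_s s$ with $|c_s| \le L$. We then convert it into a subset sum by adding the dyadic multiples $2^j s$ for $j \le \log L$, exactly as in the derivation of a cover from Theorem~\ref{thm:freiman}. This multiplies the rank by $O(\log L)$.

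The main obstacle is the dichotomy step: proving that a new element either is already representable with small coefficients, or genuinely doubles $P_S$ while staying inside $\ell A - \ell A$. The danger is that coefficients blow up when we compose representations, since sums of representable elements need larger $L$ and larger $\ell$. We would try to control this by choosing the parameters at geometric scales and keeping coefficients bounded by $L = |A|^{O(1)}$, so that $\log L = O(\log|A|)$. This is essentially a "bounded-coefficient rank" argument in the torsion-free group generated by $A$. If the direct doubling argument is too lossy, we would fall back on the following averaging step. By pigeonhole, a random $a \in A$ has $\Omega(|A|/K)$ representations as differences in $A - A$. This allows covering a $1/K^{O(1)}$ fraction of the remaining elements per round, which gives the claimed $(K \log|A|)^{O(1)}$ bound after $O(K^{O(1)}\log|A|)$ rounds.
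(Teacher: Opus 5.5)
There is a genuine gap. It sits exactly at the step you flag as the main obstacle, and the dichotomy is false as stated. Suppose $x$ is not a combination of $S$ with coefficients at most $2L$. Then $P_S$ and $P_S + x$ are indeed disjoint, but $P_S + x$ is only guaranteed to lie in $(\ell+1)A - (\ell+1)A$. The part guaranteed to stay inside $\ell A - \ell A$ is $\bigl(B_L(S) \cap ((\ell-1)A - (\ell-1)A)\bigr) + x$, where $B_L(S) = \{\sum_s c_s s : |c_s| \le L\}$, and this can be a negligible fraction of $P_S$.

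Your conclusion cannot hold in general. Take $A = \{0, \theta_1, \dots, \theta_d\}$ with $\theta_1, \dots, \theta_d$ linearly independent over $\Rat$. Then $|A+A| = (d+1)(d+2)/2$, i.e.\ $K = (d+2)/2$. Yet every subset sum cover has rank at least $d$, because its entries must span all $\theta_i$ over $\Rat$. Your count of $O(\ell \log K + \log|A|)$ additions, with $\ell = O(\log |A|)$ and a factor $O(\log L)$, would give rank $O(\log^3 d)$. Concretely, for $S = \{\theta_1, \dots, \theta_k\}$ with $k \gg \ell$, adding $\theta_{k+1}$ enlarges $P_S$ only by a factor $1 + O(\ell/k)$. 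So the rank must depend polynomially on $K$.

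Neither geometric scales $\ell = 2^j$ nor a larger $L$ repair this. The Pl\"unnecke bound $K^{2\ell}|A|$ is exponential in $\ell$, so the sizes at scales $2^j$ and $2^{j+1}$ may differ by a factor exponential in $2^j$. The fallback does not close the gap either. What is true is that a random difference $a - a'$ has $\Omega(|A|/K)$ representations on average, since the additive energy is at least $|A|^3/K$. But adding one popular difference $x$ covers nothing by itself; it only links pairs $(b, b+x)$. Moreover, iterating on the uncovered remainder $A'$ loses the hypothesis, because $A'$ only has doubling $K|A|/|A'|$.

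The paper supplies the missing mechanism in three steps. First, the energy bound and the popular/unpopular decomposition give a set $P \subseteq A$ with $|P| = \Omega(|A|/\sqrt{K})$ in which every element satisfies $r_{P+P-P}(a) \ge |A|^2/4K$. Second, a Bor\r{u}vka-type argument shows that $O(K \log^2|A|)$ random differences connect the difference graph on $P$ into $O(K)$ components, which yields a cover of $P$. Third, Ruzsa covering, $A \subseteq T + P - P$ with $|T| = O(K^{3/2})$, extends the cover to all of $A$.

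Alternatively, your own idea can be rescued by comparing \emph{consecutive} scales. Write $B(S) = \{\sum_s c_s s : c_s \in \{-1,0,1\}\}$ and $P^{(\ell)}_S = B(S) \cap (\ell A - \ell A)$, with $P^{(0)}_S = \{0\}$. For $x \in (A-A) \setminus (B(S) - B(S))$, the sets $P^{(\ell)}_S$ and $P^{(\ell-1)}_S + x$ are disjoint subsets of $P^{(\ell)}_{S \cup \{x\}}$, so $|P^{(\ell)}_{S\cup\{x\}}| \ge |P^{(\ell)}_S| + |P^{(\ell-1)}_S|$ for every $\ell \ge 1$. Since $\prod_{\ell=1}^{m} |P^{(\ell)}_S|/|P^{(\ell-1)}_S| \le K^{2m}|A|$, some ratio is at most $2K^2$ when $m = \lceil \log_2 |A| \rceil$. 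Hence the potential $\sum_{\ell \le m} \log |P^{(\ell)}_S|$, which never exceeds $O(\log^2|A| \log(2K))$, grows by $\Omega(1/K^2)$ with every addition. This yields $A - a_0 \subseteq B(S) - B(S)$ with $|S| = O(K^2 \log(2K) \log^2 |A|)$, hence a subset sum cover of rank $O(|S|)$. Your proposal as written contains neither argument.
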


Really the only difference between \cref{thm:freiman,thm:poor-man-freiman-exact} is that in \cref{thm:poor-man-freiman-exact} the rank is allowed to depend polynomially on $\log |A|$ (we omitted the size bound, but it could be guaranteed analogously). For us this dependence on $\log |A|$ is completely acceptable, as we typically think of $K = O(|A|^{0.1})$, say. In contrast, typical additive combinatorics results try to avoid any overhead in terms of $|A|$ (thinking of $K$ as a doubling \emph{constant}). Our more coarse-grained perspective turns out to be a significant advantage here.

In a nutshell, our proof of \cref{thm:poor-man-freiman-exact} is along the following lines. Any small-doubling set~$A$ must have many popular differences, i.e., numbers $x$ that can be expressed in many ways as $a_1 - a_2 = x$ for $a_1, a_2 \in A$. Sample one such difference $x$ uniformly at random, and place $-x$ in the subset sum cover $\alpha$. For each pair $a_1, a_2 \in A$ that can be expressed as $a_1 - a_2 = x$, we can now \emph{discard} $a_2$; if we eventually arrive at a subset sum cover for $a_1$, then we have also covered $a_2$ via $a_2 = a_1 - x$. To cover $a_1$ we simply keep inserting popular differences into $\alpha$.

To analyze this process we think of an initial empty graph over the vertices $A$. With each sampled popular difference we add some random edges $(a_1, a_2)$ to the graph, symbolizing that once we cover one of the two endpoints, we automatically also cover the other endpoint. By an analysis akin to Borůvka's algorithm for Minimum Spanning Tree~\cite{Boruvka28}, we argue that after few samples, namely $(K \log |A|)^{O(1)}$, the graph becomes (almost) fully connected. Once that happens we can take an arbitrary root vertex $a_0 \in A$, add $a_0$ to the subset sum cover, and thereby also cover all other vertices in the graph.

\subsubsection*{Idea 4: Forcing Additive Structure by Rounding}
So far we have described how to handle the case that the given Real 3SUM instance $A$ is additively structured. We now turn to subgoal~\ref{itm:goal-force-struct} and show that we can always transform $A$ into an equivalent, but additively structured, set. At first glance, this may seem paradoxical. Indeed, several prior works show that 3SUM on additively structured instances can be solved in truly subquadratic time~\cite{ChanL15,AbboudBF23,JinX23,Fischer24,FischerJX25}. Thus, if we could simply enforce additive structure without further cost, we would refute the 3SUM Hypothesis outright.

The crux is that the transformation necessarily introduces many \emph{false positives}, that is, triples $a,b,c \in A$ that are not solutions in the original instance but become solutions after the transformation. Consequently, it is not enough to solve the transformed instance---we must also efficiently enumerate and filter a large number of candidate solutions. This phenomenon is standard in 3SUM-based reductions. For example, a simple \emph{universe reduction} maps any integer 3SUM instance to one over a bounded universe $[u]$, but incurs $O(n^3/u)$ false positives. Such reductions typically rely on integer-specific techniques, such as reducing all numbers modulo a random prime in $[u,2u]$.

The natural approach in the real setting is to \emph{round}. To illustrate this idea, consider the following idealized scenario. Suppose there exists some $\Delta \in \Realnneg$ satisfying two properties:
\begin{enumerate}[label=(\roman*)]
    \item the set $\hat A$, obtained from $A$ by rounding each element to its nearest integer multiple of $\Delta$, is additively structured (e.g., $|\hat A + \hat A| \leq K |\hat A|$ for $K = n^{0.1}$), and \label{itm:round-struct}
    \item there are few false positives, $\abs{\set{(a, b, c) \in A^3 : a + b + c \neq 0, a + b + c = 0 \pm \Delta}} \leq n^{1.9}$. \label{itm:round-false-positives}
\end{enumerate}
This situation arises, for example, when $A$ consists of $n$ random reals in $[-1,1]$. Indeed, for $\Delta \approx n^{-1.1}$, the number of false positives is $O(n^{1.9})$. At the same time, $\hat A + \hat A$ has size at most $O(n^{1.1})$ (since there are only $O(n^{1.1})$ multiples of $\Delta$ in $[0,1]$), and hence $\hat A$ has doubling $O(n^{0.1})$.

In this idealized setting, we can replace $A$ with $\hat A$. All true solutions are preserved (up to an additive offset of $\Delta$ or $2\Delta$). By~\ref{itm:round-struct}, the set $\hat A$ is additively structured, and we can therefore compute a subset sum cover using our ``Poor Man’s'' Freiman's theorem. Applying the Frank--Tardos theorem to this cover yields the desired mapping $h : \hat A \to \Int$. We thus obtain an integer instance $\set{h(a) : a \in \hat A}$ that preserves all solutions, at the cost of introducing at most $n^{1.9}$ false positives by~\ref{itm:round-false-positives}. Finally, these can be handled using the standard reduction from listing to decision: if Integer 3SUM can be solved in time $O(n^{2-\epsilon})$, then $n^{1.9}$ solutions can be listed in time $O(n^{2-\epsilon'})$~\cite{VassilevskaW18} (see \cref{lem:3sum-listing-to-decision}). In this idealized regime, the real-to-integer reduction is complete. 

Unfortunately, the idealized setting overlooks two big problems.

\paragraph{Problem 1: Rounding}
The first obstacle is that we cannot explicitly construct $\hat A$ in the comparison-addition model---or, in fact, in any ``reasonable'' model of real computation. Indeed, access to such a rounding primitive is known to enable polynomial-time algorithms even for PSPACE-complete problems~\cite{HartmanisS74}. To circumvent this, we avoid constructing $\hat A$ altogether and instead work directly with $A$ and \emph{approximate} solutions of the form $a + b + c = 0 \pm \Delta$. As a consequence, we need approximate versions of all 
necessary additive combinatorics statements (see \cref{sec:toolkit:sec:add-comb}), and an approximate version of subset sum covers and all related lemmas (see \cref{sec:toolkit:sec:subset-sum-cover}). This is mostly tedious, and we omit further details here.

\paragraph{Problem 2: Different Scales}
The more serious obstacle is that there may be no approximation threshold $\Delta$ that simultaneously satisfies Properties~\ref{itm:round-struct} and~\ref{itm:round-false-positives}. For example, consider a set $A = \set{a_1, \dots, a_n}$ whose elements lie at very different scales, say $|a_i| \in [2^i, 2^{i+1}]$. For any choice of $\Delta$, some elements (those with $2^{i+1} \leq \Delta$) will participate in too many false positives, violating~\ref{itm:round-false-positives}, while for others (those with $2^i \geq \Delta$) we have not introduced enough additive structure, violating~\ref{itm:round-struct}.

Our solution is to \emph{manually} separate these scales, i.e., to partition $A$ in a way so that each part contains only elements from roughly the same ``scale''. Within each part we apply the previous machinery to perform the real-to-integer reduction. But we also need to detect solutions across different parts. Here we proceed differently, exploiting that the search space of solutions across different parts is highly restricted. In the example above, every 3SUM solution must be of the form $a_i + a_j + a_{j+1} = 0$, and all such solutions can be found in $O(n)$ time by enumerating~$j$.

Turning this intuition into a formal argument is highly nontrivial and requires substantial effort. In particular, our approach does not yield a direct reduction from Real 3SUM to Integer 3SUM, but instead proceeds via an intermediate problem, namely Real Convolution 3SUM:
\begin{itemize}
\item First, we reduce Real 3SUM to Real Convolution 3SUM using a coarse-grained rounding step. At this stage, the goal is not to tightly control false positives, but rather to identify convolution structure. Crucially, we thereby ensure that the total number of approximate solutions is at most $n^2$ (before, there could have been up to $n^3$ approximate solutions). See \cref{sec:real-to-real-conv} for details, and in particular \cref{lem:reduction-to-clustered}, which is the key lemma underlying scale separation.
\item Second, we reduce Real Convolution 3SUM to Integer 3SUM via a more refined rounding step. Here we really need to keep the number of false positives small: smaller than the trivial upper bound of $n^2$ (in contrast, a direct approach from 3SUM could incur up to $n^3$ false positives now). See \cref{sec:real-conv-to-int} for the details on how to separate scales. Another interesting trick we build on here is an analogue of ``Fredman's trick''~\cite{Fredman76,ChanVX23} for convolutions; see the discussion in \Cref{sec:type3}.
\end{itemize}

%% file: sections/prelims.tex
\section{Preliminaries} \label{sec:prelims}

\subsection{Notation}

Throughout the paper, we use $[n] \coloneqq \{1,2,\ldots,n\}$ and $[\pm n] \coloneqq \set{-n, ..., n}$. For $\Delta \in \Realnneg$, we write $x = y \pm \Delta$ to denote $x \in [y - \Delta, y + \Delta]$. We use the $\tilde O$ notation to suppress polylogarithmic factors, i.e., $\tilde O(T) = O(T \log^{O(1)}T))$. We say that an event occurs \emph{with high probability} if its probability is at least $1 - 1/n^c$ for an arbitrarily large prespecified constant $c$.

\subsection{Problems}

In this section we recall formal definitions of the problems that we consider.

\begin{problem}[Integer 3SUM]
Given a set of $n$ integers $A \subseteq \Int $, decide if there exist three elements $a,b,c \in A$ such that $a+b+c=0$.
\end{problem}

\begin{problem}[Real 3SUM]
Given a set of $n$ real numbers $A \subseteq \Real$, decide if there exist three elements $a,b,c \in A$ such that $a+b+c=0$.
\end{problem}

For the convolution variant of the problem, we find it more convenient to depart from the standard definition, and instead define it for sequences indexed with integers between $-n$ and $n$, so that solutions satisfy a symmetric equation involving indices $i+j+k=0$. Reducing back and forth between this and the usual variant of Real Convolution 3SUM is a simple exercise; see, e.g., \cite{DudekGS20}.

\begin{problem}[Real Convolution 3SUM]
Given a sequence of $2n+1$ real numbers $a\in \Real^{[\pm n]}$, decide if there exist three indices $i,j,k \in [\pm n]$ such that $a[i] + a[j] + a[k] = 0$ and $i + j + k = 0$.
\end{problem}

%% file: sections/toolkit.tex
\section{Real-to-Integer Toolkit} \label{sec:toolkit}

In this section, we develop the tools that we will use throughout our reduction. We first restate the Frank--Tardos theorem and derive its implementation in the comparison-addition model~(see \Cref{sec:toolkit:sec:frank-tardos}). Next, we convert some definitions and lemmas from additive combinatorics into corresponding versions for approximate reals~(see \Cref{sec:toolkit:sec:add-comb}). Then, we define subset sum covers, show how to construct and use them in combination with Frank--Tardos, and prove our ``Poor Man's'' Freiman's theorem~(see \Cref{sec:toolkit:sec:subset-sum-cover}). The final necessary tool for us is linear hashing~(see \Cref{sec:toolkit:sec:hashing}).

\subsection{Frank--Tardos Theorem} \label{sec:toolkit:sec:frank-tardos}
A basic building block of our result is the following theorem due to Frank and Tardos~\cite{FrankT87}.

\thmfranktardos*

\begin{remark}\label{rem:frank-tardos}
Let us add two practical remarks on how to apply \cref{thm:frank-tardos}. First, while the theorem is only stated for vectors $x \in \set{-1, 0, 1}^d$, it can be generalized in a black-box way to $x \in \set{-C, \dots, C}^d$, for any integer constant $C$. Indeed, simply replace each $\alpha_i$ by $O(\log C)$ copies $\alpha_i, 2\alpha_i, 4\alpha_i, \dots, 2^{\ceil{\log C}} \alpha_i$; then clearly all numbers $-C \alpha_i, \dots, C \alpha_i$ can be represented as $\set{-1, 0, 1}$-weighted sums of the newly added items. The transformation only increases $d$ by a factor of $O(\log |C|) = O(1)$ and is therefore negligible.

Second, instead of testing $\alpha^T x \geq 0$ we are typically rather interested in testing $|\alpha^T x| \leq \Delta$, for some threshold $\Delta \in \Realnneg$. To deal with this condition, we apply \cref{thm:frank-tardos} on the vector $\alpha$ with $\Delta$ appended as the last coordinated (and thereby increased dimension from $d$ to $d + 1$). The theorem then yields $v$ and some integer threshold $D$ such that $|\alpha^T x| \leq \Delta$ is equivalent to $|v^T x| \leq D$.
Furthermore, if we add $C$ copies of $\Delta$ to $\alpha$, then all extra coordinates in $v$ must be the same integer $D$ that now acts as a threshold such that $|\alpha^Tx| \le C'\Delta$ if and only if $|v^Tx| \le C'D$ for any $C' \le C$. 
\end{remark}

\subsubsection{Proof}
To demonstrate that the only operations that \cref{thm:frank-tardos} performs on reals are comparisons and additions, we quickly prove how \cref{thm:frank-tardos} follows from known tools not involving reals. We follow the exposition by Eisenbrand~\cite{Eisenbrand10}. Readers familiar with the Frank--Tardos theorem are encouraged to skip this subsection.

The only necessary tool is an algorithmic version of Dirichlet's theorem; see~\cite{Eisenbrand10} for how this follows from an application of the LLL lattice reduction algorithm.

\begin{theorem}[Dirichlet's Theorem for Rationals~{\cite[Lemma 14.9]{Eisenbrand10}}] \label{thm:dirichlet-rat}
There is an algorithm that, given rationals $\alpha \in \Rat^d$ with numerators and denominators bounded by $N$ and an integer parameter $Q \geq 1$, runs in time $(d \log (N Q))^{O(1)}$ and computes integers $q \in \Int$ and $p \in \Int^d$ such that
\begin{enumerate}[label=(\roman*)]
    \item $1 \leq q \leq 2^{d^2} Q^d$, and
    \item $|q \alpha_i - p_i| < 1/Q$ for $i = 1, \dots, d$.
\end{enumerate}
\end{theorem}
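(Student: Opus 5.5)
We would prove \cref{thm:dirichlet-rat} by a direct application of the LLL lattice basis reduction algorithm~\cite{LenstraLL1982} to a suitably designed $(d+1)$-dimensional lattice. Fix a small rational parameter $\varepsilon > 0$, to be chosen below. Let $L \subseteq \Rat^{d+1}$ be the lattice generated by the columns of the $(d+1) \times (d+1)$ matrix
\begin{equation*}
    M = \begin{pmatrix} I_d & -\alpha \\ 0 & \varepsilon \end{pmatrix}.
\end{equation*}
Every lattice vector has the form $M \cdot (p, q)^T = (p_1 - q\alpha_1, \dots, p_d - q\alpha_d, q\varepsilon)$ for integers $p \in \Int^d$ and $q \in \Int$. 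Since $M$ is upper triangular, $\det L = \varepsilon$. The plan is to make every coordinate of a short lattice vector smaller than $1/Q$. Then the first $d$ coordinates give condition (ii), and the last coordinate bounds $|q|$.

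First we run LLL on the basis $M$. It returns a nonzero lattice vector $b$ with $\norm{b}_2 \leq 2^{d/4} (\det L)^{1/(d+1)} = 2^{d/4} \varepsilon^{1/(d+1)}$. This is the standard LLL guarantee for the first reduced vector relative to the determinant, in dimension $d+1$. We choose
\begin{equation*}
    \varepsilon = 2^{-d(d+1)/4} \, (2Q)^{-(d+1)},
\end{equation*}
or any rational of comparable size. Then $\norm{b}_\infty \leq \norm{b}_2 \leq 1/(2Q) < 1/Q$. Write $b = M(p, q)^T$. We get $|q\alpha_i - p_i| < 1/Q$ for all $i$, and also $|q|\varepsilon < 1/Q$. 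The latter gives $|q| < 1/(Q\varepsilon) = 2^{d(d+1)/4} 2^{d+1} Q^{d}$, which is at most $2^{d^2} Q^d$ for $d \geq 2$. For $d = 1$, a slightly more careful choice of constants, or the classical continued-fraction argument, suffices.

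Next we check that $q \neq 0$. If $q = 0$, then $b = (p, 0)$ with $p \neq 0$ an integer vector, so $\norm{b}_\infty \geq 1 \geq 1/Q$, which is a contradiction. Replacing $(p, q)$ by $(-p, -q)$ if necessary gives $q \geq 1$, establishing (i) and (ii).

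For the running time, we multiply $M$ by the common denominator of all $\alpha_i$ and of $\varepsilon$ to obtain an integral basis. This denominator has $O(d \log N + d^2 + d \log Q)$ bits. LLL runs in time polynomial in the dimension and in the bit length of the basis entries, hence in time $(d \log(NQ))^{O(1)}$; scaling back is free. The only step requiring care is the bookkeeping of constants in the choice of $\varepsilon$, so that the LLL approximation factor $2^{d/4}$ fits within the claimed bound $2^{d^2}$. There is ample slack, so we do not expect any real obstacle. The whole argument is the one in~\cite[Lemma 14.9]{Eisenbrand10}.
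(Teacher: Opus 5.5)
Your proposal is correct and is exactly the standard LLL simultaneous-approximation argument (the lattice with basis $\begin{pmatrix} I_d & -\alpha \\ 0 & \varepsilon\end{pmatrix}$) that the paper defers to; the paper gives no proof of its own and simply cites Eisenbrand's Lemma~14.9, noting that it follows from LLL. One small bookkeeping fix is needed: at $d=2$ your loose bound $|q| < 1/(Q\varepsilon) = 2^{d(d+1)/4+d+1}Q^d$ equals $2^{4.5}Q^2$, which exceeds the required $2^{d^2}Q^d = 2^4Q^2$, so instead use the bound $|q| \le 1/(2Q\varepsilon)$ that already follows from your $\norm{b}_2 \le 1/(2Q)$, together with the rational choice $\varepsilon = 2^{-\lceil d(d+1)/4\rceil}(2Q)^{-(d+1)}$, which gives $q \le 2^{d^2}Q^d$ for all $d \ge 2$ (with $d=1$ handled separately, as you say).
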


From \cref{thm:dirichlet-rat} it is easy to derive the following analogous version for real numbers in the comparison-addition model.

\begin{corollary}[Dirichlet's Theorem for Reals~{\cite[Theorem 14.30]{Eisenbrand10}}] \label{cor:dirichlet-real}
There is an algorithm that, given reals $\alpha \in \Real^d$ and an integer parameter $Q \geq 1$, runs in time $(d \log Q)^{O(1)}$ and computes integers $q \in \Int$ and $p \in \Int^d$ such that
\begin{enumerate}[label=(\roman*)]
    \item $1 \leq q \leq 2^{d^2+d} Q^d$, and
    \item $|q \alpha_i - p_i \beta| < \beta/Q$ for $i = 1, \dots, d$, where we write $\beta = \norm{\alpha}_\infty$.
\end{enumerate}
\end{corollary}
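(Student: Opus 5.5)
We derive \cref{cor:dirichlet-real} from \cref{thm:dirichlet-rat} by replacing the reals $\alpha_i/\beta$ with rationals that are coarse enough to compute using only comparisons and additions. Without loss of generality assume $\alpha \neq 0$. First, we find an index $i^*$ with $|\alpha_{i^*}| = \beta = \norm{\alpha}_\infty$ using $O(d)$ comparisons, and flip signs so that $\beta > 0$. Each ratio $\alpha_i/\beta$ then lies in $[-1,1]$.

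Set $M \coloneqq 2Q$ and round each ratio to a multiple of $1/M$. Concretely, for every $i$ we find the integer $r_i \in [\pm M]$ with
\begin{equation*}
    r_i \beta \leq M \alpha_i < (r_i + 1)\beta .
\end{equation*}
Each test ``$M\alpha_i - r\beta \geq 0$'' is a single comparison with integer coefficients of bit-length $O(\log Q)$. Binary search over $r \in [\pm M]$ therefore finds $r_i$ in $O(\log Q)$ such tests, and all of this is allowed in the (realistic) comparison-addition model. Define the rationals $\tilde\alpha_i \coloneqq r_i/M$, so that $|\alpha_i/\beta - \tilde\alpha_i| < 1/M$, with numerators and denominators bounded by $N = 2Q$.

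Next, we apply \cref{thm:dirichlet-rat} to $\tilde\alpha$ with parameter $Q' \coloneqq 2Q$. This takes time $(d\log Q)^{O(1)}$ and returns $q, p$ with $1 \leq q \leq 2^{d^2}(2Q)^d = 2^{d^2+d}Q^d$, which is exactly bound (i), and with $|q\tilde\alpha_i - p_i| < 1/(2Q)$. Multiplying by $\beta$ and using the triangle inequality gives
\begin{equation*}
    |q\alpha_i - p_i\beta| \leq q\beta\,|\alpha_i/\beta - \tilde\alpha_i| + \beta\,|q\tilde\alpha_i - p_i| < \frac{q\beta}{M} + \frac{\beta}{2Q}.
\end{equation*}

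The main obstacle is the first term $q\beta/M$. Since $q$ can be as large as $2^{d^2+d}Q^d$, the rounding error gets amplified by $q$, so $M = 2Q$ is far too coarse. The fix is to round more finely: choose $M \coloneqq 2Q \cdot 2^{d^2+d}Q^d$, an integer of bit-length $O(d^2 + d\log Q)$. The binary search for each $r_i$ then takes $O(d^2 + d\log Q)$ comparisons, each with integer coefficients of polynomial bit-length, so all costs stay $(d\log Q)^{O(1)}$. The input to \cref{thm:dirichlet-rat} now has $N = M$, and its running time $(d\log(MQ))^{O(1)}$ is still $(d\log Q)^{O(1)}$. Crucially, the bound on $q$ depends only on $Q'$ and not on $N$, so bound (i) is unchanged. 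With this choice, $q/M \leq 1/(2Q)$, and the displayed estimate becomes $|q\alpha_i - p_i\beta| < \beta/(2Q) + \beta/(2Q) = \beta/Q$, which is (ii). The only real-valued operations used are the comparisons in the binary searches, which confirms that the algorithm works in the comparison-addition model.
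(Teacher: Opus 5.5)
Your proof is correct and, once you replace the initial choice $M = 2Q$, it is the same as the paper's: the paper also rounds each $\alpha_i/\beta$ down to a multiple of $1/N$ with $N = 2^{d^2+d+1}Q^{d+1}$ (your final $M$) by binary search using comparisons $N\alpha_i \le j\beta$. It then applies \cref{thm:dirichlet-rat} with $Q' = 2Q$ and splits the error into $q\beta/N \le \beta/(2Q)$ plus $\beta/(2Q)$, exactly as you do (and, like you, it implicitly needs $\alpha \neq 0$).
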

\begin{proof}
The idea is to first round the reals to rationals with sufficiently high precision, and to call \cref{thm:dirichlet-rat}. Specifically, let $N = 2^{d^2+d+1} Q^{d+1}$ and round the given reals to rationals $\alpha_1', \dots, \alpha_d'$ by:
\begin{equation*}
    \alpha_i' = \frac{\floor{\alpha_i / \beta \cdot N}}{N} \in [-1, 1].
\end{equation*}
Even though this definition appears to involve flooring, these rationals can actually each be computed by only $(\log N)^{O(1)}$ comparisons and additions. Specifically, observe that all $\alpha_1', \dots, \alpha_d'$ are rationals with the common denominator $N$ and with numerators in the range $\set{-N, \dots, N}$. Therefore, to determine $\alpha_i'$ we binary-search for its numerator. In each of the $O(\log N)$ search steps, we query whether $\alpha_i' \geq \frac{j}{N}$ (for some $j \in \set{-N, \dots, N}$), which can be equivalently phrased as
\begin{equation*}
    N \alpha_i \leq j \beta.
\end{equation*}
Each such query can be implemented in $O(\log N)$ additions to compute the reals $N \alpha_i$ and $j \beta$ via repeated doubling, followed by one comparison.

After the rounding, we apply \cref{thm:dirichlet-rat} for the rationals $\alpha_1', \dots, \alpha_d'$ with parameter $Q' = 2Q$. Let $q, p_1, \dots, p_d$ denote the resulting integers. On the one hand, the bound on $q$ is immediate. On the other hand,
\begin{equation*}
    |q \alpha_i - p_i \beta| \leq |q (\alpha_i - \alpha_i' \beta)| + |q \alpha_i' \beta - p_i \beta|,
\end{equation*}
where the first term can be bounded by
\begin{equation*}
    |q (\alpha_i - \alpha_i' \beta)| = q \cdot |\alpha_i - \alpha_i' \beta| \leq q \cdot \frac{\beta}{N} \leq \frac{2^{d^2+d} Q^d \cdot \beta}{2^{d^2+d+1} Q^{d+1}} \leq \frac{\beta}{2Q},
\end{equation*}
and the second term can be bounded by $1 / 2Q$ by the guarantee of \cref{thm:dirichlet-rat}.

Finally, focus on the running time. We perform $d (\log N)^{O(1)} = (d \log Q)^{O(1)}$ additions and comparisons in the rounding step, and \cref{thm:dirichlet-rat} takes time $(d \log (N Q))^{O(1)} = (d \log Q)^{O(1)}$.
\end{proof}

\begin{proof}[Proof of \cref{thm:frank-tardos}]
The proof is by repeated applications of Dirichlet's theorem. Specifically, consider the following recursive algorithm. If $\alpha = 0$, then we can trivially return $v = 0$. Otherwise we run the following steps. Write $\beta \coloneqq \norm{\alpha}_\infty$.
\begin{itemize}
    \item Apply \cref{cor:dirichlet-real} on $\alpha$ with parameter $Q = d$ to obtain integers $q \in \Int$ and $p \in \Int^d$.
    \item Recurse on $\hat\alpha \coloneqq q \alpha - \beta p \in \Real^d$ to obtain integers $\hat v \in \Int^d$.
    \item Let $M = \norm{\hat v}_1 + 1$ and return $v = M p + \hat v$.
\end{itemize}
With each recursive call the algorithm indeed makes progress. By definition there is some entry~$i$ with $|\alpha_i| = \beta > 0$. For that entry Dirichlet's theorem is forced to pick $|p_i| = q$ (as otherwise the condition $|q \alpha_i - p_i \beta| < \beta / d$ would be violated). But then by construction $\hat\alpha_i = 0$, i.e., the number of zero coordinates has increased going from $\alpha$ to $\hat\alpha$. In particular, after at most $d$ recursive calls the algorithm stops.

\paragraph{Correctness of Property~(ii)}
Let $x \in \set{-1, 0, 1}^d$ be arbitrary; our goal is to show that $\alpha^T x \geq 0$ if and only if $v^T x \geq 0$. The proof involves two cases:
\begin{enumerate}
    \item \emph{If $p^T x \neq 0$:} First assume $p^T x > 0$. Since $p$ and $x$ are integer vectors, we have $p^T x \geq 1$ and thus $\beta p^T x \geq \beta$. Moreover, Dirichlet's theorem implies that $|(q \alpha - \beta p)^T x| < d \cdot \beta / d = \beta$. Combining both it follows that $q \alpha^T x > 0$ and therefore $\alpha^T x > 0$. By construction we also have
    \begin{equation*}
        v^T x = (M p + \hat v)^T x \geq M - \norm{\hat v}_1 > 0,
    \end{equation*}
    and so the claim holds. In the alternative case that $p^T < 0$, the symmetric proof shows that simultaneously $\alpha^T x < 0$ and $v^T x < 0$.
    \item\emph{If $p^T x = 0$:} In this event, $\alpha^T x \geq 0$ is equivalent to $\hat\alpha^T x \geq 0$. By induction, this is equivalent to $\hat v^T x \geq 0$. By construction, this is further equivalent to $v^T x \geq 0$ (using again that $p^T x = 0$).
\end{enumerate}

\paragraph{Correctness of Property~(i)}
It remains to bound $\norm{v}_\infty$. The entries in $v$ are dominated by the term $M p$. Recall that by Dirichlet's theorem we have $|p| \leq 2^{O(d^2)}$. Therefore,
\begin{equation*}
    \norm{v}_\infty \leq 2^{O(d^2)} \cdot (\norm{\hat v}_1 + 1) \leq 2^{O(d^2)} \cdot (\norm{\hat v}_\infty + 1).
\end{equation*}
As the algorithm reaches recursion depth at most $d$, the final output satisfies $\norm{v}_\infty \leq 2^{O(d^3)}$ as claimed.

\paragraph{Running Time}
In each execution of the algorithm we first apply Dirichlet's theorem in time $(d \log Q)^{O(1)} = d^{O(1)}$, recurse, and then construct $v$ in time $(d \log \norm{v}_\infty)^{O(1)} = d^{O(1)}$. Over the at most $d$ recursive calls, the total time is $d^{O(1)}$.
\end{proof}

\subsubsection{Implementation in the Realistic Comparison-Addition Model} \label{sec:toolkit:sec:frank-tardos:sec:realistic}
Before we have only made sure that the Frank--Tardos theorem can be implemented in the standard comparison-addition model. But, in fact, it even works for the ``realistic'' comparison-addition model introduced before. Recall that in this model the \emph{only} operation that we are allowed to perform on reals are comparisons of the form
\begin{equation*}
    \sum_{i=1}^m n_i \alpha_i \geq 0,
\end{equation*}
for integers $n_i$ and reals $\alpha_i$. Each such operation is assumed to take time $(\sum_{i=1}^m \log |n_i|)^{O(1)}$, which is the same as the time complexity of this comparison if the $\alpha_i$'s were integers.

First observe that the only place where we actually invoke comparisons is in the proof of Dirichlet's theorem for real numbers, \cref{cor:dirichlet-real}, where we binary-search for the numerators of the rational numbers $\alpha_i'$. Each such comparison is of the form $n_1 \alpha_1 + n_2 \alpha_2 \geq 0$, where $|n_1|, |n_2| \leq 2^{O(d^2)}$. In particular, these comparisons run in time $d^{O(1)}$ in the realistic comparison-addition model. The total running time is still polynomial.

However, in the proof of the Frank--Tardos theorem, \cref{thm:frank-tardos}, we also perform additions on reals. Since additions are not supported in the realistic comparison-addition model, instead pretend that each addition is performed \emph{symbolically}. That is, we artificially represent each real number as a list $(n_1, \dots, n_m; \alpha_1, \dots, \alpha_m)$ representing $\sum_{i=1}^m n_i \alpha_i$; here, $n_1, \dots, n_m$ are nonzero integers, and~$\alpha_1, \dots, \alpha_m$ are pointers to the real numbers in the initial instance. This representation takes space $O(\sum_i \log |n_i|)$, and allows for additions in time $(\sum_i \log |n_i|)^{O(1)}$. It can be verified that all reals in the proof of \cref{thm:frank-tardos} have representation size at most $\sum_i \log |n_i| \leq d^{O(1)}$ (the length is $m \leq d$ as there are only $d$ input reals, and each integer coefficient $n_i$ is bounded by $\log |n_i| \leq d^{O(1)}$ throughout). Therefore, the overhead due to the realistic comparison-addition model is $d^{O(1)}$, and thus the total time is still polynomial.

\subsection{Basic Additive Combinatorics for Approximate Reals} \label{sec:toolkit:sec:add-comb}

Let $A, B \subseteq \Real$ and $\Delta \in \Realnneg$. Recall that we write $x = y \pm \Delta$ to denote $x \in [y-\Delta, y+\Delta]$.
For a threshold $t \in \Intpos$, we say that $A$ is \emph{$(\Delta, t)$-spaced} if, for every $a \in A$, there are at most $t$ elements of $A$ within the distance $\Delta$ from $a$, i.e.,
\begin{equation*}
    \abs{\set{b \in A : a = b \pm \Delta}} \le t.
\end{equation*} 
Whenever we work with spaced sets, we only care about sets with a constant number of elements per width-$\Delta$ interval, so $t = O(1)$. 
We also just write \emph{$\Delta$-spaced} to refer to $(\Delta, 1)$-spaced.

We define the \emph{$\Delta$-approximate energy} of $A$ and $B$ to be 
\begin{equation*}
    \enerapx(A, B) \coloneqq \abs{\set{(a_1, a_2, b_1, b_2) \in A^2\times B^2 : a_1 + b_1 = a_2 + b_2 \apx}}.
\end{equation*}
For a single set, we write $\enerapx(A) = \enerapx(A,A)$.

Also denote the number of \emph{$\Delta$-approximate representatives} of $x \in \Real$ in $A+B$ by \begin{equation*}
     \reprapx_{A+B}(x) \coloneqq |\{(a, b) \in A \times B : a + b = x \apx\}|
\end{equation*} and similarly for $A-B$ or more than two sets.
Notice that we can relate the two definitions by \begin{equation*}
     \enerapx(A, B) = \sum_{a \in A}\sum_{b \in B} \reprapx_{A+B}(a + b) = \sum_{a_1 \in A}\sum_{a_2 \in A} \reprapx_{B-B}(a_1 - a_2).
\end{equation*}
The upcoming lemmas shows how $\Delta$-approximate energy bounds the number of 3SUM solutions up to $\pm \Delta$ for $(\Delta, O(1))$-spaced sets.

\begin{lemma}\label{lem:low-energy-few-solutions}
Let $\Delta \in \Realnneg$ and let $A, B, C \subseteq \Real$ be $(\Delta, O(1))$-spaced. Then the number of solutions $(a, b, c) \in A \times B \times C$ to $a + b + c = 0 \pm \Delta$ is at most
\begin{equation*}
    O\left(\sqrt{E^{\pm \Delta}(A,B) |C|}\right).
\end{equation*}
\end{lemma}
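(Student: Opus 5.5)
We prove the bound by writing the number of approximate solutions as a sum over $c \in C$ of approximate representation counts, and then applying Cauchy--Schwarz. Let $S$ be the number of triples $(a,b,c) \in A \times B \times C$ with $a+b+c = 0 \pm \Delta$. Then
\begin{equation*}
    S = \sum_{c \in C} \reprapx_{A+B}(-c) \leq \sqrt{|C|} \cdot \Big(\sum_{c \in C} \reprapx_{A+B}(-c)^2\Big)^{1/2},
\end{equation*}
where the inequality is Cauchy--Schwarz. It therefore suffices to show
\begin{equation*}
    \sum_{c \in C} \reprapx_{A+B}(-c)^2 = O\big(E^{\pm \Delta}(A,B)\big).
\end{equation*}

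The term $\reprapx_{A+B}(-c)^2$ counts quadruples $(a_1,b_1,a_2,b_2)$ with $a_1+b_1 = -c \pm \Delta$ and $a_2+b_2 = -c \pm \Delta$. By the triangle inequality, every such quadruple satisfies $a_1+b_1 = a_2+b_2 \pm 2\Delta$. The quadruple is counted once for each $c \in C$ with $-c$ within distance $\Delta$ of $a_1+b_1$. Since $C$ is $(\Delta,O(1))$-spaced, we claim only $O(1)$ such $c$ exist: all of them lie in an interval of width $2\Delta$, and that interval splits into two intervals of radius $\Delta/2$, so their count is bounded by applying the spacing condition at any element of each piece. Hence
\begin{equation*}
    \sum_{c} \reprapx_{A+B}(-c)^2 \leq O(1) \cdot E^{\pm 2\Delta}(A,B).
\end{equation*}

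The main obstacle is replacing $E^{\pm 2\Delta}$ by $E^{\pm\Delta}$, and this is where the spacing of $A$ and $B$ is used. Fix $a_1,b_1,a_2$ and let $y$ range over $b_2$ with $a_1+b_1-a_2-b_2 = 0 \pm 2\Delta$; these values lie in an interval of length $4\Delta$. We cover this interval by $O(1)$ intervals of radius $\Delta/2$. By spacing of $B$, each of these intervals contains $O(1)$ elements, so the number of valid $b_2$ is $O(1)$. On the other hand, $E^{\pm\Delta}(A,B)$ counts at least one $b_2$ for each $(a_1,b_1,a_2)$ whose interval contains a candidate. This kind of direct comparison is not quite enough, because the candidate $b_2$ at distance $2\Delta$ may have no partner at distance $\Delta$.

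To fix this, we count the $2\Delta$-quadruples in the other order. Each such quadruple $(a_1,b_1,a_2,b_2)$ is mapped to a $\Delta$-quadruple by moving $b_2$ to some nearby element, and we bound the fibers of this map by $O(1)$ using spacing. Where no nearby element of $B$ exists to move to, we instead keep the quadruple unmodified and bound it in a different way, via a discretization argument. We partition $\Real$ into intervals $I_k = [k\Delta/2,(k+1)\Delta/2)$. By spacing, each of $A$, $B$, and $C$ has $O(1)$ elements per interval. We write $S$ and the energies in terms of the cell counts $f_A(k)$, $f_B(k)$, $f_C(k) = O(1)$. An approximate equation at scale $2\Delta$ translates into cell indices satisfying $k_1+k_2 = k_3+k_4 + O(1)$, while $E^{\pm\Delta}$ covers the cell index relations up to a smaller offset. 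Shifting indices by $O(1)$ then changes the energy count by only a constant factor, since there are $O(1)$ offsets and each offset gives a sum bounded, via Cauchy--Schwarz on the cell counts, by the zero-offset energy of the discretized sets; this zero-offset energy is in turn at most $E^{\pm\Delta}(A,B)$. Combining this with the first two steps gives $S = O(\sqrt{E^{\pm\Delta}(A,B)\,|C|})$.
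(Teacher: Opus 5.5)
Your proof is correct, but it is organized differently from the paper's. You apply Cauchy--Schwarz directly over $c \in C$. Using only the spacing of $C$, this gives $\sum_{c\in C} r^{\pm\Delta}_{A+B}(-c)^2 \le O(1)\cdot E^{\pm 2\Delta}(A,B)$. You then prove a separate scale-comparison inequality $E^{\pm 2\Delta}(A,B) = O(E^{\pm\Delta}(A,B))$ by binning at scale $\Delta/2$. Write $g(m) = \sum_{k_1+k_2=m} f_A(k_1)f_B(k_2)$. Every $2\Delta$-quadruple has bin-sums differing by at most $5$, so $E^{\pm 2\Delta}(A,B) \le \sum_{|j|\le 5}\sum_m g(m)g(m+j) \le 11\sum_m g(m)^2 \le 11\,E^{\pm\Delta}(A,B)$. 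The last step holds because equal bin-sums force both sums into one half-open interval of width $\Delta$. For $\Delta = 0$ the comparison is trivial, since then $E^{\pm 2\Delta} = E^{\pm\Delta}$.

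The paper never passes through $E^{\pm 2\Delta}$. It bins $C$ into windows $[(z-1)\Delta,(z+1)\Delta]$ and applies Cauchy--Schwarz over the bins. It then bounds $\sum_z r^{\pm\Delta}_{A+B}(z\Delta)^2 = O(E^{\pm\Delta}(A,B))$ directly, by pigeonholing the sums in each window into one of its two width-$\Delta$ halves.

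What each buys: the paper's route is a bit shorter. Yours isolates a clean, reusable doubling inequality for approximate energy. It also avoids aligning the window around $-c$ with the grid windows around $z\Delta$; in the paper's version that alignment really needs two grid windows per $c$ and holds only up to constants.

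Two remarks on your write-up. First, the binning inequality holds for arbitrary finite $A, B$. So, contrary to what you say, the spacing of $A$ and $B$ is not needed anywhere; in both proofs only the spacing of $C$ matters. Second, for the same reason, the ``move $b_2$ to a nearby element'' map and its case split are superfluous. The binning argument alone already handles all $2\Delta$-quadruples, and that is all you need to present.
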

\begin{proof}
    First, observe that the number of solutions corresponds exactly to
    \begin{equation*}
        \sum_{c \in C} r^{\pm \Delta}_{A+B}(-c).
    \end{equation*}
    Since $C$ is $(\Delta, O(1))$-spaced, notice that every interval of width $2\Delta$ contains at most constantly numbers in $C$. Therefore, we can roughly represent the size of $C$ as 
    \begin{equation*}
         2|C| = \sum_{z \in \Int} |[(z-1)\Delta, (z+1)\Delta] \cap C| = O\parens*{\sum_{z \in \Int} |[(z-1)\Delta, (z+1)\Delta] \cap C|^2}.
    \end{equation*}
    Further notice that for $z \in \Int$, one of $[(z-1)\Delta, z\Delta]$ and $[z\Delta, (z+1)\Delta]$ contains at least $r^{\apx}_{A+B}(z\Delta) / 2$ many sums $a+b$, which all lie in an interval of $\Delta$. Hence, these sums contribute at least $r^{\apx}_{A+B}(z\Delta)^2 / 4$ to the $\Delta$-approximate energy. Summing over all $z \in \Int$, this width-$\Delta$ interval can be counted at most twice, giving
    \begin{equation*}
        \sum_{z \in \Int} r^{\apx}_{A+B}(z\Delta)^2 \le \frac{1}{8}E^{\apx}(A,B).
    \end{equation*}
    Using Cauchy--Schwarz, the number of solutions to the equation can then be bounded by 
    \begin{align*}
        \sum_{c \in C} r^{\apx}_{A+B}(-c)
        &\le \sum_{z \in \Int} |[(z-1)\Delta, (z+1)\Delta] \cap C|\cdot r^{\apx}_{A+B}(z\Delta)\\
        &\le \parens*{\sum_{z \in \Int} |[(z-1)\Delta, (z+1)\Delta] \cap C|^2}^{1/2}\parens*{\sum_{z \in \Int} r^{\apx}_{A+B}(z\Delta)^2}^{1/2}\\
        &\le O\parens*{\sqrt{|C|E^{\apx}(A,B)}}.\qedhere
    \end{align*}
\end{proof}

The next lemma gives us a tool to extract a subset that contains most of the energy quadruples of a set. Additionally, notice that this decomposition not only gives us a subset $P \subseteq A$ whose elements contribute a large amount to the total approximate energy. It also guarantees that \emph{every} element in $P$ is in many approximate energy solutions with \emph{only} other elements from $P$.

\begin{lemma}[Popular/Unpopular Decomposition] \label{lem:decomp-pop}
Let $A \subseteq \Real$ and let $K \geq 1$. There is a partition $A = P \sqcup U$ into \emph{popular} and \emph{unpopular} elements such that:
\begin{enumerate}[label=(\roman*)]
    \item For each $a \in P$ we have $r_{P+P-P}^{\pm \Delta}(a) \geq |A|^2 / 4K$, and
    \item $E^{\pm \Delta}(A, U) < |A|^3 / K$.
\end{enumerate}
The sets $P$ and $U$ can be computed in randomized time $\tilde O(|A| K^2)$ or deterministic time $\tilde O(|A|^2)$.
\end{lemma}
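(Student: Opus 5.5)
We construct the partition by iteratively peeling off unpopular elements, in the style of the standard popularity-pruning argument. Start with $P \coloneqq A$ and $U \coloneqq \emptyset$. While some $a \in P$ has $r^{\pm\Delta}_{P+P-P}(a) < |A|^2/4K$, move $a$ from $P$ to $U$. When the loop stops, every remaining $a \in P$ satisfies~(i) by construction. So all the work lies in verifying~(ii) and in implementing the procedure.

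For~(ii), we charge the energy $E^{\pm\Delta}(A,U)$ to the moment each element of $U$ was removed. A quadruple $(a_1,a_2,u_1,u_2) \in A^2 \times U^2$ with $a_1+u_1 = a_2+u_2 \pm \Delta$ is charged to whichever of $u_1,u_2$ was removed first, say $u_1$. At that moment $u_2$ still lies in $P$. The problem is that $a_1, a_2$ may not lie in $P$, since they range over all of $A$. We handle this by splitting $a_1, a_2$ according to whether they were still in $P$ at that moment.

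If both $a_1$ and $a_2$ were still in $P$, the quadruple witnesses $u_1 = a_2+u_2-a_1 \pm\Delta$ with $a_2,u_2,a_1 \in P$. The number of such triples is at most $r^{\pm\Delta}_{P+P-P}(u_1) < |A|^2/4K$, so summing over $u_1 \in U$ gives less than $|A|^3/4K$.

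If some $a_j$ had already been removed, we re-charge the quadruple to the earliest removed element among $\{a_1,a_2,u_1,u_2\}$. The key point is that the equation $a_1+u_1=a_2+u_2\pm\Delta$ lets each variable be written as a $\pm$ combination of the other three in the form $x+y-z$; for instance $a_1 = a_2+u_2-u_1 \pm\Delta$ and $u_2 = a_1+u_1-a_2\pm\Delta$. Hence the earliest-removed element $w$ always has a representation $w = p+p'-p'' \pm\Delta$ with $p,p',p''$ still in $P$ at its removal time. For a fixed $w$ and a fixed choice of which slot $w$ occupies, the number of such quadruples is below $|A|^2/4K$. There are four slots, so the total over all removed elements is less than $4\cdot |A|\cdot |A|^2/4K = |A|^3/K$. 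This single charging argument covers both cases, so we will present it directly. The bookkeeping of which elements of $A$ (not just of $U$) have been removed is the main subtlety, and the choice of $4K$ in~(i) exactly absorbs the factor four from the slots.

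For the running time, the deterministic $\tilde O(|A|^2)$ bound needs more care, because naively recomputing $r_{P+P-P}$ costs $|A|^3$. We would first sort $A$ and $A-A$ using comparisons and additions, then maintain the counts under deletions, or alternatively use a thresholded variant where only deletions trigger updates. For the randomized $\tilde O(|A|K^2)$ bound, we would estimate $r^{\pm\Delta}_{P+P-P}(a)$ by sampling $\tilde O(K)$ random pairs $(p,p')\in P^2$ and testing via binary search in sorted $P$ whether $p+p'-a$ lies within $\pm\Delta$ of some element, counting those elements. This is valid under the $(\Delta,O(1))$ spacing implicit in later uses; in general we would use counts in a sorted array. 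We would also peel in rounds with constant-factor slack in the thresholds, so that $O(\log)$ or $O(K)$ rounds suffice and the constants in (i) and (ii) are adjusted accordingly. We expect this algorithmic part, namely matching $\tilde O(|A|K^2)$ with slack thresholds while keeping (i) and (ii) exact, to be the main obstacle; the combinatorial charging itself is routine.
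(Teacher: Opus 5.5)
Your proposal is correct and follows essentially the paper's route: greedily move elements with $r^{\pm\Delta}_{P+P-P}(a)$ below a threshold from $P$ to $U$, charge $E^{\pm\Delta}(A,U)$ to removal events, and use a staged sampling variant for the randomized bound. Your charging to the earliest-removed of all four coordinates, with threshold $|A|^2/4K$, is more careful than the paper's, whose factor-$2$ count with threshold $|A|^2/2K$ tacitly assumes that the two $A$-coordinates of each newly counted quadruple are still in $P$. As you anticipated, the price is that the slack needed by the randomized version costs a constant factor in (i) or (ii), which is harmless for all later uses.
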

\begin{proof}
We start with a description of the simpler (but slower) deterministic algorithm, and then comment on the randomized algorithm later.

\paragraph{Deterministic Algorithm}
We initialize $P \coloneqq A$ and $U \coloneqq \emptyset$. While there exists some element $a \in P$ with $r_{P+P-P}^{\pm \Delta}(a) < |A|^2 / 2K$, move $a$ from $P$ to $U$. Once this process stops, we claim that $P$ and $U$ are as stated.

Indeed, the first property holds directly by construction. For the second property, imagine that we keep a counter of the additive quadruples $(a_1, a_2, a_3, a_4) \in A^4$ with $a_1 - a_2 = a_3 - a_4 \pm \Delta$, where~$a_3$ or~$a_4$ is part of $U$. Initially that counter is zero as $U$ is empty. Whenever we move some item $a$ from~$P$ to~$U$, the only new additive quadruples are those involving~$a$ and three other elements from~$P$; any other quadruple is already accounted for. However, the number of these additive quadruples is at most $2r_{P+P-P}^{\pm \Delta}(a) < |A|^2 / K$ (where the factor $2$ is due to the fact that~$a$ may appear in either $a_3$ or $a_4$). In total, over the at most $|A|$ iterations, the counter reaches less than~$|A|^3 / K$, and the bound on $E^{\pm \Delta}(A, U)$ follows.

This algorithm can be implemented in deterministic time $\tilde O(|A|^2)$ by maintaining a data structure that supports range-queries and range-updates of $r_{P-P}^{\pm \Delta}(a)$ in polylogarithmic time.

\paragraph{Randomized Algorithm} 
To achieve the close-to-linear randomized running time, we cannot afford to determine the multiplicities $r_{P+P-P}^{\pm  \Delta}(a)$ exactly. Instead, we rely on sampling to get sufficiently precise estimates. Specifically, by sampling $\Theta(K \log |A|)$ uniformly random pairs \makebox{$a_1, a_2 \in P$} and counting for how many of the sampled pairs there exists $a_3 \in P$ with $a = a_1 + a_2 - a_3 \pm \Delta$, we can interpolate an additive approximation of $r_{P+P-P}^{\pm  \Delta}(a)$ within error $|P|^2 / 16K \leq |A|^2 / 16K$, say. It succeeds with high probability by Chernoff's bound.

Equipped with this tool, the full randomized algorithm is as follows. It runs in \emph{stages}. In each stage, we enumerate all $a \in P$, and approximate $r^{\pm \Delta}_{P+P-P}(a)$ with additive error $|A|^2 / 16K$ by the sampling procedure from before. If that approximation falls below the threshold $\tau = |A|^2 / 2K - |A|^2 / 16K$, we move $a$ from $P$ to~$U$ (as in that case we have $r_{P+P-P}^{\pm \Delta}(a) < |A|^2 / 2K$, just like in the deterministic algorithm). If in this stage the algorithm moved at least $|A| / 8K$ elements to~$U$, then we proceed to the next stage and repeat. Otherwise we terminate.

We claim that once this algorithm terminates, the sets $P$ and $U$ are as claimed. On the one hand, the second property can be proved exactly as before, given that each element $a$ that moves from~$P$ to~$U$ satisfies $r_{P+P-P}^{\pm \Delta}(a) < |A|^2 / 2K$. On the other hand, to see that the first property holds, consider the final stage of the algorithm. For each remaining item $a$, at the time when the algorithm had reached~$a$ it determined that $r_{P+P-P}^{\pm \Delta}(a) \geq \tau - |A|^2 / 16K = |A|^2 / 2K - |A|^2 / 8K$. Afterwards, it moved less than $|A| / 8K$ items to $U$ (since this is the final stage), and therefore once the algorithm terminates we have \makebox{$r_{P+P-P}^{\pm \Delta}(a) \geq |A|^2 / 2K - |A|^2 / 8K - |A|^2 / 8K = |A|^2 / 4K$}.

There can be at most $8 K$ stages, and so the total running time is $O(K \cdot |A| \cdot K \log |A|) = \tilde O(|A| K^2)$ as claimed.
\end{proof}

\subsection{Subset Sum Covers} \label{sec:toolkit:sec:subset-sum-cover}

\begin{definition}[Subset Sum Cover] \label{def:subset-sum-cover}
Let $A \subseteq \Real$. A pair $(\alpha, f)$, where $\alpha \in \Real^d$ and $f : A \to \set{0, 1}^d$, is called a \emph{$\Delta$-approximate subset sum cover} of $A$ if it satisfies $a = \alpha^T f(a) \pm \Delta$ for each $a \in A$. We call $d$ the \emph{rank} of the subset sum cover.
\end{definition}

We remark that if we aim for better approximation than $\pm \Delta$, say a $(\Delta / C)$-approximate subset sum cover for a constant $C$, this follows in a black-box way. Indeed, we can simply append $\Delta, 2\Delta, \dots, 2^{\lceil \log C \rceil}\Delta$ to $\alpha$ and binary search the correct coefficient for each $a \in A$. This adds only $O(\log C)$ dimensions and $O(|A|\log C)$ to the running time.
We note that this flexibility in the choice of $\Delta$ is a key detail for the main reduction later.

\begin{observation}\label{obs:refine-cover}
    For a constant $C\in \Intpos$, we can refine a $\Delta$-approximate subset sum cover of $A$ into a $\Delta / C$-approximate subset sum cover, while only increasing the rank by a constant factor.
\end{observation}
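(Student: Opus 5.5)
Given a $\Delta$-approximate subset sum cover $(\alpha, f)$ of $A$ of rank $d$, let $m = \lceil \log_2 C \rceil$. I append the $m+1$ reals $\Delta, 2\Delta, 4\Delta, \dots, 2^m \Delta$ to $\alpha$, giving $\alpha' \in \Real^{d+m+1}$. For each $a \in A$ I keep the old coordinates $f(a)$ and extend them by a vector $y \in \set{0,1}^{m+1}$. The coordinates in $y$ encode an integer multiple $k\Delta/C$ with $0 \le k \le 2^{m+1}-1$. The point is that the new coordinates must correct the residual $a - \alpha^T f(a) \in [-\Delta, \Delta]$ up to error $\Delta/C$. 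Two adjustments are needed for this to work with subset sums.

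First, subset sums are nonnegative combinations, but the residual may be negative. I therefore shift: I also append the single real $-\Delta$ to $\alpha'$ and always set its coordinate to $1$. The residual to be corrected then becomes $r(a) = a - \alpha^T f(a) + \Delta \in [0, 2\Delta]$.

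Second, the resolution should be $\Delta/C$ rather than $\Delta$. So instead of multiples of $\Delta$, I append $\Delta/C' \cdot 2^j$ for $j = 0, \dots, m'$, where $C' = 2^{\lceil \log_2 C \rceil}$ and $m' = \lceil \log_2 (2C') \rceil$. Dividing by the power of two $C'$ is not an operation we can perform on reals, however. I sidestep this by noting that it suffices to rescale the problem symbolically: equivalently, compare $C' a$ against $C'\alpha^T f(a)$ plus $\sum_j y_j 2^j \Delta$. Alternatively, I simply interpret the statement as allowing the appended items to be $\Delta$ and its binary multiples while working with the threshold $\Delta$ scaled. In the symbolic representation of the realistic comparison-addition model, multiplication and division by integers are available, so this is fine.

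Next, for each $a$, I find the integer $k \in [0, 2C']$ with $k\Delta/C' \le r(a) < (k+1)\Delta/C'$ by binary search. This takes $O(\log C)$ comparisons of the form $C' r(a) \ge k\Delta$, each of which is a comparison of integer combinations of reals. I then set $y$ to the binary expansion of $k$. The new cover satisfies $a = \alpha'^T f'(a) \pm \Delta/C' $, and $\Delta/C' \le \Delta/C$.

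The rank becomes $d + O(\log C) = O(d)$ for constant $C$, assuming $d \ge 1$. The time is $O(|A| \log C)$ on top of computing the old residuals. The only delicate point is the sign shift and the non-divisibility of reals. Both are handled by the $-\Delta$ offset coordinate and by symbolic integer scaling, so I expect no real obstacle.
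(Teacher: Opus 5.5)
Your proposal is correct and follows the paper's argument: append binary multiples of the finer threshold to $\alpha$, binary-search each element's correction coefficient, and pay only $O(\log C)$ extra coordinates and $O(|A|\log C)$ extra time. Your always-on $-\Delta$ offset coordinate fills in a detail the paper's one-line argument leaves implicit, since the residual can be negative. So does your remark that one can equivalently refine a $C\Delta$-cover to a $\Delta$-cover (exactly how the paper applies it) to avoid dividing reals.
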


\begin{lemma}[Subset Sum Cover for Popular Elements] \label{lem:subset-sum-cover}
Let $\Delta \in \Realnneg$ and let $A \subseteq \Real$ be a $(\Delta, O(1))$-spaced set. Let $K \ge 1$ and assume that for all $a \in A$ it holds that \smash{$r_{A+A-A}^{\pm \Delta}(a) \geq |A|^2 / K$}. Then there is a $\Delta$-approximate subset sum cover of~$A$ with rank $O(K \log^2 |A|)$. It can be computed in randomized time $\tilde O(|A| K)$ or deterministic time $\tilde O(|A|^2)$.
\end{lemma}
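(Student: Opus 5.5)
We follow the Borůvka-style strategy sketched in the overview. We grow a graph $G$ on the vertex set $A$. Each edge $\{a_1,a_2\}$ carries a real label $x$ with $a_1 - a_2 = x \pm O(\Delta)$, and every label is placed into $\alpha$. Once $G$ is connected we add a root $a_0$ to $\alpha$. Then every $a$ is covered by summing the labels along its path to $a_0$ with sign $\pm 1$.

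Two issues arise with this plan. First, a subset sum cover only allows $\{0,1\}$ coefficients, so we include both $x$ and $-x$ for every label. Second, errors accumulate along paths, so we keep the depth of the spanning forest small. Concretely, we use $O(\log |A|)$ rounds, and each round contracts components much like Borůvka's algorithm, so a path to the root uses at most one label per round. Hence every $a$ satisfies $a=\alpha^T f(a)\pm O(\log|A|)\Delta$. To get the error down to $\pm\Delta$, we run the whole construction at precision $\Delta' = \Delta/\Theta(\log |A|)$ on each label. That is, we store each sampled label together with its refinement by multiples of $\Delta'$, in the spirit of Observation~\ref{obs:refine-cover}. This costs an extra $O(\log\log|A|)$ coordinates per label, or we simply accept an extra $\log$ factor in the rank.

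In a single round, we sample $O(K\log|A|)$ random triples $(b_1,b_2,b_3)\in A^3$ and use the differences $x=b_1-b_3$ as candidate labels. For each candidate label $x$, we connect every $a$ to some $a'$ with $a' = a - x\pm\Delta$ whenever such an element exists. The hypothesis $r^{\pm\Delta}_{A+A-A}(a)\ge |A|^2/K$ says that a uniformly random pair $(b_1,b_3)$ satisfies $a = b_2 + (b_1-b_3)\pm\Delta$ for some $b_2$ with probability at least about $1/K$, up to the $O(1)$ spacing factor. Spacing guarantees that each $(b_1,b_3)$ pair accounts for only $O(1)$ choices of $b_2$. So for a fixed current component $S$ and one of its vertices $a$, a sampled difference links $a$ to another vertex $a-x\pm\Delta$ with probability $\Omega(1/K)$.

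The main obstacle is ensuring that the link leaves the component $S$ rather than landing inside it. Our first attempt argues that if $S$ is small, say $|S|\le |A|/2$, then the witnesses $b_2$ lying inside $S$ account for at most a fraction of the $|A|^2/K$ representations. For this to work, the witnesses must spread over many $b_2$. We try to ensure this by choosing the sampled $b_2$ uniformly as well and applying spacing, and we handle the remaining degenerate case by a counting argument. If this fails, we relax to ``almost connected'': we prove that after $O(K\log^2|A|)$ samples the number of components drops geometrically. Any remaining small leftover, which consists of unpopular vertices, is handled by giving each leftover vertex its own fresh label directly, since there are only $O(K\log|A|)$ components at that point.

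Finally, we turn to the running time. Each round tests $O(K\log|A|)$ labels, and each test takes $\tilde O(|A|)$ time via sorting and binary search over $A$. This gives the randomized bound $\tilde O(|A|K)$. The deterministic $\tilde O(|A|^2)$ bound comes from trying all differences instead of random ones and greedily picking one that merges many components.
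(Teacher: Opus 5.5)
Your overall architecture (random differences from $A-A$ as labels, Borůvka-style merging, one root per component) matches the paper's. However, there is a genuine gap exactly at the step you flag as the main obstacle. You never show that a sampled difference links a component $S$ to its outside, and the threshold $|S|\le|A|/2$ cannot work.

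For $a\in S$, spacing only bounds the number of representations $a=b_1+b_2-b_3\pm\Delta$ with witness $b_2\in S$ by $O(|S|\,|A|)$. For $|S|=\Theta(|A|)$ this bound is no smaller than the guaranteed $|A|^2/K$, so it says nothing. This is not merely a weak bound. Take $\Delta=0$ and let $A$ be a union of $\Theta(\sqrt K)$ arithmetic progressions of length $|A|/\Theta(\sqrt K)$ in general position. The hypothesis holds, yet for $a$ in one progression $S$, all but fewer than $|A|$ of its $\Theta(|A|^2/K)$ representations have $b_2\in S$. So the popularity of $a$ gives essentially no outward edge.

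Sampling $b_2$ ``uniformly as well'' is irrelevant, because the label $b_1-b_3$ ignores $b_2$. Your fallback is asserted, not proved: you claim the number of components ``drops geometrically'' and that only $O(K\log|A|)$ components remain. You also say the leftover ``consists of unpopular vertices'', but by hypothesis there are none.

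The missing idea is to give up on connectivity and use the threshold $|A|/\Theta(K)$. Suppose $|S|\le|A|/(cK)$ for a suitable constant $c$. Then in-component witnesses account for at most $|A|^2/(2K)$ representations of a fixed $a\in S$. Hence a uniformly random pair $(b_1,b_3)$ gives an edge from $a$ leaving $S$ with probability $\Omega(1/K)$. With $O(K\log|A|)$ samples in stage $i$, every component smaller than $\min\{2^i,|A|/(cK)\}$ merges with another one with high probability. After $\log|A|$ stages all components have size at least $|A|/(cK)$, so there are only $O(K)$ of them, and each gets its own root coordinate. This is the paper's argument.

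Separately, your depth claim is false. Merged components can chain within one round, and a path can reuse the same label many times (e.g.\ in an arithmetic progression). So coefficients can reach $|A|$ and the accumulated error can reach $|A|\Delta$. ``Refining each label'' cannot repair this, since the error is per edge rather than per label. Shrinking the edge tolerance does not help either, because the popularity hypothesis only holds at scale $\Delta$. The paper instead uses arbitrary shortest paths and replaces every coordinate by its binary multiples and their negations. It then appends $\pm 2^j\Delta$ for $j\le\lceil\log|A|\rceil$ to absorb the residual, which is found by binary search.

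Likewise, your greedy deterministic variant is the greedy hitting-set algorithm. Bounding its output size requires the existence of a hitting set of size $O(K\log|A|)$, which again comes only from the corrected counting above.
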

\begin{proof}
For sets $A$ and $D \subseteq A - A$, let $\Gamma^{\pm \Delta}_A(D)$ denote the undirected graph on the vertex set $A$ that contains an edge $\set{a, b}$ if and only if there is $d \in D \cup (-D)$ with $a - b = d \pm \Delta$. (In case that $A$ is a group and $\Delta = 0$, this graph is essentially the Cayley graph of $A$ with respect to $D$.)

The proof works by constructing a set $D\subseteq A-A$ of size $d = O(K\log^2|A|)$ such that $\Gamma^{\pm \Delta}_A(D)$ has only $O(K)$ components. Once we found such a set $D$, we can construct the $\Delta$-approximate subset sum cover as follows.

Explicitly construct $\Gamma^{\pm \Delta}_A(D)$, and compute all connected components $C_1, \dots, C_\ell$ where $\ell = O(K)$. In order to construct $\alpha \in \Real^{\ell+d}$, first take an arbitrary node in each component $C_i$ and assign it to $\alpha_i$. Then, set $\alpha_{\ell+1}, \dots, \alpha_{\ell+d}$ to the elements of $D$. We now show that any $a \in A$ can be written as $a = \alpha^Tf(a) \pm |A|\Delta$ for some function $f : A \to \set{-|A|, \dots, |A|}$.
Let $a \in A$ and suppose $a \in C_i$. Let $(\alpha_i = b_1, b_2, \dots, b_p = a)$ denote a shortest path from $\alpha_i$ to $a$. Then, we can express $a$ as the telescopic sum
\begin{equation*}
    a = \alpha_i + \sum_{j=1}^{p-1} (b_{j+1} - b_j) \pm p\Delta,
\end{equation*}
where for each term we have $b_{i+1} - b_i = d \pm \Delta$ for some $d \in D \cup (-D)$, since $\set{b_i, b_{i+1}}$ is an edge in $\Gamma^{\pm  \Delta}_A(D)$. Thus, $a$ can be expressed as an approximate linear combination $a = \alpha^T f(a) \pm |A|\Delta$ with $f(a) \in \{-|A|, \dots, |A|\}$, since the sum contains only $p \le |A|$ terms.
To construct $f$ from $D$, we can run a BFS from each $\alpha_i$ for $i \in [\ell]$ to find the shortest path. By constructing representations incrementally from the representation of the predecessor, the running time is bounded by the number of edges in $\Gamma^{\pm \Delta}_A(D)$, so $\tilde O(|A|K)$. To construct $\Gamma^{\pm\Delta}_A(D)$ in the same time, simply sort $A$ increasingly and enumerate every pair $(a,d) \in A \times (D \cup (-D))$. For each such pair, binary search for all $b \in A$ with $a - b = d \pm \Delta$; since $A$ is $(\Delta, O(1))$-spaced, there are only constantly many.

Finally, we can turn $(\alpha, f)$ into a $\Delta$-approximate subset sum cover of $A$ as follows. We add an additional coordinate $\Delta$ to $\alpha$ to replace $\pm |A|\Delta$ by $\pm \Delta$. We add negations of each $\alpha_i$ and then replace each $\alpha_i$ by all of $\alpha_i, 2\alpha_i, 4\alpha_i, \ldots, 2^{\lceil\log |A|\rceil}\alpha_i$. This way, we can guarantee that all coefficients are in~$\set{-1, 0, 1}$.

We now state the randomized algorithm to construct $D$, and then show how to derandomize it.

\paragraph{Randomized Algorithm}
The set $D$ consists of $O(K\log^2|A|)$ elements. We create each element in $D$ independently, by choosing two elements $a$ and $b$ uniformly at random from $A$ and including $a-b$ in $D$.
For the analysis, think of the construction of $D$ in $\log |A|$ \emph{stages}. Denote by $D_i$ the edge set at the beginning of stage $i$. In this stage, we generate $O(K\log|A|)$ differences to add to $D$.

We prove that after stage $i$ every connected component of $\Gamma^{\pm \Delta}_{A}(D_{i+1})$ must have size at least $\min\set{2^i, |A|/2tK}$, where $t$ is the parameter such that $A$ is $(\Delta, t)$-spaced.
Denote by $C_1, \dots, C_\ell$ the connected components of $\Gamma^{\pm \Delta}_A(D_i)$ of size less than $\min\set{2^i, |A| / 2tK}$. For each component, fix an arbitrary $a_i \in C_i$. Note that for any $i \in [\ell]$, we have
\begin{align*}
    \sum_{b \in A \setminus C_i} r^{\pm \Delta}_{A - A}(a_i - b) &= r^{\pm \Delta}_{A+A-A}(a_i) - \sum_{b \in C_i} r^{\pm \Delta}_{A - A}(a_i - b) \\
    &\ge \frac{|A|^2}{K} - t|C_i| \, |A|\\
    &\ge \frac{|A|^2}{2K}.
\end{align*}
Thus, the probability that we sample one of the differences in $a_i - (A \setminus C_i)$ in stage $i$ is at least
\begin{equation*}
    1 - \parens*{1 - \frac{1}{2K}}^{O(K \log |A|)} \geq 1 - 1/\poly(|A|).
\end{equation*}
By union bound, the error probability over all components is at most $|A| \cdot 1/\poly(|A|) = 1/\poly(|A|)$. Hence, with high probability, each $C_i$ is newly connected to a different component of $\Gamma^{\pm \Delta}_{A}(D_i)$. If one of the components has size at least $|A|/2tK$, the new component does as well. Otherwise both components have size at least $2^{i-1}$ and the new component has size at least $2^i$. This proves the induction step. 

By a union bound, each component has size at least $|A|/2tK$ after stage $\log|A|$ with high probability. Hence there can be at most $O(K)$ components.

\paragraph{Deterministic Algorithm}

In the deterministic algorithm, we explicitly operate in $\log|A|$ stages.
In stage $i$, instead of randomly sampling $D_{i+1} \setminus D_i$, we compute all $C_i$ and fix an arbitrary $a_i \in C_i$. Then, we generate a hitting set instance $\set{a_i-b, b-a_i : b \in A \setminus C_i}_{i \in [\ell]}$ and greedily compute a $O(\log|A|)$-approximate solution $S$ to this instance in time $\tilde O(|A|^2)$~\cite[Chapter 2]{Vazirani01}. By the analysis of the randomized algorithm, there is a solution of size at most $O(K\log|A|)$, so $S = O(K\log^2|A|)$. The remaining argument is the same. Constructing and solving these hitting set instances adds $\tilde O(|A|^2)$ to the running time.
\end{proof}

The following lemma combines \Cref{thm:frank-tardos,lem:subset-sum-cover} to compute a hash function that hashes a set of real number in which every number is popular to integers. 
It is the single interface via which our two reductions make use of subset sum covers and the Frank--Tardos theorem.

\begin{lemma}[Hashing Popular Elements to Integers]\label{lem:hash-popular-to-int}
Let $\Delta \in \Realnneg$, let $C \geq 1$ be any constant and let~$A$ be a $(\Delta, O(1))$-spaced set such that each $a \in A$ satisfies
\begin{equation*}
    r_{A+A-A}^{\pm C\Delta}(a) \geq \frac{|A|^2}{K}.
\end{equation*}
There is a randomized algorithm that computes a function $h : A \to [\pm 2^{(K \log |A|)^{O(1)}}]$ such that:
\begin{enumerate}[label=(\roman*)]
    \item For all $a, b, c \in A$ with $a + b + c = 0 \pm \Delta$, we have $\Pr[h(a) + h(b) + h(c) = 0] \geq \Omega(1)$.
    \item For all $a, b, c \in A$ we have  $h(a) + h(b) + h(c) = 0$, we have $a + b + c = 0 \pm 100 \Delta$.
    \item For all $a \in A$, $|\set{b \in A : h(a) = h(b)}| \leq O(1)$.
\end{enumerate}
The algorithm runs in time $\tilde O(|A|K + (K\log|A|)^{O(1)})$.
\end{lemma}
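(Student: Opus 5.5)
Our plan is to first build a low-rank approximate subset sum cover of $A$ with \cref{lem:subset-sum-cover}, then replace the cover by integers with \cref{thm:frank-tardos}, and finally insert a random shift. The shift turns the deterministic, threshold-based comparison into an exact integer equation that holds with constant probability.

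\emph{Step 1: the cover.} Since $A$ is $(\Delta,O(1))$-spaced, it is also $(C\Delta,O(C))=(C\Delta,O(1))$-spaced. The popularity hypothesis is stated at scale $C\Delta$, so \cref{lem:subset-sum-cover} applies at scale $C\Delta$. It gives a $C\Delta$-approximate cover of rank $O(K\log^2|A|)$ in time $\tilde O(|A|K)$. By \cref{obs:refine-cover} we refine it to a $\Delta/C_0$-approximate cover $(\alpha,f)$, for a suitably large constant $C_0$; the rank $d=O(K\log^2|A|)$ grows only by a constant factor. For $a,b,c$ we then have
\[
  a+b+c = \alpha^T\bigl(f(a)+f(b)+f(c)\bigr) \pm 3\Delta/C_0 ,
\]
where $x\coloneqq f(a)+f(b)+f(c)\in\set{0,\dots,3}^d$.

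\emph{Step 2: integerization.} We append to $\alpha$ a constant number of copies of $\Delta$ and apply \cref{thm:frank-tardos} in the generalized form of \cref{rem:frank-tardos}, with coefficients in $[\pm C']$ for a constant $C'$. This produces $v\in\Int^d$ with $\norm{v}_\infty\le 2^{O(d^3)}=2^{(K\log|A|)^{O(1)}}$, together with an integer threshold $D\ge 0$. For every integer vector $y$ with small entries and every small constant $c'$, it guarantees
\[
  |\alpha^T y|\le c'\Delta \iff |v^T y|\le c'D .
\]
Set $g(a)\coloneqq v^T f(a)$. This step takes time $(K\log|A|)^{O(1)}$, plus $\tilde O(|A|d)$ to evaluate $g$. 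Already $g$ is an exact proxy in both directions: $a+b+c=0\pm\Delta$ implies $|\alpha^T x|\le(1+3/C_0)\Delta\le 2\Delta$, hence $|g(a)+g(b)+g(c)|\le 2D$. Conversely, $|g(a)+g(b)+g(c)|\le c'D$ implies $a+b+c=0\pm(c'+1)\Delta$.

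\emph{Step 3: bucketing with a random shift.} To turn ``$=0\pm O(D)$'' into ``$=0$'', we bucket at granularity $L\coloneqq 10D+1$. This requires a slight variant of the standard trick, because sums of rounded values may miss zero by a bounded carry. Pick a uniform random $s\in[0,L)$ and define
\[
  h(a)\coloneqq 4\floor{(g(a)+s)/L} + \rho(a),
\]
where $\rho(a)\in\set{0,1,2,3}$ encodes the residue class of $(g(a)+s)\bmod L$ in a quarter of $[0,L)$. The constants are chosen so that $h(a)+h(b)+h(c)=0$ forces $\sum g\in[\pm O(L)]=[\pm O(D)]$. By Step 2 this gives $a+b+c=0\pm100\Delta$, which is (ii).

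For (i), take $|\sum g|\le 2D$. With constant probability over $s$, all three shifted values $g+s$ land in residue ranges for which the floors sum to exactly the right value and the $\rho$'s cancel. The cleanest way to realize this is to design $h$ so that the equation holds whenever no shifted value lies within distance $2D$ of a bucket boundary. By a union bound over the three values this has probability $\ge 1-O(D/L)=\Omega(1)$.

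For (iii), $h(a)=h(b)$ implies $|g(a)-g(b)|<L=O(D)$, and therefore $|a-b|=O(\Delta)$. Spacedness then leaves only $O(1)$ such $b$.

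\textbf{Main obstacle.} The delicate part is Step 3. We need to choose the encoding so that (i) and (ii) hold simultaneously with exact integer equality, handling the edge case $D=0$ and the asymmetry of the floor under negation. If the floor-based design fails, we would instead set $h(a)=\floor{(g(a)+s)/L}$ and take three independent shifts $s_1,s_2,s_3$ with $s_1+s_2+s_3\equiv 0$ adjusted by a constant correction. The general principle we rely on is that a random shift makes rounding errors cancel with constant probability.
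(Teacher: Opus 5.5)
Your Steps~1 and~2 essentially coincide with the paper's proof: a $C\Delta$-approximate cover from \cref{lem:subset-sum-cover}, refined by \cref{obs:refine-cover}, then \cref{thm:frank-tardos} with copies of $\Delta$ appended. Your arguments for (ii) and (iii) are also fine. The genuine gap is property~(i) in Step~3. You state it only as a design goal, and it is false for the design you actually wrote down. Since $\rho(a)$ records the quarter of the residue, your map is exactly $h(a)=\lfloor 4(g(a)+s)/L\rfloor$, with one shift $s$ shared by all elements.

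The proof has to work for every integer triple $g(a),g(b),g(c)$ that Step~2 can produce. Take $(m,m,-2m)$ with $u\coloneqq 4m/L\in[1/3,1/2)$, say with $a=b$. The $g$-sum is $0$, so $a+b+c=0\pm\Delta$ and (i) applies. Writing $t\coloneqq 4s/L\in[0,4)$, you get $h(a)+h(b)+h(c)=2\lfloor u+t\rfloor+\lfloor t-2u\rfloor\in\{-1,1,2,4,5,7,8,10,11\}$. This is never $0$, and never even divisible by $3$. The triple $(0,0,0)$, by contrast, only produces multiples of $3$. Hence no correction term depending only on $s$ can repair the design. Distinct elements do not help either: for $(m,m+1,-2m-1)$, a multiple of $3$ is hit only for an $O(1/L)$ fraction of the shifts.

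The underlying reason is that a shared shift does not determine the carry $\sum_x\lfloor (g(x)+s)/L\rfloor$ of a zero-sum triple. This holds even when all shifted values stay $2D$ away from the bucket boundaries. For example, at $s=0.3L$ the $g$-values $(0,0,0)$ and, roughly, $(0.4L,0.4L,-0.8L)$ give carries $0$ and $-1$. Your fallback with $s_1+s_2+s_3\equiv 0$ is not available either. $h$ is a single function on $A$, so randomness can be attached to elements, not to positions within a triple.

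The missing idea is to put independent per-element randomness into an additive correction applied after a deterministic rounding. The paper sets $g(a)=\lfloor v^Tf(a)/10D\rfloor$, so every triple with $a+b+c=0\pm\Delta$ has $g(a)+g(b)+g(c)$ in a constant window around $0$. It then defines $h(a)=g(a)+c_a$ with independent uniform offsets $c_a\in\{-2,\dots,2\}$. This gives all three properties:
\begin{itemize}
\item For a fixed triple, the offsets cancel the bounded rounding error with probability $\Omega(1)$. This also holds when elements repeat, since $0$ is an allowed offset.
\item Conversely, $h(a)+h(b)+h(c)=0$ forces $g(a)+g(b)+g(c)=0\pm 6$. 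Hence $|v^T(f(a)+f(b)+f(c))|\le 90D$ and $a+b+c=0\pm 93\Delta$.
\item $h(a)=h(b)$ leaves only $O(1)$ possible rounded values for $b$, each shared by $O(1)$ elements by spacedness.
\end{itemize}
If you replace your shared shift by such offsets, the rest of your plan goes through. Your choice $L=10D+1$ even has a small advantage: it covers $\Delta=0$, where $D=0$ and the paper's division by $10D$ is undefined.
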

\begin{proof} 
    We first run the algorithm from \Cref{lem:subset-sum-cover} to compute a $C\Delta$-approximate subset sum cover of $A$ with rank $d \le O(K\log^2 |A|)$ in time $\tilde O(|A|K)$. Using \Cref{obs:refine-cover}, we refine that cover into a $\Delta$-approximate subset sum cover $(\alpha, f)$. Next, we run \Cref{thm:frank-tardos} on $\alpha$ to compute $v \in \Int^d$ and $D \in \Intpos$ with $D \le 2^{O(d^3)}$ such that $|\alpha^Tx| \le C\Delta$ if and only if $|v^Tx| \le CD$ for $C \le 100$ and $x \in \{-3, \dots, 3\}^d$, as discussed in \Cref{rem:frank-tardos}.
    
    We round all values of $v^Tf(a)$ to multiples of $10D$ by defining $g(a) \coloneqq \lfloor v^Tf(a) / 10D\rfloor$ for all $a \in A$. Finally, define $h(a) \coloneqq g(a) + c_a$ where $c_a$ is a uniformly random value from $\set{-2, \dots, 2}$. This way $h$ is indeed a function $A \to [\pm 2^{(K\log |A|)^{O(1)}}]$. We prove the three claimed properties separately.
    
    \paragraph{Property (i): Completeness} Let $a,b,c \in A$ with $a + b + c = 0 \pm \Delta$. As each representation $\alpha^Tf(x)$ introduces an additional $\pm \Delta$, we get $|\alpha^T(f(a) + f(b) + f(c))| \le 4 \Delta$. Replacing $\alpha$ with $v$ yields $|v^T(f(a) + f(b) + f(c))| \le 4D$ and $v^Tf(a) + v^Tf(b) = - v^Tf(c) \pm 4D$. After rounding, we get both
    \begin{equation*}
        g(a) + g(b) = \floor*{\frac{v^Tf(a)}{10D}} + \floor*{\frac{v^Tf(b)}{10D}} \le \floor*{\frac{v^Tf(a) + v^Tf(b)}{10D}} \le \floor*{\frac{-v^Tf(c) + 4D}{10D}} \le -g(c) + 1
    \end{equation*}
    and 
    \begin{equation*}
        g(a) + g(b) = \floor*{\frac{v^Tf(a)}{10D}} + \floor*{\frac{v^Tf(b)}{10D}} \ge \floor*{\frac{v^Tf(a) + v^Tf(b)}{10D}}-1 \ge \floor*{\frac{-v^Tf(c) - 4D}{10D}} - 1 \ge -g(c) - 2.
    \end{equation*}
    In total, we know that $g(a) + g(b) + g(c) = 0 \pm 2$. With probability at least $\Omega(1)$, the correct offset is added to $a$, $b$, and $c$ (i.e., $c_a = c_b = 0$ and $c_c = 0 \pm 2$) such that they satisfy $h(a) + h(b) = h(c)$.
    
    \paragraph{Property (ii): Soundness} Let $a,b,c \in A$ with $h(a) + h(b) + h(c) = 0$ and thus $g(a) + g(b) + g(c) = 0 \pm 6$. This implies $|v^T(f(a) + f(b) + f(c))| \le 90D$ and $|\alpha^T(f(a) + f(b) + f(c))| \le 90 \Delta$. Unpacking the subset sum cover yields $a + b + c = 0 \pm 93\Delta$.
    
    \paragraph{Property (iii): Constant Bucket Load} Notice that if $g(a) = g(b)$ for $a,b \in A$, then $|v^T(f(a) - f(b))| \le 10D$ and hence $|\alpha^T(f(a) - f(b))| \le 10\Delta$. Since $A$ is $(\Delta, O(1))$-spaced, there can be at most $O(1)$ such elements $b$. In other words, for every $x$ there are at most constantly many $a \in A$ with $g(a) = x$. Furthermore, if $h(a) = h(b)$, then $c_a = c_b \pm 4$ and $g(a) = g(b) \pm 4$. Since there are only $O(1)$ elements $b$ with $g(b) = x$ for each of the 9 options $x = g(a) \pm 4$, property (iii) follows. 

    \paragraph{Running Time} \Cref{lem:subset-sum-cover} runs in time $\tilde O(|A|K)$ and returns a subset sum cover of rank $O(K \log^2|A|)$. Invoking \Cref{thm:frank-tardos} takes time $(K \log |A|)^{O(1)}$ and returns integers in the desired range. All other operations take place on integers with $(K\log|A|)^{O(1)}$ bits and hence do not change the running time.
\end{proof}

Using the decomposition from \Cref{lem:decomp-pop} and subset sum covers from \Cref{lem:subset-sum-cover}, we can immediately prove our ``Poor Man's'' Freiman's theorem---even in an approximate version that is coherent with the previous notation.
Note that setting $\Delta = 0$ directly gives an exact subset sum cover of the same rank for any set $A \subseteq \Real$ with doubling $|A+A|/|A| \le K$.

\begin{theorem}[Poor Man's Freiman's Theorem]\label{thm:poor-man-freiman}
    Let $\Delta \in \Realnneg$, $K \ge 1$, and let $A$ be a $\Delta$-spaced set such that $A+A$ can be covered with $K|A|$ intervals of width $\Delta$.
    Then there is a $\Delta$-approximate subset sum cover of $A$ with rank $(K \log|A|)^{O(1)}$. 
\end{theorem}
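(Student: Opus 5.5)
The plan is to combine the Popular/Unpopular Decomposition (\cref{lem:decomp-pop}) with \cref{lem:subset-sum-cover}, and to recurse on the unpopular part. The key input is a lower bound on the approximate energy of any set with small approximate doubling.

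\emph{Step 1 (energy lower bound).} Let $B \subseteq A$ be a subset whose sumset $B+B$ is covered by at most $K'|B|$ intervals of width $\Delta$. By Cauchy--Schwarz over these intervals, the $|B|^2$ sums $b_1+b_2$ satisfy $\sum_I |\{(b_1,b_2): b_1+b_2\in I\}|^2 \ge |B|^4/(K'|B|)$. Two pairs whose sums lie in the same width-$\Delta$ interval give an approximate additive quadruple $b_1+b_2=b_3+b_4\pm\Delta$. Hence $E^{\pm\Delta}(B)\ge |B|^3/K'$.

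\emph{Step 2 (main loop).} Apply \cref{lem:decomp-pop} to $A$ with parameter $2K$, obtaining $A=P\sqcup U$. The unpopular part satisfies $E^{\pm\Delta}(A,U)<|A|^3/(2K)$, while Step 1 gives $E^{\pm\Delta}(A)\ge |A|^3/K$. The energy of $A$ splits into quadruples with one of the two distinguished coordinates in $U$ and quadruples lying in $P^4$, up to constant factors and symmetry. Therefore $E^{\pm\Delta}(P)\gtrsim|A|^3/K$, which forces $|P|\gtrsim |A|/K^{1/3}$ and in particular $P\neq\emptyset$. For $a\in P$ we have $r^{\pm\Delta}_{P+P-P}(a)\ge |A|^2/(8K)\ge |P|^2/(8K)$. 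Since $A$ is $\Delta$-spaced, $P$ is $(\Delta,O(1))$-spaced, so \cref{lem:subset-sum-cover} gives a $\Delta$-approximate subset sum cover of $P$ of rank $O(K\log^2|A|)$.

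\emph{Step 3 (recursion).} Next recurse on $U$, removing popular sets repeatedly. The danger is that $U$ need not have small doubling relative to its own size. Its sumset $U+U\subseteq A+A$ is still covered by $K|A|$ intervals, so its relative doubling is $K|A|/|U|$, which stays polynomial in $K$ only while $|U|\ge |A|/K^{O(1)}$. To control this, I would instead use a stronger form of the Step 2 argument: the energy bound $E^{\pm\Delta}(A,U)<|A|^3/K''$ for $K''=K^{O(1)}$ should make $U$ shrink geometrically. Concretely, choose the decomposition threshold so that a constant fraction of $A$ is popular. Roughly, the energy contributed by quadruples containing a fixed large subset $U$ with $|U|\ge|A|/2$ is itself at least $|U|^3/(K|A|/|U|)$, because $U+U$ is covered by $K|A|$ intervals. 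Taking $K''=16K$ then contradicts $|U|\ge |A|/2$, so $|U|\le|A|/2$ after each round. Throughout, apply the argument to the current remainder with the \emph{original} sumset cover $K|A_0|$ and the energy inequality relative to $A_0$. After $O(\log|A|)$ rounds the remainder is empty, while every round uses the doubling parameter $O(K)$ measured against $|A_0|$. Rounds where the remainder becomes very small relative to $A_0$ need care; once $|U|\le |A_0|/K^{O(1)}$ one might simply include those elements directly in $\alpha$, but that is only affordable when $|U|$ is $(K\log|A|)^{O(1)}$, which is the point I expect to be the main obstacle. The fix I would try first is to compare against $U$'s own energy: Step 1 applied to $U$ with $K'=K|A_0|/|U|$ degrades. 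So instead I would pick the popularity threshold in \cref{lem:decomp-pop} relative to $|A_0|$, that is, $r_{P+P-P}\ge|A_0|^2/K$. Then \cref{lem:subset-sum-cover} still applies with parameter $K|A_0|^2/|P|^2$, which is $K^{O(1)}$ as long as popular parts have size at least $|A_0|/K^{O(1)}$, guaranteed by Step 1.

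\emph{Step 4 (combine).} Concatenate the covers of all $O(\log|A|)$ popular parts into one vector $\alpha$. Define $f$ on each part by its own cover, padding with zeros elsewhere. The total rank is $(K\log|A|)^{O(1)}$, and each $a\in A$ is approximated within $\pm\Delta$ by construction.
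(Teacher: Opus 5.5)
\textbf{Step 3 has a genuine gap.} Your Steps 1 and 2 match the opening of the paper's proof: the Cauchy--Schwarz energy bound, one application of \cref{lem:decomp-pop}, then \cref{lem:subset-sum-cover} on $P$. The problem is the recursion on the unpopular part. It only ever covers leftover elements through additive structure \emph{inside} the leftover set, and such structure need not exist.

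Concretely, let $B\subseteq[N/3]$ be a Sidon set of size $\Theta(\sqrt N)$ and take
\[
A=\{10\Delta i : i\in[N]\}\cup S,\qquad S=\{5\Delta+30\Delta b : b\in B\}.
\]
\begin{itemize}
\item This $A$ is $\Delta$-spaced and $A+A\subseteq[0,21N\Delta]$. So $K=O(1)$, and the theorem promises rank $\log^{O(1)}N$.
\item Every $s\in S$ has $r^{\pm\Delta}_{A+A-A}(s)=O(N^{3/2})\ll |A|^2/K$, while every element of the progression keeps $r^{\pm\Delta}_{P+P-P}\gtrsim N^2/2$. So your first round ends with exactly $U=S$.
\item $S$ is Sidon at scale $\Delta$: $r^{\pm\Delta}_{S+S-S}(s)=2|S|-1$, and each nonzero difference joins only $O(1)$ pairs of $S$. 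Hence \cref{lem:subset-sum-cover}, or any cover of $S$ built from its own differences plus component roots, costs rank $\Omega(|S|)=\Omega(\sqrt N)$.
\item Your alternative threshold $|A_0|^2/K$ makes no element of $S$ popular at all, since $r^{\pm\Delta}_{P+P-P}(a)\le O(|P|^2)\le O(N)$ for $P\subseteq S$.
\item Inserting $S$ directly into $\alpha$ costs rank $\sqrt N$.
\end{itemize}
So the claim that popular parts of size $|A_0|/K^{O(1)}$ are ``guaranteed by Step 1'' is false. On a remainder $U$, Step 1 only gives $E^{\pm\Delta}(U)\ge |U|^4/(K|A_0|)$, which falls below the trivial $\Theta(|U|^2)$ once $|U|\le\sqrt{K|A_0|}$. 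Likewise, your halving argument in round $j$ actually needs parameter $\Theta(K|A_0|/|U_j|)$, not $O(K)$ measured against $|A_0|$.

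\textbf{The missing idea} is to cover all of $A$ by translates of differences of the \emph{single} popular set $P$, an approximate Ruzsa covering lemma, with no recursion.
\begin{itemize}
\item For any $a\in P$, $|A|^2/4K\le r^{\pm\Delta}_{P+P-P}(a)\le O(|P|^2)$, so $|P|\ge\Omega(|A|/\sqrt K)$.
\item Take $T\subseteq A$ maximal such that the multiset $T+P$ is $\Delta$-spaced. Each of the $K|A|$ width-$\Delta$ intervals covering $A+A\supseteq T+P$ then contains at most one sum $t+p$. Hence $|T|\,|P|\le K|A|$, giving $|T|\le O(K^{3/2})$.
\item By maximality, every $a\in A$ satisfies $a+p'=t+p\pm\Delta$ for some $t\in T$ and $p,p'\in P$. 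Thus $a=\alpha^Tf(p)-\alpha^Tf(p')+t\pm 3\Delta$.
\item Append to $\alpha$ the negated coordinates, the elements of $T$, and a few copies of $\Delta$. This yields a $\Delta$-approximate cover of $A$ of rank $O(K\log^2|A|)+O(K^{3/2})$.
\end{itemize}
In the example above, $|T|=O(1)$, because all of $S$ lies in a single residue class modulo $10\Delta$.

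A smaller point: as stated, \cref{lem:decomp-pop} bounds only quadruples $(a_1,a_2,b_1,b_2)$ with $b_1,b_2\in U$. So the energy split in your Step 2 needs the stronger count from its proof. Even then, parameter $2K$ yields only $E^{\pm\Delta}(P)>0$, so you need a larger constant. The direct bound on $|P|$ above avoids this.
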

\begin{proof}
    Denote by $\mathcal{I}$ the set of at most $K|A|$ intervals covering $A+A$.
    The proof consists of three steps. First, we show that the ``small approximate doubling'' condition $|\mathcal{I}| \le K|A|$ implies large approximate energy of $A$, i.e., $E^{\pm \Delta}(A) \ge |A|^3 / K$.
    Secondly, we find a large subset $P \subseteq A$ that can be covered by a small subset sum cover using our tools in \Cref{lem:decomp-pop,lem:subset-sum-cover}.
    Finally, we show how to cover the full set $A$ using a small set of translates of $P-P$, which resembles an approximate version of Ruzsa's covering lemma.
    
    Since every pair of sums $a+b, c+d \in I$ for some $I \in \mathcal{I}$ and $a,b,c,d \in A$ satisfies $a+b = c+d \pm \Delta$, it contributes to the $\Delta$-approximate energy. Using Cauchy--Schwarz, we get
    \begin{align*}
        E^{\pm\Delta}(A)
        &\ge \sum_{I \in \mathcal{I}} \abs{\set{(a,b) \in A^2 : a+b \in I}}^2 \\
        &\ge \frac{1}{K|A|}\parens*{\sum_{I \in \mathcal{I}} \abs{\set{(a,b) \in A^2 : a+b \in I}}}^2 \\
        &= \frac{|A|^4}{K|A|} \\
        &= \frac{|A|^3}{K}.
    \end{align*}
    
    We apply \cref{lem:decomp-pop} to $A$ with parameter $K$, and obtain a decomposition $A = P \sqcup U$ with $E^{\pm \Delta}(A,U) < |A|^3/K$ and $r^{\pm\Delta}_{P+P-P}(a) \ge |A|^2/4K$. It cannot be the case that $U = A$, as in that case $E^{\pm\Delta}(A) = E^{\pm\Delta}(A, U) < |A|^3 / K$ is a contradiction. Hence, $P$ is non-empty and contains some element $a$. That element, however, satisfies $r^{\pm\Delta}_{P+P-P}(a) \geq |A|^2 / 4K$ and thus witnesses that $|P| \geq \Omega(|A| / \sqrt{K})$. Indeed, for any $a$, we have $r^{\pm\Delta}_{P+P-P}(a) \leq O(|P|^2)$ since $P$ is $\Delta$-spaced.
    \Cref{lem:subset-sum-cover} states that $P$ admits a $\Delta$-approximate subset sum cover $(\alpha, f)$ of rank $O(K \log^2 |A|)$. For the remaining proof, we argue how to modify $(\alpha, f)$ to cover the whole of $A$ without increasing the rank by too much. 
    
    Let $T \subseteq A$ be a maximal subset such that the multiset $\set{t + p : t \in T, p \in P}$ is still $\Delta$-spaced. Then, for every interval $I \in \mathcal{I}$ there can be at most one pair $(t,p) \in T\times P$ with $t + p \in I$. Therefore, we get 
    \begin{equation*}
        |T|\cdot |P| \le c\abs{\mathcal{I}} \le cK|A| \le O(K^{3/2}|P|).
    \end{equation*}
    In particular, $|T| \le O(K^{3/2})$. By the maximality of $T$, for every $a \in A$, there is some $t \in T$ and $p, p' \in P$ such that $t + p = a + p' \pm \Delta$. Hence, $a = p - p' + t\pm \Delta$ and $a = \alpha^Tf(p) - \alpha^Tf(p') + t \pm 3\Delta$ since representing $p$ and $p'$ as a subset both induces an additional $\pm \Delta$.
    By appending the negation of every coordinate to $\alpha$, as well as all $t \in T$ and two copies of $\Delta$, we can adapt the subset sum cover to cover every $a \in A$. This only doubles the rank and adds $O(K^{3/2})$ coordinates.
\end{proof}

By simply following the proof of \Cref{thm:poor-man-freiman}, one can also implement an algorithm that computes the approximate subset sum cover in time $\tilde O(|A|^2)$.

\subsection{Linear Hashing}\label{sec:toolkit:sec:hashing}

Following many previous 3SUM-related results, we also use the following family of almost-additive hash functions due to Dietzfelbinger~\cite{Dietzfelbinger96}. See also~\cite{KopelowitzPP16} for a formal proof of almost-additivity.

\begin{lemma}[Linear Hashing~\cite{Dietzfelbinger96}]\label{lem:linear-hashing}
Let $m \leq n$. There is a family of hash functions $h : [\pm 2n] \to [m]$ and a constant-sized set $\Lambda \subseteq [\pm m]$ with the following properties:
\begin{itemize}
    \item \emph{(Efficiency)} Sampling and evaluating $h$ takes time $(\log n)^{O(1)}$, 
    \item \emph{(Almost-Additivity)} For all keys $x, y \in [\pm n]$, $h(x) + h(y) = h(x + y) + \lambda$, for some $\lambda \in \Lambda$,
    \item \emph{(Pairwise Independence)} For all distinct keys $x, y \in [\pm n]$ and all $i, j \in [m]$,
    \begin{equation*}
        \Pr[\text{$h(x) = i$ and $h(y) = j$}] \leq O\parens*{\frac{1}{m^2}}.
    \end{equation*}
\end{itemize}
\end{lemma}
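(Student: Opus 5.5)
The plan is to use the Dietzfelbinger multiply-shift construction. Assume without loss of generality that $m$ is a power of two, rounding it down and losing only constant factors. Let $w$ be such that $2^w \ge 8n$, so every key in $[\pm 2n]$ fits, after a shift by $2n$ if convenient, into a $w$-bit word. Sample a uniformly random odd integer $\sigma \in [2^w]$ and define
\begin{equation*}
h(x) \coloneqq \floor*{\frac{(\sigma x) \bmod 2^w}{2^{w-\ell}}},
\end{equation*}
where $m = 2^\ell$. This is one multiplication, one reduction modulo a power of two, and one shift on $O(\log n)$-bit integers. Sampling and evaluating $h$ therefore takes $(\log n)^{O(1)}$ time, which settles efficiency.

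For almost-additivity, write $\sigma x \bmod 2^w = h(x)2^{w-\ell} + \rho_x$ with $0 \le \rho_x < 2^{w-\ell}$, and similarly for $y$. Then $\sigma(x+y) \bmod 2^w$ equals $h(x)2^{w-\ell}+h(y)2^{w-\ell}+\rho_x+\rho_y$ minus $0$ or $2^w$. Since $\rho_x+\rho_y < 2\cdot 2^{w-\ell}$, taking the top $\ell$ bits gives
\begin{equation*}
h(x+y) \equiv h(x)+h(y)+\{0,1\} \pmod{m}.
\end{equation*}
Reading values in $[m]$, or in $\{0,\dots,m-1\}$ up to a shift, the difference $h(x)+h(y)-h(x+y)$ lies in $\{0,1,-m,-m+1,m,m+1\}$. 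That set has constant size, so it can serve as $\Lambda$.

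Pairwise independence is the main obstacle, because plain multiply-shift only gives universality, $\Pr[h(x)=h(y)] = O(1/m)$, not a joint bound on $(h(x),h(y))$. I would first try the standard fix of adding a random offset. Use a random, not necessarily odd, $\sigma$ together with a random $\tau \in [2^w]$, set $h(x) = \floor{((\sigma x+\tau)\bmod 2^w)/2^{w-\ell}}$, and redo the additivity computation with $\tau$ counted twice on the left. That changes $\Lambda$ by only a constant shift.

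For the joint bound, fix distinct $x,y$ and write $x-y = 2^s u$ with $u$ odd. The pair $(\sigma x+\tau, \sigma(x-y))$ modulo $2^w$ has its first coordinate uniform and independent of the second, because $\tau$ is uniform. The second coordinate $\sigma(x-y) \bmod 2^w$ is uniform over multiples of $2^s$. When $s < w-\ell$, its top $\ell$ bits are uniform, so $h(y)$ is determined up to a carry of $\pm1$ from $h(x)$ and the top bits of the difference. This gives $\Pr[h(x)=i, h(y)=j] = O(1/m^2)$. If the large-$s$ case turns out to be problematic, I would instead quote the formal analysis of this family in~\cite{KopelowitzPP16}, as the statement suggests, rather than reprove it. That is the step I expect to require the most care.
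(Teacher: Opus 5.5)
There is a genuine gap in the pairwise-independence step. With your word size the bound does not merely need care in the large-$s$ case; it is false there. You only require $2^w\ge 8n$, but the difference of two keys in $[\pm n]$ can be a power of two as large as $2n$, so $s$ can exceed $w-\ell$ once $m$ is large. Concretely, take $m=n=2^k$ (allowed since $m\le n$), the minimal $w=k+3$, and the keys $x=n$, $y=-n$. Then $\sigma(x-y)=2^{k+1}\sigma\equiv 0 \pmod{2^w}$ whenever $4\mid\sigma$, which happens with probability $1/4$. In that event $\sigma x+\tau\equiv\sigma y+\tau$, so $h(x)=h(y)$. Since $(\sigma x+\tau)\bmod 2^w$ is uniform and independent of $\sigma$, this gives $\Pr[h(x)=i \text{ and } h(y)=i]\ge 1/(4m)$ for every $i$, far above $O(1/m^2)$.

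Falling back on \cite{KopelowitzPP16} cannot rescue a bound that fails for the family as you parameterized it; the paper cites that work only for almost-additivity. Your odd-multiplier variant is no better: it has $h(0)=0$ deterministically, so it violates the joint bound for every pair $(0,y)$.

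The fix is a larger word. Choose $w$ with $2^{w-\ell}\ge 2n$, i.e.\ $2^w\ge 2nm$. Then every nonzero difference of keys in $[\pm n]$ satisfies $2^s\le |x-y|\le 2n\le 2^{w-\ell}$, so $s\le w-\ell$. The top $\ell$ bits of $\sigma(x-y)\bmod 2^w$ are then uniform, and your carry argument gives $\Pr[h(x)=i \text{ and } h(y)=j]\le 2/m^2$ for all distinct keys. Because $m\le n$, you still have $w=O(\log n)$, so efficiency is unaffected; this is the usual word-size requirement of the multiply--add--shift scheme.

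Two smaller points on almost-additivity. First, without the offset, your carry computation actually gives $h(x+y)=h(x)+h(y)+c-\epsilon m$ with $c,\epsilon\in\{0,1\}$. Hence $h(x)+h(y)-h(x+y)\in\{0,-1,m,m-1\}$; the set you list has the sign of the carry flipped.

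Second, with the offset, the extra term is not a fixed constant but $h(0)=\lfloor\tau/2^{w-\ell}\rfloor$, which is uniformly random. The same computation gives $h(x)+h(y)-h(x+y)\in h(0)+\{-1,0,1\}+\{-m,0,m\}$, so $\Lambda$ can only be fixed after $h$ is sampled, and you should say so explicitly. This is unavoidable rather than a flaw of your construction. Setting $x=y=0$ forces $h(0)\in\Lambda$, while the pairwise bound applied with $x=0$ forces $\Pr[h(0)=i]=O(1/m)$ for every $i$; so no fixed constant-size $\Lambda$ can work once $m$ is large. An $h$-dependent $\Lambda$ is also how the paper uses the lemma: it obtains $\Lambda$ together with $h_2$ and adds $-h_2(0)$ to the offset set. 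One consequence is that the differences can reach about $2m$, giving $\Lambda\subseteq[\pm 2m]$ rather than $[\pm m]$; nothing in the paper relies on the tighter range.
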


%% file: sections/real-to-real-conv.tex
\section{Reduction from Real 3SUM to Real Convolution 3SUM} \label{sec:real-to-real-conv}

In this section, we prove the first step of our real-to-integer reduction, namely the following theorem.

\begin{theorem} \label{thm:real-to-real-conv}
    If Real Convolution 3SUM can be solved in time $O(n^{2-\epsilon})$, for some constant $\epsilon > 0$, then Real 3SUM can be solved in time $O(n^{2 - \Omega(\epsilon^2)})$.
\end{theorem}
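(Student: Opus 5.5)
We mimic the integer reduction from 3SUM to Convolution 3SUM (Pătrașcu, Chan--He), replacing linear hashing by \emph{rounding into buckets by rank}. Sort $A$ and split it into $m=n^{\delta}$ consecutive \emph{buckets} $B_1,\dots,B_m$ of $n/m$ elements each, where $\delta=\Theta(\epsilon)$ is chosen at the end. Each $a\in A$ gets the bucket index $\iota(a)\in[m]$ and a real value. If $a+b+c=0$ with $a\in B_i$ and $b\in B_j$, then $c$ lies in a range determined by the bucket boundaries. This range need not be a single bucket, so some solutions are not index-aligned. This is exactly the scale problem from the overview.

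The key structural step, which we expect to be \cref{lem:reduction-to-clustered}, is a decomposition of the bucket triples. Call $(i,j,k)$ \emph{active} if some $a\in B_i$, $b\in B_j$, $c\in B_k$ has $a+b+c=0$ up to bucket-boundary slack. Each $c$-range for fixed $(i,j)$ meets a contiguous run of buckets. By monotonicity, the boundary sums move monotonically as $i$ or $j$ increases, so the number of \emph{boundary} triples, where the $c$-range straddles a bucket boundary, is only $O(m^2)$. Each bucket pair contributes $O(1)$ interior triples plus boundary ones.

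We handle interior triples with the convolution structure. After sorting, the map $(i,j)\mapsto k(i,j)$ is monotone in both coordinates. We want to re-index buckets by integers $x_i,y_j,z_k$ with $x_i+y_j+z_k=0$ exactly for active triples, which is the convolution condition. A general monotone staircase does not have this form. So we first do a coarse universe reduction: choose $\Delta=\max|A|/n^{\delta}$-type scales and round values to multiples of $\Delta$. The rounded index $\lfloor a/\Delta\rfloor$ is additive up to $\{-2,\dots,0\}$, which is computable by binary search with comparisons and additions. Elements whose magnitude is far below $\Delta$ are separated and treated recursively at a finer scale. This gives $O(\log)$-many geometric scales. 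Cross-scale triples are forced to have two elements of nearly equal scale and one much smaller, and these we handle by the small-element recursion.

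Within one scale, we create a Convolution 3SUM instance on indices $[\pm m]$. Position $z$ holds the elements with rounded index $z$, shifted by the constant offsets. Since a position may hold up to $n/m$ elements, we split into $(n/m)^{3}$ subinstances by choosing one element per position via a random or round-robin assignment. Instead, we take the standard route: enumerate triples of sub-slots, giving $(n/m)^2$ instances of length $m$ after pairing positions. Alternatively, we list solutions when positions are overloaded, using few false positives. The boundary triples and the dense positions, namely those with more than $n/m$ elements, are brute-forced in $O(m^2\cdot (n/m)\cdot\log)$ or $O(n^{2-\delta})$ time. The total cost is $(n/m)^{2}\cdot m^{2-\epsilon}+n^{2-\delta}$. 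With $m=n^{1-\delta'}$, this gives $n^{2-\Omega(\epsilon\delta')}$, and optimizing $\delta'=\Theta(\epsilon)$ yields the claimed $n^{2-\Omega(\epsilon^2)}$.

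The main obstacle is the scale-separation step. We must guarantee, without flooring, that solutions spanning different scales are few or structured, and that every solution is caught either by a convolution instance or by brute force. We would first try a dyadic partition by $|a|$ relative to $\max|A|$. Within one dyadic range, rounding to $\Delta$ is bounded by binary search. For cross-range solutions, we would argue that the two largest elements must share a range and the third is negligible. These we treat as approximate instances at the larger scale, charging the small element to a neighboring convolution position.
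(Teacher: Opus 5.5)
You have a genuine gap. The proposal never produces convolution indices that are at once polynomially bounded, approximately additive, and lightly loaded. Neither of your two devices, global rounding to multiples of $\Delta$ and dyadic separation by magnitude, can provide them.

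The first problem is that magnitude is the wrong notion of scale. Take any 3SUM instance $A_0\subseteq[-1,1]$ and let $A=(1+\eta A_0)\cup(2+\eta A_0)\cup(-3+\eta A_0)$ with $\eta$ tiny. This is as hard as $A_0$, and every element has magnitude $\Theta(1)$, so nothing is ``far below $\Delta$''. Yet rounding to multiples of $\Delta=\max|A|/m$ puts all $n$ elements into $O(1)$ positions, and brute-forcing the dense positions costs $\Theta(n^2)$, not $O(n^{2-\delta})$. Also, there can be $n$ nonempty magnitude classes, not $O(\log n)$.

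What \cref{lem:reduction-to-clustered} actually does is separate by \emph{local density}:
\begin{itemize}
\item $\Delta$ is the minimum width of an interval containing $s=\lceil n^\delta\rceil$ consecutive elements.
\item The layers $A_k$ are the elements covered when every element with at least $t_k$ neighbors within distance $\Delta$ is covered by disjoint width-$\Delta$ intervals.
\item Solutions reaching into a sparser layer are found by enumerating $c$ and an interval $J\in\mathcal I_k$, then scanning the fewer than $t_{k+1}$ candidates in $-(c+J)$.
\item Otherwise, a layer is either solved directly (when small) or emitted as a $(g,s)$-modeled instance whose model points are $(\Delta,O(1))$-spaced.
\end{itemize}

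The second problem remains even for such a well-spaced modeled instance: $\lfloor x/\Delta\rfloor$ is not a usable index. The $g$ model points may span a range $R$ that is unbounded in terms of $n$, e.g.\ $\{i\Delta+j\beta : |i|,|j|\le\sqrt g\}$ with $\beta/\Delta$ astronomically large. Merely computing the rounded integers then takes $\Omega(\log R)$ comparisons, and you cannot reduce them to a small universe. Such a set still has $\Theta(g^2)$ approximate solutions, so listing them is not an option either.

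This step needs the paper's central idea, which your proposal lacks entirely:
\begin{itemize}
\item Split $\phi(A)$ into popular and unpopular points by approximate energy (\cref{lem:decomp-pop}).
\item Approximate solutions touching an unpopular point number only $O(g^2/\sqrt K)$ by \cref{lem:low-energy-few-solutions}, so they are listed.
\item Popular points admit a low-rank approximate subset sum cover (\cref{lem:subset-sum-cover}).
\item Frank--Tardos applied to that cover yields $(K\log g)^{O(1)}$-bit integers that preserve approximate 3SUM relations (\cref{lem:hash-popular-to-int}).
\item Linear hashing then compresses these into indices in $[\pm 2g]$ for the Real Convolution 3SUM instance.
\end{itemize}

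Your bucket-by-rank staircase does not substitute for this; as you note yourself, a monotone staircase need not have the form $x_i+y_j+z_k=0$. The cost accounting is also inconsistent: you switch between $m=n^\delta$ and $m=n^{1-\delta'}$, and between $(n/m)^3$ and $(n/m)^2$ subinstances.
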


We prove \Cref{thm:real-to-real-conv} in two steps. The first step resembles a rounding of our $n$ real numbers to $g \le n$ real numbers such that solutions are approximately preserved. This is formalized in the following definition. Think of $g = n^{1-\Theta(\epsilon)}$ and $s = n^{\Theta(\epsilon)}$, as this will be what we choose later.

\newcommand{\model}{\phi}
\begin{definition}\label{def:modeled}
Let $A \subseteq \Real$. For $\model:A\to\Real$ and $\Delta \in \Realnneg$, we call $(\model, \Delta)$ a \emph{$(g, s)$-modeling} of $A$ if
\begin{enumerate}[label=(\roman*)]
    \item $\abs{\model(A)} \leq g$ and $\model(A)$ is $(\Delta, O(1))$-spaced,
    \item $\abs{\model^{-1}(x)} \le s$ for all $x \in \Real$,
    \item any $a,b,c \in A$ with $a + b + c = 0$ satisfy $\model(a) + \model(b) + \model(c) = 0 \pm \Delta$.
\end{enumerate}
We say that a Real 3SUM instance $A$ is \emph{$(g,s)$-modeled} if we have access to a $(g, s)$-modeling of $A$.
\end{definition}

In the upcoming lemma, we reduce any Real 3SUM instance to a small number of modeled instances.
Notice that rounding numbers to integer multiples of $\Delta$ for any choice of $\Delta$ would already preserve all 3SUM solutions up to $\pm 2\Delta$.
However, we cannot actually round real numbers in our limited model of computation.
We show that with appropriate choices of $\Delta$ and a small overhead in the number of instances and $g$, we can construct functions $\phi : A \to A$ that satisfy the properties from \Cref{def:modeled}.
Furthermore, every $b \in A$ satisfies $\phi(b) = a$ if and only if $b \in [a, a + \Delta/2]$. This way, property (iii) is satisfied by construction.

To simultaneously satisfy properties (i) and (ii), we also need to separate different scales present in $A$, as otherwise, many numbers might be rounded to the same element of $A$.
Our reduction hence works in stages, where in each stage we detect one scale with many elements for which rounding to elements of $A$ satisfies properties (i) and (ii).
We define scales in such a way that detecting solutions across scales can be done in subquadratic time.

\begin{lemma}[Reduction to Modeled Instances] \label{lem:reduction-to-clustered}
    Let $\gamma, \delta \in (0,1]$. Then any Real 3SUM instance $A$ can be reduced in time $\tilde O(n^{2 - \delta\gamma/2})$ to at most $n^{3\gamma}$ instances that are $(9n^{1 - \delta/2},n^\delta)$-modeled.
\end{lemma}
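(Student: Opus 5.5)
The plan is to build the modelings by a greedy rounding against $A$ itself, and to separate scales by \emph{local density}, peeling off one dense scale per stage.

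\textbf{Rounding primitive.} Sort $A$. For a threshold $\Delta$, scan left to right: the current leftmost unassigned element $a$ becomes a representative, and every $b\in[a,a+\Delta/6]$ gets $\model(b)=a$. This uses only comparisons and additions. Representatives are more than $\Delta/6$ apart, so $\model(A)$ is $(\Delta,O(1))$-spaced. Moreover $\model(b)=b\pm\Delta/6$, so any exact solution $a+b+c=0$ gives $\model(a)+\model(b)+\model(c)=0\pm\Delta/2$. Hence property (iii) of \cref{def:modeled} holds for free, for every $\Delta$ and every subset of $A$. (The statement's $\Delta/2$ window is adjusted here by a constant.) The whole difficulty is properties (i) and (ii): at most $9n^{1-\delta/2}$ representatives and at most $n^{\delta}$ preimages each.

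\textbf{Scale parameter and one stage.} For each $a\in A$ define $\lambda(a)$ as the length of the shortest window containing $a$ and $n^{\delta/2}$ elements of $A$. This is computable after sorting by a sliding window, up to a factor $2$. Bucket the elements by $\lambda$ into geometric classes with ratio $n^{\gamma}$. In a stage, I would take a class (or a window of $O(1)$ consecutive classes) holding a maximal number of the remaining elements, and set $\Delta$ to its lower end. Restricted to this class $S$:
\begin{itemize}
\item Every cluster of width $\Delta/6\le\lambda$ has at most about $n^{\delta/2}\le n^{\delta}$ elements of $A$, which gives (ii).
\item Elements of $S$ have $\lambda\le n^{\gamma}\Delta$, so they are densely packed. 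A charging argument should bound the number of nonempty width-$\Delta$ clusters by $O(|S|/n^{\delta/2})$, and the constant $9$ would come from the overlap of the covering windows. This gives (i).
\end{itemize}
This class is output as one modeled instance, or as $O(1)$ instances covering the pairs and triples of sub-classes, and then removed. I would arrange the classes so that each stage removes the heaviest remaining class, which bounds the number of stages, and hence of instances, by $n^{O(\gamma)}$. I am aiming for $n^{3\gamma}$, with one factor $n^{\gamma}$ per coordinate of a triple of classes within a window.

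\textbf{Cross-scale solutions.} Solutions $a+b+c=0$ whose elements lie in classes more than $n^{\gamma}$ apart in $\lambda$ are handled directly. Say $\lambda(c)\ge n^{\gamma}\lambda(b)\ge n^{\gamma}\lambda(a)$, or any ordering where one element is much coarser. I would then enumerate pairs in a way that exploits the density difference. Fix the finest element $a$ and the element $c$ of coarsest scale. Since $-a-c$ must equal $b$ exactly, a binary search suffices. To save over quadratic time, one instead groups the fine elements into their width-$\lambda$ windows of $n^{\delta/2}$ elements each. For each coarse $c$ and each fine window, one search in the coarse-scale neighbourhood costs amortized $n^{-\delta\gamma/2}$ per pair, giving the claimed $\tilde O(n^{2-\delta\gamma/2})$.

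This cross-scale step is the main obstacle. Both the exact exponent and the right definition of $\lambda$ (perhaps $n^{\delta\gamma/2}$-th neighbour distances at several levels rather than a single one) must be tuned so that the per-stage charging for (i) and the cross-class enumeration cost match simultaneously. I would first try to prove the enumeration bound by showing that, for classes differing by a factor $n^{\gamma}$, each width-$\lambda(c)$ window of coarse elements interacts with only $n^{1-\delta\gamma/2}$ fine elements. If that fails, I would adjust the class ratio accordingly.
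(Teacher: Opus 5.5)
\textbf{There is a genuine gap: the cross-scale step, which is the heart of the lemma, fails with a single-level scale parameter.} Your rounding primitive and properties (ii) and (iii) are fine, but you flag this step as the main obstacle and leave it unproven. Your $\lambda$ measures each element's density at its own scale, always with the same count $n^{\delta/2}$. So $\lambda(b)\ge n^{\gamma}\lambda(a)$ only says that a width-$\lambda(a)$ window around $b$ holds fewer than $n^{\delta/2}$ elements. The point $b$ may still sit in a tight clump of $n^{\delta/2}-1$ points: take far-apart clumps of that size, and all their points are ``coarse''. Then for a coarse $c$ and a fine window $J$ holding $n^{\delta/2}$ elements, the range $-(c+J)$ can contain almost $n^{\delta/2}$ candidates. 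Your search then costs as much as trying every $a\in J$, so there is no $n^{-\delta\gamma/2}$ saving.

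The saving needs two densities measured at one common scale, with thresholds differing by a factor $n^{\delta\gamma/2}$. This is what the paper does:
\begin{itemize}
\item Each round fixes one $\Delta$, the minimum width of an interval $[a_j,a_{j+s}]$ of sorted elements with $s=\lceil n^\delta\rceil$. Every width-$\Delta$ window then holds at most $s+1$ elements, which gives (ii), and each round removes at least $s$ elements.
\item For $t_k=n^{\delta(1-\gamma k/2)}$, it covers all elements having at least $t_k$ elements within distance $\Delta$ by disjoint width-$\Delta$ intervals $\mathcal I_k$ of average load at least $t_k/9$. This gives sets $A_0\subseteq A_1\subseteq\cdots$.
\item By pigeonhole it picks $k<\lceil 1/\gamma\rceil$ with $|A_{k+1}|\le n^\gamma|A_k|$.
\item For solutions with $a\in A_k$ and $b\notin A_{k+1}$, it enumerates $c\in A$ and $J\in\mathcal I_k$. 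Fewer than $t_{k+1}$ candidates remain in $-(c+J)$, since otherwise all of them would lie in $A_{k+1}$. The cost is $O(n\,t_{k+1}/t_k)=O(n^{1-\delta\gamma/2})$ per removed element of $A_k$.
\end{itemize}

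\textbf{Your bound for (i) also fails.} With $\Delta$ at the lower end of a class of ratio $n^\gamma$, elements whose $\lambda$ is near $n^\gamma\Delta$ can be spaced about $n^{\gamma-\delta/2}\Delta$ apart, for example in an arithmetic progression. Then $\Theta(n)$ such elements produce about $n\min\{1,n^{\gamma-\delta/2}\}$ width-$\Delta$ clusters, far more than $9n^{1-\delta/2}$. Your charging to $O(|S|/n^{\delta/2})$ is off by a factor $n^\gamma$. Putting $\Delta$ at the top of the class and charging clusters to all of $A$ does give $g=O(n^{1-\delta/2})$. But then a cluster may hold about $n^{\gamma+\delta/2}$ elements, which is at most $n^\delta$ only if $\gamma\le\delta/2$.

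\textbf{Your bound on the number of instances fails too.} Reals can live at arbitrarily many scales, e.g.\ $a_i\approx 2^i$. Then there are $\Omega(n/\log n)$ nonempty $\lambda$-classes, and removing the heaviest one per stage can produce that many instances rather than $n^{3\gamma}$. The paper avoids this with a brute-force branch you are missing:
\begin{itemize}
\item If $|A_{k+1}|\le n^{1-2\gamma}$, it solves $A_{k+1}$ by quadratic 3SUM and charges $|A_{k+1}|^2\le n^{1-\gamma}|A_k|$ to the removed elements.
\item Only otherwise does it emit $A_{k+1}$ as a modeled instance, with $\phi$ mapping to the starting points of $\mathcal I_{k+1}$, so $g\le 9n/t_{k+1}\le 9n^{1-\delta/2}$.
\item Each emitted instance removes $|A_k|\ge n^{1-3\gamma}$ elements, which is what bounds the number of instances by $n^{3\gamma}$.
\end{itemize}
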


\begin{proof}
    Our algorithm proceeds in rounds, where after each round, we are guaranteed to remove at least $s \coloneqq \lceil n^\delta\rceil$ items from $A$. Hence, there are at most $n^{1- \delta}$ rounds, and we can spend linear time per round.
    
    We first sort the instance $A = \{a_1 < a_2 < \ldots\}$ in increasing order. Among all intervals of the form $[a_j, a_{j + s}]$, pick the interval $I$ with the minimum width $\Delta \coloneqq a_{j + s} - a_j$.
    We will remove at least all elements in $I$, possibly more.
    For an element $a$, we write $N(a) \coloneqq |A \cap [a-\Delta, a+\Delta]|$ for the number of elements within distance $\Delta$ to $a$.
    
    For any threshold $t$, we can construct a collection of disjoint width-$\Delta$ intervals $I_1, \ldots, I_\ell$ of width $\Delta$ such that they cover at least $\ell t / 9$ elements in total, while guaranteeing that each element $a$ with $N(a) \ge t$ is covered. For instance, iterate over $A$ in increasing order and greedily create a new interval starting at $a_i$ if it covers an element in $N(a)$ for some $a$ with $N(a) \ge t$. Three consecutive intervals cover the whole neighborhood, so out of three consecutive intervals, at least one must cover $t / 3$ elements.

    Construct such intervals for several thresholds $t_0 \ge t_1 \ge t_2 \ge \ldots$ defined by $t_k \coloneqq n^{\delta(1 - \gamma k /2)}$. Let $\mathcal{I}_k$ denote the set of intervals for $t_k$ and let $A_k$ be all elements of $A$ that are covered by $\mathcal{I}_k$. Denote by $S_k$ the set of all starting points of these intervals $\mathcal{I}_k$, i.e., $S_k \coloneqq \set{\min (I \cap A_k) \mid I \in \mathcal{I}_k}$.

    Now there must be $0 \le k < \ceil{1 / \gamma}$ with $|A_{k+1}| \le n^\gamma |A_k|$, otherwise $|A_{\ceil{1 / \gamma}}| > n^{\gamma \ceil{1/\gamma}} \ge n$. We wan to eliminate all elements in $A_k$ in this round, which includes all elements in $I$.
    To remove all elements in $A_k$, we need to detect all solutions $a + b + c = 0$ with $a \in A_k$. There are two cases:

    \begin{enumerate}
        
        \item\emph{If $b \not\in A_{k-1}$ or $c \not\in A_{k+1}$:} We assume $b \notin A_{k+1}$ since the case $c \notin A_{k+1}$ is symmetric.
        We enumerate all $c \in A$ and all intervals $J \in \mathcal{I}_k$. To form a solution with $a \in J$, the last element~$b$ needs to be in $-(c + J)$. This range necessarily contains less than $t_{k+1}$ elements---otherwise for all $b \in -(c + J) \cap A$ we have $N(b) \ge t_{k+1}$ and it would be covered in $A_{k+1}$. Hence, we can test all elements in this range in time $\tilde O(|A|\cdot|S_k|\cdot t_{k+1} )$.
    
        As we are about to eliminate $|A_k| \ge |S_k| t_k / 9$ elements in this iteration, per eliminated element this runs in time 
        \begin{equation*}
            \tilde O\parens*{\frac{n \cdot |S_k|t_{k+1} }{ |S_k|t_k / 9}} = \tilde O\parens*{n^{1 - \delta\gamma/2}}.
        \end{equation*}
    
        \item\emph{If $b, c \in A_{k+1}$:}
        As $a \in A_k \subseteq A_{k+1}$, it suffices to check for solutions completely contained in~$A_{k+1}$. We distinguish two subcases:

        \begin{enumerate}[label=2.\arabic*]
        
            \item\emph{If $|A_{k+1}| \le n^{1 - 2\gamma}$:}
            In this case we can just run a standard 3SUM algorithm in time $O(|A_{k+1}|^2)$. By the choice of $k$, we have $|A_{k+1}| \le n^\gamma|A_k|$. So the time per eliminated element in $A_k$ is 
            \begin{equation*}
                O\parens*{\frac{|A_{k+1}|^2}{|A_k|}} = O\parens*{|A_{k+1}|n^\gamma} = O\parens*{n^{1-\gamma}}.
            \end{equation*}
            
            \item\emph{If $|A_{k+1}| > n^{1 - 2\gamma}$:}
            In this case, we turn $A_{k+1}$ into a modeled instance, remove it, and return it as one of our instances. More specifically, by the construction of $S_{k+1}$, we can construct a $(g,s)$-modeling of $A_{k+1}$ as follows. Define $\model:A_{k+1}\to S_{k+1}$ where $\model(a)$ is the starting point in $S_{k+1}$ of the interval in $\mathcal{I}_{k+1}$ that contains $a$. By definition, $\phi(A_{k+1})$ is $(\Delta, O(1))$-spaced, and by the choice of $\Delta$, any width-$\Delta$ interval covers at most $s$ elements. Further, $g = |S_{k+1}| \le 9n / t_{k+1} \le 9n^{1 - \delta/2}$. Also, for any triple $a,b,c \in A_{k+1}$ with $a + b + c = 0$, we have $\model(a) + \model(b) \in [-\model(c) - 2\Delta, -\model(c) + \Delta]$.
            Therefore, $(\phi, 2\Delta)$ is a $(g,s)$-modeling.
        
            Notice that whenever we reach this case, we remove at least $|A_k| \ge |A_{k+1}| / n^\gamma \ge n^{1 - 3 \gamma}$ elements. So in total we construct at most $n^{3\gamma}$ instances.
        \end{enumerate}

    \end{enumerate}
    
    \paragraph{Running Time} In total, there are at most $n^{1-\delta}$ iterations, as we eliminate at least $s$ elements per iteration. Constructing the sets $A_1, \dots, A_{\ceil{1/\gamma}}$ takes time $\tilde O(n/\gamma) = \tilde O(n)$. The total time spent removing elements in case 1 is at most $\tilde O(n^{2-\delta\gamma/2})$. In case 2.1, we spend at most $O(n^{2-\gamma})$ removing elements. In case 2.2, we construct a modeling in linear time. Overall, the $\tilde O(n^{2 - \delta\gamma/2})$ term dominates.
\end{proof}

In the second step, we show how to reduce a modeled instance to a Real Convolution 3SUM instance.
We invoke our machinery based on the Frank--Tardos theorem and additive combinatorics from \Cref{sec:toolkit}, to turn a subset of our model numbers $\phi(A)$ into integers. These integers form the indices for the Real Convolution instance.
All remaining model numbers contribute only few approximate solutions. We can list all of them in time $O(g^2)$ and check for true solutions within the buckets.

\begin{lemma} \label{lem:reduction-to-integer-modeled}
    Suppose that Real Convolution 3SUM is in time $O(n^{2 - \epsilon})$, for some constant \makebox{$\epsilon > 0$}.
    Then any $(g,s)$-modeled Real 3SUM instance $A$ can be solved in time $\tilde O(g^2 + s^3 g^{2-\Omega(\epsilon)})$.
\end{lemma}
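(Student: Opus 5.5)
Write $B \coloneqq \model(A)$, so $|B| \le g$ and $B$ is $(\Delta,O(1))$-spaced. Every true solution $a+b+c=0$ in $A$ yields an approximate solution $\model(a)+\model(b)+\model(c) = 0 \pm \Delta$ in $B$. The plan is to split $B$ into popular and unpopular parts. Approximate solutions that touch an unpopular element are few, so we list them explicitly and check the at most $s^3$ preimage triples of each. Approximate solutions inside the popular part are hashed to integer indices with \cref{lem:hash-popular-to-int}, and the resulting instance is solved with the assumed Real Convolution 3SUM algorithm.

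\textbf{Step 1 (decomposition).} Apply \cref{lem:decomp-pop} to $B$ with parameter $K = g^{\eta}$, for a small $\eta = \Theta(\epsilon)$, at approximation $C\Delta$ for a suitable constant $C$. This gives $B = P \sqcup U$ with $E^{\pm C\Delta}(B,U) < |B|^3/K$, and every $p\in P$ has $r_{P+P-P}^{\pm C\Delta}(p) \ge |B|^2/4K \ge |P|^2/4K$; the deterministic version costs $\tilde O(g^2)$. Bounding $E^{\pm\Delta}(U,B)$ by the energy at $C\Delta$, \cref{lem:low-energy-few-solutions} shows that the triples in $B^3$ with $x+y+z=0\pm\Delta$ and some coordinate in $U$ number $O(\sqrt{g^3/K\cdot g}) = O(g^{2}/\sqrt K)$. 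We list them in $\tilde O(g^2 + g^2/\sqrt K)$ time: for each $(x,y)$, binary search for the $O(1)$ candidates $z$ in a sorted copy of $B$. For each listed triple we check all $\le s^3$ preimage triples in $A$, which costs $s^3 g^{2-\Omega(\epsilon)}$.

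\textbf{Step 2 (popular part).} Apply \cref{lem:hash-popular-to-int} to $P$ with the above $K$, obtaining $h : P \to [\pm N]$ with $N = 2^{(K\log g)^{O(1)}}$. These indices are far too large, so compose with linear hashing (\cref{lem:linear-hashing}) into $[m]$ for $m \approx g$, absorbing the $O(1)$ offsets $\Lambda$ by trying each $\lambda$ separately. For each index $i$, the cell contains $O(1)$ elements of $P$ by property (iii) of $h$. In expectation it also contains $O(1)$ extra elements from hash collisions, and heavy cells can be split and handled by brute force. Each cell carries up to $s$ elements of $A$ per model. We build a constant number of Real Convolution 3SUM instances of length $O(m s)$: blow up each index $i$ into $s$ subindices, placing the $j$-th preimage of each element at position $is + j$ for $j$ in a small range. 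Index sums $i+i'+i''=0$ then stay consistent through shifts that we guess, using $O(s^2)$ offset choices or a random alignment. Each instance is filled with the real values of $A$; an index triple summing to zero together with a zero real sum is a genuine 3SUM solution, and soundness only needs the real sum to vanish exactly.

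Completeness comes from property (i) of $h$ ($\Omega(1)$ success probability, amplified by $O(\log g)$ repetitions) together with almost-additivity. Correctness needs no analysis of false positives, because the convolution instance checks the exact real equation. With $s^{O(1)}$ instances of size $O(gs)$, the running time is $s^{O(1)} (gs)^{2-\epsilon}$, which fits the claimed $s^3 g^{2-\Omega(\epsilon)}$ only if the blowup is handled carefully.

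\textbf{Main obstacle.} I expect the main obstacle to be Step 2: turning the triple-indexed structure (hashed model index plus position within the preimage) into a single convolution with the correct $s$-dependence. The alternative I would try first is to avoid blowing up indices. Instead, split $\model^{-1}$ into $s$ layers (the $j$-th preimage of each model) and run one convolution instance per layer triple $(j_1,j_2,j_3)\in[s]^3$, each of length $O(m)$, feeding three sequences into a three-sequence variant of Convolution 3SUM; the standard reduction to the one-sequence version costs a constant factor. This gives $s^3\cdot O(g^{2-\epsilon})$ plus the hashing time $\tilde O(gK + (K\log g)^{O(1)})$, which is subquadratic for $\eta$ small. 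Combined with Step 1, the total is $\tilde O(g^2 + s^3 g^{2-\Omega(\epsilon)})$.
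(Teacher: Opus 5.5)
Your plan follows the paper's proof closely. You decompose $\phi(A)$ into popular and unpopular parts and explicitly list the $O(g^2/\sqrt K)$ approximate solutions that touch $U$, checking their preimages; your $s^3$ triples instead of the paper's $s^2$ pairs plus a lookup is still within budget. For $P$ you compose \cref{lem:hash-popular-to-int} with linear hashing, brute-force heavy buckets, and run Real Convolution 3SUM on the rest. Your enumeration of layer triples with three-sequence instances, together with enumerating the $O(1)$ target offsets, is a deterministic counterpart of the paper's ``one uniformly random representative per bucket, repeated $O((st)^3\log(gs))$ times''. It has a side benefit: it also catches solutions with two elements in the same cell, which a single sequence holding one value per position cannot represent.

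The step that does not go through as written is the bookkeeping of cell loads, which is exactly where your final count $s^3\cdot O(g^{2-\epsilon})$ comes from.

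\begin{itemize}
\item \emph{Layering.} A cell is never guaranteed to hold a single model. Property (iii) of \cref{lem:hash-popular-to-int} already allows $O(1)$ models per $h_1$-value, and $h_2$ adds further collisions. So layering by ``the $j$-th preimage of each model'' does not put one element per cell in each layer. You must layer by the position of an element inside its cell, and then the number of layers is $s$ times the load bound, not $s$.
\item \emph{Constant threshold.} That load bound cannot be a constant $c$. With $m\approx g$, pairwise independence only bounds the \emph{expected} number of colliding models by $O(1)$. The best you can guarantee is that an $O(1/c)$ fraction of the models lies in cells holding more than $c$ models. Brute-forcing their elements, each of $\le s$ preimages against all $\le gs$ elements of $A$, costs $O(g^2s^2/c)$. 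For constant $c$ this is not within $\tilde O(g^2+s^3g^{2-\Omega(\epsilon)})$ when $s$ is a small polynomial in $g$.
\end{itemize}

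The fix is the paper's polynomial threshold. Call an element bad if its cell contains at least $st$ elements of $A$, with $t=g^{\beta}$. By Markov there are $O(gs/t)$ bad elements with constant probability, and they are handled in $O(g^2s^2/t)$ time. The good cells then need only $st$ layers, giving $(st)^3=s^3g^{3\beta}$ instances of length $O(g)$. With $K=g^{2\beta}$ and $\beta$ a sufficiently small constant multiple of $\epsilon$, every term becomes $\tilde O(g^2+s^3g^{2-\Omega(\epsilon)})$.

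Alternatively, hash into $m=g^{1+\beta}$ cells and brute-force every element whose cell contains two distinct $h_1$-values, at cost $O(s^2g^{2-\beta})$. You can then keep $O(s)$ layers, paying for instances of length $g^{1+\beta}$.
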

\begin{proof}
    Let $A$ denote the given 3SUM instance, and let $(\model, \Delta)$ denote the $(g,s)$-modeling of $A$. Run the popular/unpopular decomposition (\Cref{lem:decomp-pop}) on $\model(A)$, with threshold $\Delta$ and some parameter~$K$ to be determined later, to obtain a partition $\model(A) = P \sqcup U$ satisfying
    \begin{equation*}
        E^{\pm \Delta}(\phi(A), U) < \frac{|\phi(A)|^3}{K} \leq \frac{g^3}{K}
    \end{equation*}
    and, for each $a \in P$,
    \begin{equation*}
        r^{\pm \Delta}_{P+P-P}(a) \ge \frac{|\phi(A)|^2}{4K} \geq \frac{|P|^2}{4K}.
    \end{equation*}
    We distinguish two types of solutions $a + b + c = 0$ with $a,b,c \in A$:

    \begin{enumerate}
        \item \emph{(Partially Unpopular Solutions)}
        We first argue how to detect any solution $a, b, c$ for which at least one of $\model(a), \model(b), \model(c)$ is part of $U$. To this end, we list \emph{all} approximate solutions $x + y + z = 0 \pm\Delta$ for $x, y, z \in \model(A)$, where at least one of $x, y, z$ is part of $U$. Since $\model(A)$ is $(\Delta, O(1))$-spaced, each choice of $(x,y)$ leaves only $O(1)$ choices of $z$, so we can enumerate all candidates in time $O(g^2)$. But as $E^{\pm \Delta}(\phi(A),U) < g^3/K$, \Cref{lem:low-energy-few-solutions} implies that 
        we actually only find $O(g^2 / \sqrt{K})$ approximate solutions. For each such solution $x + y + z = 0 \pm  \Delta$, we enumerate all $a \in \model^{-1}(x)$ and $b \in \model^{-1}(y)$, and check if $-(a+b) \in A$, i.e., if they form a solution in $A$. In total, we enumerate at most $O(g^2 / \sqrt{K} \cdot s^2)$ pairs $(a,b)$.
    
        \item \emph{(Fully Popular Solutions)}
        To find solutions with all elements in $P$, we use the algorithm for Real Convolution 3SUM. We first apply \Cref{lem:hash-popular-to-int} to the popular elements $P$ with the parameter $4K$, and we receive a hash function $h_1 : P \to [\pm 2^{(K \log g)^{O(1)}}]$.
        
        The remaining algorithm is a standard application of linear hashing to reduce to a Real Convolution 3SUM instance, which we solve using the supposed algorithm.
        Apply \Cref{lem:linear-hashing} to obtain a second hash function $h_2 : [\pm 2^{(K \log g)^{O(1)}}] \to [g]$ as well as a constant-sized set $\Lambda$.
        Finally, for $a \in A$ define $h(a) \coloneqq h_2(h_1(\model(a))) - \lambda_a$, where $\lambda_a$ is a uniformly random element from $\Lambda \cup \set{-h_2(0)}$.
        Hence, $h$ is a function $A \to [\pm 2g]$.
        
        Combining \Cref{def:modeled,lem:hash-popular-to-int,lem:linear-hashing}, we claim that the following properties are satisfied.
        \begin{enumerate}[label=(\roman*)]
            \item For all $a,b,c \in A$ with $a + b + c = 0$, we have $\Pr[h(a) + h(b) + h(c) = 0] \ge \Omega(1)$.
            \item For all $a \in A$, the expected bucket size is $O(s)$, i.e., $\mathbb{E}\left[\abs{\set{b \in A : h(b) = h(a)}}\right] \le O(s)$.
        \end{enumerate}

        Indeed, for $a,b,c \in A$ with $a+b+c=0$, by the definition of modeling (\Cref{def:modeled}), we have $\model(a) + \model(b) + \model(c)=0 \pm \Delta$. After applying the subset sum cover and Frank--Tardos (\Cref{lem:hash-popular-to-int}), this implies $\Pr[h_1(\model(a)) + h_1(\model(b)) + h_1(\model(c)) = 0] \ge \Omega(1)$. Conditioned on this event, the almost-additivity of hashing from \Cref{lem:linear-hashing} gives 
        \begin{align*}
            h_2(0) &= h_2\big(h_1(\model(a)) + h_1(\model(b)) + h_1(\model(c))\big)\\
            &= h_2\big(h_1(\model(a)) + h_1(\model(b))\big) + h_2\big(h_1(\model(c))\big) + \lambda\\
            &= h_2\big(h_1(\model(a))\big) + h_2\big(h_1(\model(b))\big) + h_2\big(h_1(\model(c))\big) + \lambda + \lambda',
        \end{align*}
        for some $\lambda, \lambda' \in \Lambda$.
        Hence, $h(a) + h(b) + h(c) = 0$ in the event that $\lambda_a = -h_2(0)$, $\lambda_b = \lambda$ and $\lambda_c = \lambda'$, so property (i) holds.
        
        Regarding property (ii), \Cref{def:modeled} gives $\abs{\set{b \in A : \model(b) = \model(a)}} \le s$. After \Cref{lem:hash-popular-to-int}, this becomes $\abs{\set{b \in A : h_1(\model(b)) = h_1(\model(a))}} \le O(s)$. By pairwise independence of the linear hashing (\Cref{lem:linear-hashing}) and the union bound, we get $\Pr[h_2(x) = h_2(y)] \le O(1/g)$ for any distinct~$x, y$. By linearity of expectation, the expected number of $h_1$-values per bucket is constant and property (ii) holds.
        
        We call an element $a \in A$ \emph{bad} if $\abs{\set{b \in A : h(b) = h(a)}} \ge st$, for a parameter $t$ to be determined later. We detect solutions including at least one bad element separately from those without bad elements.
        \begin{enumerate}
            \item \emph{(Solutions with a Bad Element)}
            By Markov's inequality, property (ii) implies that the expected number of bad elements is at most $O(gs / t)$. Another application of Markov's inequality yields that the number of bad elements is at most $O(gs/t)$ with constant probability.
            Conditioned on this event, by enumerating all bad elements $a \in A$ and all $b \in A$, we can detect solutions involving bad elements in time $O(g^2s^2 / t)$.
            
            \item \emph{(Solutions without Bad Elements)}
            For the remaining \emph{good} elements, create a sequence $\Real^{[\pm 2g]}$ that contains in the $i$-th position an element chosen uniformly at random from $h^{-1}(i)$ (in case $h^{-1}(i) = \emptyset$, just place a large enough real number such that it is guaranteed not to appear in solutions). Use the Real Convolution 3SUM algorithm to solve this instance in time $O(g^{2-\epsilon})$. Since buckets have size $O(st)$, a solution $a,b,c \in A$ is preserved with probability at least $\Omega(1 / (st)^3)$ by property (i).
            By creating $O(s^3t^3\log (gs))$ such instances, we detect a good solution with high probability.
        \end{enumerate}
    \end{enumerate}

    \paragraph{Running Time}
    Computing the partition $\model(A) = P \sqcup U$ with \Cref{lem:decomp-pop} takes time $\tilde O(gK^2)$. In case 1 we spend $O(g^2)$ time computing approximate solutions on the model numbers, plus $O(s^2g^2/K)$ time enumerating solutions in $A$. In total, step 1 takes time $O\parens*{g^2 + s^2g^2/K}$.

    In case 2, we first spend time $\tilde O(gK + (K\log g)^{O(1)})$ to compute $h$ with \Cref{lem:hash-popular-to-int,lem:linear-hashing}. Finding solutions involving bad elements takes time $O(g^2s^2/t)$ with constant probability, and repeating it $O(\log (gs))$ times gives the same bound with high probability. For solutions with only good elements, we invoke the Real Convolution 3SUM algorithm $O(s^3t^3\log (gs))$ times on an instance of size $g$, hence taking time $\tilde O(s^3t^3g^{2-\epsilon})$. In total, detecting solutions with only popular model numbers takes time $\tilde O\parens*{gK^{O(1)} + g^2s^2/t + s^3t^3g^{2-\epsilon}}$.
    
    In total, the algorithm runs in time
    \begin{equation*}
        \tilde O\parens*{gK^{O(1)} + g^2 + g^2s^2/\sqrt{K} +  g^2s^2/t + s^3t^3\parens*{gK^{O(1)} + g^{2-\epsilon}}}.
    \end{equation*}
    Choosing $t = g^\beta$ and $K = g^{2\beta}$ for some constant $\beta > 0$, this becomes to
    \begin{equation*}
        \tilde O\parens*{g^2 + s^2 g^{2-\beta} + s^3 g^{2-\epsilon+O(\beta)} + s^3 g^{1+O(\beta)}}.
    \end{equation*}
    This is as claimed, $\tilde O\parens*{g^2 + s^3g^{2-\Omega(\epsilon)}}$, when choosing $\beta = \Theta(\epsilon)$ sufficiently small so that the $O(\beta)$ terms are at most $\epsilon/2$.
\end{proof}

\begin{proof}[Proof of \Cref{thm:real-to-real-conv}]
    First, call \Cref{lem:reduction-to-clustered} with parameters $\gamma, \delta$ to be determined later to obtain up to $n^{3\gamma}$ instances that are $(g,s)$-modeled for $g= 9n^{1-\delta/2}$ and $s= n^\delta$ in time $\tilde O(n^{2-\delta\gamma/2})$.
    Next, use \Cref{lem:reduction-to-integer-modeled} to solve each of them with the algorithm for Real Convolution 3SUM in time $\tilde O(g^2 + s^3g^{2-\Omega(\epsilon)})$.
    The total running time is
    \begin{equation*}
        \tilde O\parens*{n^{2 - \delta\gamma/2} + n^{3\gamma}\parens*{g^2 + s^3g^{2-\Omega(\epsilon)}}} = \tilde O\parens*{n^{2 - \delta\gamma/2} + n^{2 + 3\gamma-\delta} + n^{2 + 3\gamma + 2\delta - \Omega(\epsilon)}}.
    \end{equation*}
    Choosing $\delta = 4\gamma$ turns the expression into 
    \begin{equation*}
        \tilde O\parens*{n^{2-2\gamma^2} + n^{2-\gamma} + n^{2+11\gamma - \Omega(\epsilon)}}.
    \end{equation*}
    This becomes $n^{2-\Omega(\epsilon^2)}$ by setting $\gamma = \Theta(\epsilon)$ to a sufficiently small constant so that the $\Omega(\epsilon)$ term becomes at least $30\gamma$.
\end{proof}

%% file: sections/real-conv-to-int.tex
\section{Reduction from Real Convolution 3SUM to Integer 3SUM} \label{sec:real-conv-to-int}

In this section we give our reduction from Real Convolution 3SUM to Integer 3SUM. Together with \Cref{thm:real-to-real-conv}, this completes the proof of \Cref{thm:main}.

\begin{theorem}\label{thm:conv-to-integer-reduction}
If Integer 3SUM can be solved in time $O(n^{2-\epsilon})$, for some constant $\epsilon > 0$, then Real Convolution 3SUM can be solved in time \smash{$O(n^{2-\Omega(\epsilon^2)})$}.
\end{theorem}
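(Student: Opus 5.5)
Given a Real Convolution 3SUM instance $a \in \Real^{[\pm n]}$, the plan is to turn every real value into an integer via \Cref{lem:hash-popular-to-int}, and to combine it with the index, which is already an integer. A triple $(i,j,k)$ with $i+j+k=0$ and $a[i]+a[j]+a[k]=0$ then becomes a solution of a single Integer 3SUM instance. The encoding is $x_i = h(a[i])\cdot M + i$ with $M = \Theta(n)$ large enough that both coordinates separate. The difficulty is that the rounding threshold $\Delta$ must make the value set $\set{a[i]}$ $(\Delta,O(1))$-spaced, with every element popular, while the number of false positives (index-valid triples with $a[i]+a[j]+a[k]=0\pm 100\Delta$ but not exactly zero) stays below $n^{2-\Omega(\epsilon)}$. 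False positives are filtered using the listing-to-decision reduction (\cref{lem:3sum-listing-to-decision}). The convolution structure is what makes this feasible: there are only $O(n^2)$ index-valid triples, so a threshold $\Delta$ need only shave a polynomial factor off that count.

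As in \Cref{lem:reduction-to-clustered}, I would proceed in rounds that separate scales. Sort the values and choose $\Delta$ as the minimum width of a window containing $s=n^{\delta}$ consecutive values. Then build nested interval covers $\mathcal{I}_k$ for thresholds $t_k = n^{\delta(1-\gamma k/2)}$ and pick a level $k$ with $|A_{k+1}| \le n^{\gamma}|A_k|$.

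Solutions using an element outside $A_{k+1}$ are found directly. Fixing one index $j$ and an interval $J$ leaves a value window with fewer than $t_{k+1}$ candidates. Because the index of the third element is determined by $i$ and $j$, even cheaper enumeration should be available. This is the convolution analogue of Fredman's trick mentioned in the overview (\Cref{sec:type3}): regroup as $a[i]-(\text{interval start}) = \ldots$ so that the side conditions pair up.

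Inside $A_{k+1}$, I would replace each value by its interval start and apply \Cref{lem:decomp-pop} with parameter $K=n^{\beta}$. Triples touching unpopular values are bounded by \Cref{lem:low-energy-few-solutions}. Here the index constraint matters: I would bound the index-valid approximate solutions by roughly $n^2/\sqrt{K}$ times a bucket factor, and enumerate them directly. Popular values are hashed with \Cref{lem:hash-popular-to-int}. Each bucket has size $O(s)$, so I randomly select one representative per bucket, as in the proof of \Cref{lem:reduction-to-integer-modeled}, and use $\tilde O(s^3)$ repetitions. Each repetition solves an Integer 3SUM instance in the encoding $h(\cdot)M+i$ and lists the $n^{2-\Omega(\epsilon)}$ false positives, each checked exactly.

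The hard step is controlling false positives \emph{within} one scale. Spacing guarantees only $O(s)$ values per $\Delta$-window. Since index-valid triples number $n^2$, the false positives among them could still be $\Theta(n^2)$ if many index pairs $(i,j)$ have $-a[i]-a[j]$ landing within $\Delta$ of $a[-i-j]$. My plan is to charge such near-solutions either to high approximate energy, which is handled by popularity and Frank--Tardos, or to the Fredman-type rearrangement $a[i]-a[i'] \approx a[j']-a[j]$ on index-shifted pairs. This regrouping lets near-solutions be enumerated by sorting differences at cost $O(n^{2-\Omega(\gamma\delta)})$.

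Finally, the parameters are balanced as in \Cref{thm:real-to-real-conv}: take $\delta=\Theta(\gamma)$ and $\beta,\gamma=\Theta(\epsilon)$. This gives total time $n^{2-\Omega(\epsilon^2)}$.
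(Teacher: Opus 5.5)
There is a genuine gap, and it is the step you yourself flag as hard. You choose $\Delta$ from the spacing of the \emph{values}: the minimum width of a window containing $s$ consecutive values, as in \Cref{lem:reduction-to-clustered}. That scale is unrelated to how many index-valid triples satisfy $a[i]+a[j]+a[k]=0\pm\Delta$. In \Cref{lem:reduction-to-integer-modeled} the target problem still compares reals exactly, but an Integer 3SUM oracle cannot distinguish exact from approximate solutions. So every such triple among your hashed elements is a candidate you must list and check.

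Concretely, take $a[i]=i\alpha+\eta_i$ with generic $|\eta_i|\ll\alpha$. Your window rule gives $\Delta\approx s\alpha$, and the model values form an approximate arithmetic progression, so everything is popular. At the model level, every index-valid triple then looks exactly like a true solution. Hence $\Omega(n^2)$ of them survive, in expectation, as solutions of the hashed Integer 3SUM instance (\Cref{lem:hash-popular-to-int}~(i)), although none need be a true solution.

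Your remedy does not address this. The Frank--Tardos hashing \emph{preserves} approximate solutions rather than filtering them. No regrouping or sorting of differences can enumerate $\Theta(n^2)$ near-solutions in time $n^{2-\Omega(1)}$, so your claimed $O(n^{2-\Omega(\gamma\delta)})$ cost has no support. The only way out is to make $\Delta$ smaller.

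The missing ingredients are the paper's threshold selection and its iteration. \Cref{lem:approx-threshold} picks $\Delta$ by \emph{rank} among the $\Theta(n^2)$ index-valid sums $|a[i]+a[j]+a[k]|$, using sampling. Then at most $O(n^2/K)$ triples are $\Delta$-approximate, which is the false-positive budget for listing. At the same time at least $\Omega(n^2/K)$ triples are $2\Delta$-approximate, which guarantees progress.

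Everything is then done on the indexed set $I_a=\set{iW+a[i]}$. This set is spaced at every scale, and its energy counts only index-valid quadruples, so no value-scale separation is needed at all. The work then splits by popularity (\cref{lem:conv-decomp-quasi-pop}):
\begin{itemize}
\item Popular and quasi-popular indices are hashed via \Cref{lem:hash-popular-to-int} plus linear hashing. Their solutions are listed within budgets $O(n^2/K)$ and $O(n^2/L^{1/2})$.
\item Popular-plus-unpopular triples use a concrete Fredman-type trick. Sample $g,h$ so that $(i,g+h,g-j,h-k)$ is an approximate additive quadruple; this happens with probability $\Omega(1/L)$ by popularity of $i$. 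The test becomes the two-index condition $b[j]+c[k]=a[g+h]\pm 2\Delta$, with $b[j]=a[j]+a[g-j]$ and $c[k]=a[k]+a[h-k]$. This is enumerable in output-sensitive time, with output bounded by the energy of the unpopular set (\cref{lem:few-solutions-unpop}).
\item Finally the popular indices are set to $\bot$ and the whole procedure repeats. The $2\Delta$ lower bound forces $|P|\ge\Omega(n/\sqrt{K})$, giving $O(\sqrt{K})$ rounds. Once only $O(n^2/K)$ non-$\bot$ index-valid triples remain, they are listed directly.
\end{itemize}

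Two minor points. Your integers $h(\cdot)M+i$ have $(K\log n)^{O(1)}$ bits, so you need linear hashing (\cref{lem:linear-hashing}) before invoking a word-RAM Integer 3SUM algorithm. Also, the one-representative-per-bucket step is superfluous once the index is part of the encoding.
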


On a high level, the proof proceeds by first rounding all numbers to a suitable scale, i.e., computing $\Delta$ such that we have many solutions up to $\pm \Delta$, but not too many such that we can still list them.
Then, in \Cref{sec:conv-decomp}, we split again into \emph{popular} and \emph{unpopular} elements, that is, elements participating in many and few solutions energy quadruples, and hence solutions due to \Cref{lem:low-energy-few-solutions} (for technical reasons, we also work with an intermediate, \emph{quasi-popular} level).
In \Cref{sec:type12}, we explain how to list approximate solutions where every element is at least quasi-popular using our machinery from \Cref{sec:toolkit}.
To list solutions with both unpopular and popular elements, in \Cref{sec:type3} we develop a solution akin to Fredman's trick~\cite{Fredman76,ChanVX23}.
Finally, in \Cref{sec:type4}, we can remove all popular elements from the instance, and repeat the whole algorithm until every element was popular for at least one $\Delta$ or the number of triples left is small enough to list all of them.

In \Cref{sec:listing-to-decision,sec:indexed-sets}, we state two more necessary tools for the proof: a listing-to-decision reduction for 3SUM, and an equivalent version of Real Convolution 3SUM that we work with.

\subsection{Listing-to-Decision Reduction for 3SUM}\label{sec:listing-to-decision}

In this section we argue that if (the decision variant of) Integer 3SUM can be solved in subquadratic time, then it is also possible to \emph{list} any subquadratic number of solutions in subquadratic time. This proof mostly follows standard arguments~\cite{VassilevskaW18,DurajKPW20}, but to get exactly the bound we need, we cannot reuse them in a black-box manner, and we need to prove the following lemma.

\begin{lemma}[Listing-to-Decision Reduction for 3SUM] \label{lem:3sum-listing-to-decision}
Assume that there is a decision algorithm for Integer 3SUM running in time $O(n^{2-\varepsilon})$, for some $\varepsilon > 0$. Then there is also an algorithm that, given three size-$n$ sets of integers $A, B, C$, and a number $T$, lists up to $T$ solution triples $(a,b,c) \in A \times B \times C$ with $a+b+c=0$ in time $O(n^{2-\varepsilon}T^{\varepsilon/2} \log n)$. 
\end{lemma}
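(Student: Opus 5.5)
We follow the standard listing-to-decision template of Vassilevska Williams and Williams, but track the parameters so that the final bound is $O(n^{2-\varepsilon}T^{\varepsilon/2}\log n)$. First, the three-set version with $A,B,C$ is reduced to the single-set decision problem by the usual tagging trick. Replace each $a \in A$ by $a\cdot 8 + 1$ scaled appropriately, i.e.\ $a \mapsto 4Ma+1$, $b \mapsto 4Mb+2$, $c\mapsto 4Mc-3$ for $M$ larger than all inputs. Then any zero triple in the union must take one element from each set. This costs only a constant factor in $n$ and a constant number of extra bits. So we may assume a decision oracle for the tripartite version running in time $O(m^{2-\varepsilon})$ on sets of size $m$.

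The core is a random partition step. Choose $g = \lceil \sqrt{T}\,\rceil$ (capped at $n$), and hash $A$, $B$, $C$ each into $g$ groups of size about $n/g$. Ideally this uses an almost-additive linear hash as in \cref{lem:linear-hashing}, so that for each pair of groups $(A_i,B_j)$ only $O(1)$ groups $C_k$ can contain a matching third element. Alternatively, one can simply partition $A$ and $B$ by sorted order into $g$ consecutive blocks. The sorted-order variant works because the values of $-(a+b)$ for $a\in A_i$, $b\in B_j$ form an interval, and these intervals meet only $O(g)$ total block-pairs of $C$ per ... ; I would go with the hashing version for cleanliness. This gives $O(g^2)$ subinstances $(A_i,B_j,C_{k})$ of size $O(n/g)$ each.

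In each subinstance we run a recursive "find and delete" procedure. Query the decision oracle; if the answer is no, discard the subinstance. If the answer is yes, split one of the three sets in half and recurse on both halves. Continue until reaching a singleton, which reveals a solution. We then delete the found element (or the found triple, by removing one element, since within a random small group each element rarely lies in many solutions) and repeat. Charging is the key point. Each reported solution costs $O(\log n)$ oracle calls on instances of size at most $n/g$. Each of the $g^2$ subinstances additionally costs one failing call. The total is therefore
\[
O\bigl((g^2 + T\log n)\,(n/g)^{2-\varepsilon}\bigr).
\]
With $g^2 \approx T$, this is $O\bigl(T\log n \cdot n^{2-\varepsilon}/T^{1-\varepsilon/2}\bigr) = O(n^{2-\varepsilon}T^{\varepsilon/2}\log n)$, as claimed. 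We stop once $T$ solutions have been listed.

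The main obstacle is the deletion step. Removing one element of a found triple may destroy other solutions that still need listing, or conversely an element may sit in many solutions. We only need to list up to $T$ solutions, though, not all of them. So the fix is to delete only the element of the enumerated set (say $c$) after reporting all solutions involving it within the current subinstance. Those can be found by searching $A_i$ against $-c-B_j$ via binary search, which is affordable against the subinstance size. This has to be done carefully so that the number of oracle calls is charged to reported solutions without an extra factor, and with randomness only affecting constants; getting exactly $T^{\varepsilon/2}$ with a single $\log n$ factor is the delicate accounting I would verify last.
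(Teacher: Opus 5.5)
\textbf{Gap in the partition you chose.} You commit to hashing with \cref{lem:linear-hashing}, but that hash does not give groups of size $O(n/g)$. Pairwise independence controls bucket loads only on average, not the largest bucket. Your accounting prices every oracle call at $(n/g)^{2-\varepsilon}$, and up to $T\log n$ calls (all those charged to solutions inside one block-triple) can land on the same triple. If that triple has size $s\gg n/g$, each of those calls costs $s^{2-\varepsilon}$ and the $T^{\varepsilon/2}$ bound is lost. Repairing this needs a separate treatment of overfull buckets, which you do not give.

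The sorted-order partition, which you mention but leave unfinished, is what the paper uses, and it removes the problem deterministically:
\begin{itemize}
\item Each block has size at most $\lceil n/g\rceil$.
\item A block-triple $(i,j,k)$ can contain a solution only if $\min A_i+\min B_j+\min C_k\le 0\le \max A_i+\max B_j+\max C_k$.
\item Along each diagonal $(i+t,j+t,k+t)$ these intervals are disjoint and increasing in $t$, because $\max A_i<\min A_{i+1}$ and similarly for $B$ and $C$. So at most one of them contains $0$.
\item There are $O(g^2)$ diagonals, hence $O(g^2)=O(T)$ nontrivial block-triples.
\end{itemize}

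\textbf{The rest is correct and is essentially the paper's argument.} The only difference is the traversal order. The paper halves the first set of all surviving triples in $O(\log n)$ synchronized rounds. Whenever more than $T$ triples survive, it drops arbitrary ones, which is safe because each survivor contains at least one solution. So each round costs $O\bigl(T(n/\sqrt T)^{2-\varepsilon}\bigr)$ with no charging argument at all.

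You instead go depth-first and stop after $T$ solutions. Your count of $O(g^2+T\log n)$ oracle calls is right: every yes-answer lies on the path to one of the at most $T$ leaves reached, or on the path active when you stop, and every no-answer is a root or a child of a yes-node.

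The deletion step, and the worry in your last paragraph, are unnecessary. If you recurse into both halves, each element $c$ of the split set is reached as a leaf at most once. At that leaf you list all pairs in $A_i\times B_j$ summing to $-c$ by a linear scan in $O(n/g)$ time, which is dominated by the oracle cost. With the sorted partition the whole argument is also deterministic, so no randomness needs to be accounted for.
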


\begin{proof}
For each of the three input sets, sort it and split it into $O(\sqrt{T})$ groups of consecutive elements, each group of size $O(n/\sqrt{T})$. There are $O((\sqrt{T})^3)$ triples of groups (one group per set), which we are going to call \emph{subinstances}. Each solution triple lives in exactly one subinstance. However, by a standard argument~\cite{LincolnWWW16,FischerKP24}, there are only $O((\sqrt{T})^2)=O(T)$ nontrivial subinstances that may contain solutions, and they can be efficiently identified by examining the minimum and maximum elements in each group, and requiring that the sum of the three minimums is at most $0$ and the sum of the three maximums is at least $0$.

For $\lceil\log(n/\sqrt{T})\rceil$ rounds, repeat the following: Split the first set in each subinstance into two equisized parts, effectively doubling the number of subinstances. Run the decision algorithm on each subinstance\footnote{In order to run a decision algorithm for a one-set variant of 3SUM on three sets one can use the standard reduction between the two variants, see, e.g., \cite{ChanH20}.} and if it says ``no'', discard this subinstance. At the end of the round, if the number of subinstances exceeds $T$, discard arbitrarily chosen ones so that their count drops back to~$T$.

After the last round, the first set in each subinstance is just a singleton, so all solutions in this subinstance can be listed by a linear scan in time $O(n/\sqrt{T})$, which is no more than the algorithm anyway spends in a single round.

The correctness follows from the invariant that solutions coming from different subinstances are distinct, and that the total number of solutions across non-discarded subinstances is at least $T$, or the total number of solutions was initially less than $T$ in which case the algorithm does not discard anything.

The total running time of a single round is $O(T (n/\sqrt{T})^{2-\epsilon}) = O(n^{2-\varepsilon}T^{\varepsilon/2})$, and there are $O(\log n)$ rounds, which gives the claimed total running time.
\end{proof}

\subsection{Convolutions and Indexed Sets}\label{sec:indexed-sets}

In this section, it is more convenient to work with input sequences indexed by integers in $[\pm n] \coloneqq \{-n,\ldots,n\}$ and allowing missing entries. This leads to the following version of Real Convolution 3SUM: The input consists of a sequence $a \in (\Real \cup \set{\bot})^{[\pm n]}$. A \emph{solution} is a triple $(i, j, k) \in [\pm n]^3$ such that $i + j + k = 0$ and $a[i] + a[j] + a[k] = 0$ (where none of these three equals $\bot$). The dummy  entry $\bot$ symbolizes a missing entry, and any expression containing $\bot$ can never satisfy an equation or form a solution. We call triple $(i, j, k) \in [\pm n]^3$ a \emph{$\Delta$-approximate solution} if $i + j + k = 0$ and $a[i] + a[j] + a[k] = 0 \pm \Delta$.

Moreover, it is often convenient to transition from sequences to sets inheriting the convolution structure. Specifically, many of the additive combinatorics statements that we have previously proved for sets carry over easily to sequences, but it would be tedious to reprove each single one. Instead, we apply these results as black-boxes, relying on the following notation. For a sequence $a \in (\Real \cup \set{\bot})^{[\pm n]}$, define the map $\iota_a : [\pm n] \to \Real$ defined by $\iota_a(i) \coloneqq iW + a[i]$, and define the \emph{indexed set}
\begin{align*}
    I_a &\coloneqq \set{\iota_a(i) : i \in [\pm n],\, a[i] \neq \bot}.
\end{align*}
Here, $W$ is an arbitrary sufficiently large number. The choice of $W$ does not really matter, as long as it is larger than the entries in $a$, and larger than the thresholds $\Delta$ considered throughout (by a constant factor, say). The point is that the additive structure of $a$ and $I_a$ is identical. The number of $\Delta$-approximate solutions in the sequence version is exactly the number of solutions $a_1 + a_2 + a_3 = 0 \pm \Delta$ for $a_1, a_2, a_3 \in I_a$.  For example, also the number of ``additive quadruples''
\begin{equation*}
    \begin{cases}
        i + j = k + \ell,\\
        a[i] + a[j] = a[k] + a[\ell] \pm \Delta,
    \end{cases}
\end{equation*}
is exactly
\begin{equation*}
    |\set{(a_1, a_2, a_3, a_4) \in I_a^4 : a_1 + a_2 = a_3 + a_4 \pm \Delta}| = E^{\pm \Delta}(I_a).
\end{equation*}

\subsection{Decomposition into Popular, Quasi-Popular, and Unpopular Elements}\label{sec:conv-decomp}

The following lemma chooses a threshold $\Delta$ such that the number of $2\Delta$-approximate solutions is large, whereas the number of $\Delta$-approximate solutions is not too large. The idea is to imagine a sorted order of all $|a[i] + a[j] + a[k]|$ where $i+j+k=0$. Choosing $2\Delta$ as a sum of roughly the correct rank satisfies the properties.

\begin{lemma}[Approximation Threshold]\label{lem:approx-threshold}
    Let $a \in (\Real \cup \set{\bot})^{[\pm n]}$ be a sequence and let $K \geq 1$. There is a randomized algorithm that runs in time $K (\log n)^{O(1)}$ and either detects a solution, decides that the number of $\infty$-approximate solutions is at most $O(n^2 / K)$, or computes some $\Delta \in \Realnneg$ such that:
    \begin{enumerate}[label=(\roman*)]
        \item the number of $\Delta$-approximate solutions is at most $O(n^2 / K)$, and
        \item the number of $2\Delta$-approximate solutions is at least $\Omega(n^2 / K)$.
    \end{enumerate}
\end{lemma}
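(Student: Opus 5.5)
The plan is to pick $\Delta$ by random sampling of triples, using an order-statistic estimate. Let $T$ denote the set of $\infty$-approximate solutions, i.e.\ all triples $(i,j,k)\in[\pm n]^3$ with $i+j+k=0$ and no entry equal to $\bot$. Note $|T|\le O(n^2)$. For each $\tau\in T$ write $s(\tau)\coloneqq|a[i]+a[j]+a[k]|$. Then the number of $\Delta$-approximate solutions is $N(\Delta)\coloneqq|\set{\tau\in T : s(\tau)\le\Delta}|$. Our goal is a threshold $\Delta$ with $N(\Delta)\le O(n^2/K)$ but $N(2\Delta)\ge\Omega(n^2/K)$. The basic observation is that if we choose $2\Delta$ equal to $s(\tau^*)$ for a triple $\tau^*$ of rank roughly $n^2/K$ in the sorted order of the values $s(\cdot)$, then (ii) holds by rank. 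Property (i) needs more care, since $\Delta=s(\tau^*)/2$ has rank at most that of $\tau^*$, which is what we want, except that ties or near-ties are irrelevant: $N(\Delta)\le N(2\Delta)$, so rank bounds in both directions come from the rank of $\tau^*$.

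\textbf{Sampling.} We can sample a uniformly random triple in $[\pm n]^3$ with $i+j+k=0$ in $O(1)$ time by choosing $i,j$ and setting $k=-i-j$, rejecting if $|k|>n$ or some entry is $\bot$. Let $M=\Theta(n^2)$ be the total number of index triples. Draw $S=\Theta(K\log n)$ samples, discard non-members of $T$, and compute $s(\cdot)$ for each survivor. Each sample lies in the lowest-$n^2/K$ portion of $T$ with probability about $(n^2/K)/M=\Theta(1/K)$. Let $r$ be the $\Theta(\log n)$-th smallest sampled value; sorting the $S$ values takes $K(\log n)^{O(1)}$ time. By Chernoff bounds, with high probability the rank of $r$ within $T$ lies between $c_1 n^2/K$ and $c_2 n^2/K$.

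\textbf{Case analysis.} If fewer than $\Theta(\log n)$ samples survive, Chernoff gives $|T|\le O(n^2/K)$ with high probability, and we report that. If some sampled triple has $s=0$, we have found a solution and report it. Otherwise we set $\Delta\coloneqq r/2$, so that $2\Delta=r$.
\begin{itemize}
\item Property (ii) holds because $N(r)\ge\operatorname{rank}(r)\ge c_1n^2/K$.
\item Property (i) holds because $N(r/2)\le N(r)$. Here $N(r)$ counts all triples with value at most $r$, which is at most $c_2n^2/K$ plus the ties at $r$. The Chernoff bound should be phrased directly in terms of $N(\cdot)$ rather than ranks, so that ties cause no problem: with high probability $N(r)\le O(n^2/K)$, since otherwise $\Theta(\log n)$ samples would already fall below a smaller threshold.
\end{itemize}
All real operations used are additions and comparisons.

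\textbf{Main obstacle.} The main difficulty is making the concentration argument robust to ties and duplicate values. I would handle it by arguing about thresholds $x$ through the function $N(x)$ rather than through ranks. Let $x_{\mathrm{lo}}$ be the largest threshold with $N(x_{\mathrm{lo}})\le c_1n^2/K$ and $x_{\mathrm{hi}}$ the smallest threshold with $N(x_{\mathrm{hi}})\ge c_2n^2/K$. The expected sample counts below these two thresholds are separated by a constant factor. The Chernoff bound then yields $N(r)=\Theta(n^2/K)$, which gives both properties with $2\Delta=r$.
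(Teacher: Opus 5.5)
Your algorithm is essentially the paper's. You sample $\Theta(K\log n)$ index triples, take a $\Theta(\log n)$-th order statistic $r$, handle $r=0$ and the $\bot$-heavy case separately, and set $2\Delta=r$. Your argument for (ii) is fine. The gap is in (i). You bound $N(r/2)\le N(r)$ and then claim that with high probability $N(r)\le O(n^2/K)$; your last paragraph even claims $N(r)=\Theta(n^2/K)$. This is false whenever many triples share the value $r$. For example, take $a[i]=1/3$ for all $i$. Then every triple has $s(\tau)=1$, so $r=1$ deterministically and $N(r)=|T|=\Theta(n^2)$, which is far above $n^2/K$ for large $K$. Your $x_{\mathrm{lo}}/x_{\mathrm{hi}}$ argument cannot repair this. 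The function $N(\cdot)$ can jump from below $c_1n^2/K$ to $\Theta(n^2)$ at a single value, and $r$ can land exactly on that jump. So the step you identified as the main obstacle is where the proof breaks.

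The missing idea is to use the \emph{strict} gap between $\Delta$ and $2\Delta$, not the monotonicity $N(\Delta)\le N(2\Delta)$. This is how the paper argues: it uses $|a[i]+a[j]+a[k]|\le\Delta<|a[i^*]+a[j^*]+a[k^*]|$, which relies on $\Delta>0$.
\begin{itemize}
\item You have already excluded $r=0$, so $\Delta=r/2<r$, and every $\Delta$-approximate solution has value strictly below $r$.
\item Fix any tie-breaking to get a total order on $T$ (or on all index triples with $\bot$-triples last, as the paper does). Every triple with value strictly below $r$ precedes the sampled triple $\tau^*$ that realizes $r$.
\item Hence $N(\Delta)\le \mathrm{pos}(\tau^*)-1$.
\item Uniform samples give uniform positions, so the Chernoff bound on $\mathrm{pos}(\tau^*)$ applies regardless of ties and gives $N(\Delta)\le O(n^2/K)$.
\end{itemize}
This is why the lemma is stated with two thresholds $\Delta$ and $2\Delta$: no single threshold can be guaranteed to have $\Theta(n^2/K)$ approximate solutions.

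A minor point: your sampling step says you ``reject'' triples with a $\bot$ entry. If that means resampling, the running time is unbounded when $|T|$ is small. Count such samples as discarded instead, as your later survivor test suggests.
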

\begin{proof}
    Let $N \coloneqq \Theta(n^2)$ denote the number of triples $i, j, k \in [\pm n]$ with $i + j + k = 0$. Imagine the length-$N$ sequence of numbers $|a[i] + a[j] + a[k]|$, for all triples $i, j, k$ with $i + j + k = 0$, sorted in increasing order (where it is understood that all triples with a $\bot$ entry take the largest positions in the order). Our goal is to find one such triple $(i^*, j^*, k^*)$ whose rank in the sorted order is at least $N / K$ and at most $2N / K$. There are two special cases: If $a[i^*] + a[j^*] + a[k^*] = 0$, then we have directly found a solution. If one of the three entries $a[i^*], a[j^*], a[k^*]$ is $\bot$, then the number of non-$\bot$ triples and hence the number of $\infty$-approximate solutions is at most $2n^2 / K$. Otherwise, we assign $2\Delta = |a[i^*] + a[j^*] + a[k^*]|$. Since the rank is at least $N / K$, there are at least $N / K$ other triples $(i, j, k)$ with
    \begin{equation*}
        |a[i] + a[j] + a[k]| \leq |a[i^*] + a[j^*] + a[k^*]| = 2\Delta.
    \end{equation*}
    In other words, the number of $2\Delta$-approximate solutions is at least $N / K$. Similarly, there are at most $2N / K$ triples $(i, j, k)$ with
    \begin{equation*}
        |a[i] + a[j] + a[k]| \leq \Delta < |a[i^*] + a[j^*] + a[k^*]|,
    \end{equation*}
    using that $\Delta > 0$. In particular, the number of $\Delta$-approximate solutions is at most $2n^2 / K$.
    
    To find $(i^*, j^*, k^*)$ we apply a simple sketching technique. Let $R \coloneqq \Theta(\log n)$ be a parameter. Sample $R K$ triples $(i, j, k)$ with $i + j + k = 0$ uniformly at random, and pick the one for which $|a[i] + a[j] + a[k]|$ is $1.5R$-th smallest among the sampled triples. Equivalently, we choose $R K$ uniformly random positions in the sorted order, and commit to the $1.5R$-th smallest position among the samples. Each sample has rank below $N / K$ with probability exactly $1/K$. Hence, the expected number of samples below rank $N / K$ is $R K / K = R$. Therefore, by Chernoff's bound it follows that with high probability there are less than $2R$ samples below rank $n^2 / K$. In particular, the lower bound on the $2\Delta$-approximate-solutions follows. Similarly, the expected number of samples below rank $2N / K$ is $2R$, and by Chernoff's bound with high probability there are at least $1.5R$ such samples. Thus, the upper bound on the $\Delta$-approximate-solutions, property (ii), holds too.
\end{proof}

Next, we state the decomposition into popular, quasi-popular, and unpopular elements for the convolution setting. The decomposition immediately follows from \Cref{lem:decomp-pop}.

\begin{lemma}[Popular/Unpopular Decomposition] \label{lem:conv-decomp-pop}
    Let $a \in (\Real \cup \set{\bot})^{[\pm n]}$ be a sequence, let $\Delta \in \Realnneg$, and let $K \geq 1$ be an integer parameter. There is a partition $[\pm n] = P \sqcup U$ into \emph{popular} and \emph{unpopular} indices such that:
    \begin{enumerate}[label=(\roman*)]
        \item For all $i \in P$: $\abs{\set{(j, k, \ell) \in P^3 : i + j = k + \ell, a[i] + a[j] = a[k] + a[\ell] \pm \Delta}} \geq |P|^2 / 4K$,
        \item $\abs{\set{(i, j, k, \ell) \in U^2 \times [\pm n]^2 : i + j = k + \ell, a[i] + a[j] = a[k] + a[\ell] \pm \Delta}} \leq n^3 / K$.
    \end{enumerate}
    It can be computed in time $\tilde O(nK^2)$.
\end{lemma}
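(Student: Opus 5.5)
We apply \cref{lem:decomp-pop} as a black box to the indexed set $I_a \subseteq \Real$ from \cref{sec:indexed-sets}, with the same threshold $\Delta$ and the same parameter $K$. Recall that $W$ exceeds all entries of $a$ and all thresholds $\Delta$ under consideration by a constant factor. Hence $\iota_a(i) + \iota_a(j) = \iota_a(k) + \iota_a(\ell) \pm \Delta$ holds if and only if $i + j = k + \ell$ and $a[i] + a[j] = a[k] + a[\ell] \pm \Delta$. To see this, note that if $i+j \neq k+\ell$, the left-hand difference has absolute value at least $W - 4\max|a| > \Delta$. The same reasoning applies to the $(P+P-P)$-representation counts: $r^{\pm\Delta}_{P'+P'-P'}(\iota_a(i))$ for $P' \subseteq I_a$ counts exactly the triples $(j,k,\ell)$ from the corresponding index set with $i + j = k + \ell$ (up to relabeling which index is subtracted) and $a[i]+a[j] = a[k]+a[\ell] \pm \Delta$.

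\cref{lem:decomp-pop} yields $I_a = P' \sqcup U'$. We set $P \coloneqq \iota_a^{-1}(P')$, and put into $U$ the indices $\iota_a^{-1}(U')$ together with all $\bot$-indices. The $\bot$-indices participate in no quadruples, so they do not affect (ii). Property~(i) of \cref{lem:decomp-pop} gives at least $|I_a|^2/4K \geq |P|^2/4K$ quadruples for each $i \in P$, since $|I_a| \geq |P|$. For property~(ii), \cref{lem:decomp-pop} bounds $E^{\pm\Delta}(I_a, U') < |I_a|^3/K$. This energy counts quadruples $(a_1,a_2,b_1,b_2) \in I_a^2 \times U'^2$ with $a_1 + b_1 = a_2 + b_2 \pm \Delta$, whereas the statement asks for two indices in $U$ and two arbitrary indices, arranged as $i+j=k+\ell$ with $i,j\in U$. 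Matching these two shapes is the only point needing care. Examining the proof of \cref{lem:decomp-pop}, the counter there tracks quadruples in which one of the two ``subtracted'' positions lies in $U$. Rewriting $i + j = k + \ell$ as $i - k = \ell - j$ places one $U$-element on each side, and the moves in the proof count each such quadruple at least once when the first of $i,j$ enters $U$. Alternatively, the quantity in (ii) is directly bounded by the counter: quadruples with $i \in U$ and the other three indices arbitrary satisfy $\sum_{i\in U} r_{I_a+I_a-I_a}(\iota_a(i))$, and by rerunning the argument with the roles permuted this is at most $|I_a|^3/K$ up to constants. Since $|I_a| \le 2n+1$, we obtain $O(n^3/K)$, and the constant is absorbed by running \cref{lem:decomp-pop} with $cK$ in place of $K$.

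For the running time, $I_a$ can be constructed in $O(n)$ time, with symbolic additions of $iW$. The randomized version of \cref{lem:decomp-pop} then runs in $\tilde O(|I_a| K^2) = \tilde O(nK^2)$. I expect the main, if minor, obstacle to be the bookkeeping in (ii), namely verifying that the energy notion $E^{\pm\Delta}(A,U)$ from \cref{lem:decomp-pop} matches the ``two indices in $U$'' count up to a constant factor. Everything else is a direct translation through $\iota_a$.
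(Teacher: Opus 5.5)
Your proposal is correct and is essentially the paper's proof: apply \cref{lem:decomp-pop} to the indexed set $I_a$, take $P=\iota_a^{-1}(\hat P)$ and $U$ its complement (so the $\bot$-indices land in $U$), and pull (i) and (ii) back through $\iota_a$, which is valid because $W$ dominates all entries and thresholds. The mismatch you flag in (ii) is real and the paper's one-line pull-back does not address it: there the two $U$-indices sit on the same side of $i+j=k+\ell$, whereas $E^{\pm\Delta}(I_a,\hat U)$ puts one on each side. Your counter-based fix works if the counter tracks all quadruples with at least one entry in $U$, since each move of $a$ then adds at most $4r^{\pm\Delta}_{P+P-P}(a)$ new ones; alternatively, Cauchy--Schwarz bounds the same-side count by $O(E^{\pm\Delta}(I_a,\hat U))$ directly, so the earlier proof need not be reopened.
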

\begin{proof}
We apply \Cref{lem:decomp-pop} on the indexed set $A = I_a$, which yields a partition $A = \hat P \sqcup \hat U$. Then let $P \subseteq [\pm n]$ be the subset so that $\iota_a(P) = \hat P$ and let $U \coloneqq [\pm n] \setminus P$. \Cref{lem:decomp-pop} states that for each $a \in A$, \smash{$r^{\pm \Delta}_{\hat P+\hat P-\hat P}(a) \geq |I_a|^2 / 4K \geq |P|^2 / 4K$}. Pulling back through $\iota_a$, this implies~(i). \Cref{lem:decomp-pop} further implies that $E^{\pm}(I_a, \hat U) < |I_a| / K^3 \leq n / K^3$. Pulling back through $\iota_a$, this implies~(ii). The running time is $\tilde O(n K^2)$ as in \Cref{lem:decomp-pop}.
\end{proof}

\begin{corollary}[Popular/Quasi-Popular/Unpopular Decomposition] \label{lem:conv-decomp-quasi-pop}
    Let $a \in (\Real \cup \set{\bot})^{[\pm n]}$ be a sequence, let $\Delta \in \Realnneg$, and let $1 \leq L \leq M$ be integer parameters. There is a partition $[\pm n] = P \sqcup Q \sqcup U$ into \emph{popular}, \emph{quasi-popular}, and \emph{unpopular} indices such that:
    \begin{enumerate}[label=(\roman*)]
        \item For all $i \in P$: $\abs{\set{(j, k, \ell) \in P^3 : i + j = k + \ell, a[i] + a[j] = a[k] + a[\ell] \pm 2\Delta}} \geq |P|^2 / 4L$,
        \item For all $i \in Q$: $\abs{\set{(j, k, \ell) \in (P \cup Q)^3 : i + j = k + \ell, a[i] + a[j] = a[k] + a[\ell] \pm 2\Delta}} \geq |P \cup Q|^2 / 4M$,
        \item $\abs{\set{(i, j, k, \ell) \in Q^2 \times [\pm n]^2 : i + j = k + \ell, a[i] + a[j] = a[k] + a[\ell] \pm 2\Delta}} \leq n^3 / L$, and
        \item $\abs{\set{(i, j, k, \ell) \in U^2 \times [\pm n]^2 : i + j = k + \ell, a[i] + a[j] = a[k] + a[\ell] \pm 2\Delta}} \leq n^3 / M$.
    \end{enumerate}
    It can be computed in time $\tilde O(nM^2)$.
\end{corollary}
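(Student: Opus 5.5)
The plan is to apply \Cref{lem:conv-decomp-pop} twice, with thresholds $2\Delta$ in place of $\Delta$, once with parameter $M$ and once with parameter $L$ on the resulting popular part. The first call, on the sequence $a$ with threshold $2\Delta$ and parameter $M$, yields a partition $[\pm n] = P' \sqcup U$. Property~(ii) of that lemma directly gives property~(iv) here. Property~(i) of that lemma gives, for every $i \in P'$, at least $|P'|^2/4M$ triples $(j,k,\ell) \in P'^3$ forming a $2\Delta$-approximate additive quadruple with $i$.

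The second call is on the restricted sequence $a'$, obtained from $a$ by setting $a'[i] = \bot$ for $i \notin P'$, again with threshold $2\Delta$ but now with parameter $L$. It partitions the non-$\bot$ indices $P'$ into $P \sqcup Q$; indices already set to $\bot$ are assigned to $U$ as before. Since $\bot$ entries never participate in quadruples, property~(i) of this call is exactly property~(i) here: the triples range over $P^3$ with the threshold $|P|^2/4L$. Its property~(ii) bounds the quadruples with $(i,j) \in Q^2$ and $(k,\ell)$ ranging over indices of $a'$. This is not yet quite~(iii), because there $k,\ell$ range over all of $[\pm n]$.

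Handling that mismatch is the main obstacle. I would address it by reusing the energy bound of \Cref{lem:decomp-pop} with the second set taken as the full indexed set, rather than relying on the restricted statement. Inspecting the deterministic proof of \Cref{lem:decomp-pop}, one runs the peeling procedure on $P'$ but counts quadruples in which $k,\ell$ range over the whole of $I_a$. Concretely, an element $i$ is moved to $Q$ only while its number of quadruples with one other index in the current $P$ and two arbitrary indices is below the threshold. Each moved element contributes fewer than $n^2/L$ quadruples, so the total is below $n^3/L$, which gives~(iii).

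This modification must not break~(i). Keeping more elements in $P$ only helps the lower bound, but the lower bound in~(i) must count only triples inside $P^3$. I would therefore use two-sided thresholds: stop peeling only when every $i \in P$ has at least $|P|^2/4L$ triples inside $P^3$. Peeled elements then have few internal triples, which controls the internal energy but not the external one. If that bookkeeping fails, the fallback is to notice that~(iii) is only ever used with $j,k,\ell$ arbitrary, and that the energy bound of \Cref{lem:decomp-pop} is stated as $E(A,U)$ with $A$ the full set. So I apply \Cref{lem:decomp-pop} with $A = I_a$, and the peeling is restricted to elements of $P'$ while the energy is counted against all of $I_a$; this exactly matches the proof structure there.

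For~(ii), each $i \in Q \subseteq P'$ inherits from the first call at least $|P'|^2/4M = |P \cup Q|^2/4M$ triples in $P'^3 = (P\cup Q)^3$. The running time is $\tilde O(nM^2 + nL^2) = \tilde O(nM^2)$ since $L \le M$.
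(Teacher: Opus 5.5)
Your skeleton matches the paper's proof. One call of \Cref{lem:conv-decomp-pop} with threshold $2\Delta$ and parameter $M$ gives $S \sqcup U$, and a second call with parameter $L$ on $a$ restricted to $S$ gives $S = P \sqcup Q$. Your derivations of (i), (ii), (iv) and the running time are correct. You are also right that the second call alone does not give (iii), because after restriction $a[k]$ or $a[\ell]$ is $\bot$ whenever $k$ or $\ell$ lies in $U$.

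None of your three repairs closes this gap. Peeling by a count with two arbitrary indices destroys (i), as you note. The two-sided variant, by your own account, leaves the external quadruples uncontrolled. The fallback is asserted rather than argued. The bookkeeping in \Cref{lem:decomp-pop} works because peeling starts from $P = A$: when $x$ is moved, every newly counted quadruple has its other three entries in the current $P$, so it is charged to $r^{\pm\Delta}_{P+P-P}(x)$. Suppose instead you peel inside $S$ with criterion $r^{\pm 2\Delta}_{P+P-P}(x)$ while counting quadruples against all of $I_a$. Then a quadruple formed by $x$ together with indices of $U$ becomes newly counted when $x$ enters $Q$. It is not bounded by $r^{\pm 2\Delta}_{P+P-P}(x)$ at all, so the claimed $n^3/L$ total does not follow.

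The missing step is to charge those quadruples to the \emph{first} call. This is where the hypothesis $L \le M$ is actually needed; you use it only for the running time. The peeling argument of \Cref{lem:decomp-pop} in fact bounds all approximate additive quadruples in $A^4$ that contain at least one removed element, since removing $x$ adds at most $4\,r_{P+P-P}(x) < 2|A|^2/K$ of them. Hence the quadruples in $Q^2 \times [\pm n]^2$ with $k$ or $\ell$ in $U$ number $O(n^3/M) \le O(n^3/L)$. Those with $k, \ell \in P \cup Q$ are exactly what the second call bounds by $n^3/L$. Summing gives (iii) up to a constant factor, with no change to the peeling procedure. Equivalently, your fallback becomes correct once the counter is initialised with the quadruples touching $U$ and that initial value is bounded via the first call. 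This is the step implicit in the paper's ``the four properties are immediate''.
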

\begin{proof}
    The proof is by two applications of the previous \cref{lem:conv-decomp-pop}: First, we apply the lemma on the sequence $a$, with threshold $2\Delta$ and parameter $M$, to obtain a partition $[\pm n] = S \sqcup U$. Then we apply the lemma again on the sequence $a$ restricted to the indices in $S$, with threshold $2\Delta$ and with parameter $L$. This yields a partition $S = P \sqcup Q$. The four properties are immediate.
\end{proof}

For the remaining reduction, let us separate solutions into cases with respect to where they lie in the decomposition $P \sqcup Q \sqcup U$.
We claim that every solution falls into one of the following four types:
\begin{enumerate}
    \item only popular elements, i.e., $P^3$,
    \item at least one popular, one quasi-popular, and no unpopular elements, i.e., $P \times Q \times (P \cup Q)$,
    \item at least one popular and at least one unpopular element, i.e., $P \times U \times [\pm n]$, and
    \item no popular elements, i.e., $(Q \cup U)^3$.
\end{enumerate}
Every solution without elements from $P$ falls into Type 4, so we only need to argue that the first three types cover all solutions including popular elements. Type 3 covers every remaining solution involving at least one unpopular element. Out of the remaining solutions with at least one element in $P$ and none in $U$, Type 2 covers every solution with at least one element in $Q$. Every other solution is left with only elements from $P$, hence it falls into Type 1.
So, these four types indeed cover every possible Real Convolution 3SUM solution.
In the remaining section we show how to detect an exact solution of each type separately.

\begin{lemma}[Number of Approximate Solutions] \label{lem:approx-sol-count}
Assume that $L$ is larger than $K^2$ by a sufficiently large constant factor. Then:
\begin{itemize}
    \item the number of $2\Delta$-approximate solutions of Type 1 is at least $\Omega(n^2 / K)$,
    \item the number of $2\Delta$-approximate solutions of Types 2, 3, and 4 is at most $O(n^2 / L^{1/2})$.
\end{itemize}
\end{lemma}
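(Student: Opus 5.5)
The plan is to bound the solutions of Types 2, 3 and 4 directly through \Cref{lem:low-energy-few-solutions}, and to obtain the Type 1 lower bound by subtracting these from the total count. Everything is done on the indexed set $I_a$ with threshold $2\Delta$. The setting is that $\Delta$ comes from \Cref{lem:approx-threshold} with parameter $K$, and the partition $P \sqcup Q \sqcup U$ comes from \Cref{lem:conv-decomp-quasi-pop} with $L \le M$.

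\textbf{Spacing.} Before applying \Cref{lem:low-energy-few-solutions}, I need $I_a$ to be $(2\Delta, O(1))$-spaced. This does not hold automatically. I would derive it from property (i) of \Cref{lem:approx-threshold}: if some entry had many neighbours within $2\Delta$, then many $\Delta$-approximate solutions would arise. If this is not strong enough, I would argue that the Cauchy--Schwarz step in the proof of \Cref{lem:low-energy-few-solutions} only needs the third set to be spaced. I could also split $I_a$ into $O(1)$ spaced subsets, or simply assume spacing is maintained as a global invariant of the surrounding reduction, for instance by an earlier refinement. I expect this step to be the main obstacle.

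\textbf{Upper bound for Types 2--4.} Each solution of Type 2 or Type 4 has at least one element in $Q$, or all three elements in $Q \cup U$. Each solution of Type 3 has an element in $U$. So every $2\Delta$-approximate solution of Types 2, 3, 4 has at least one coordinate in $Q$ or in $U$. Fix that coordinate as the element of the set $A$ in \Cref{lem:low-energy-few-solutions}, and take $B = I_a$ and $C = I_a$. The energy $E^{\pm 2\Delta}(\iota_a(Q), I_a)$ counts exactly the quadruples in property (iii) of \Cref{lem:conv-decomp-quasi-pop}. There is a mismatch: the quadruple there is in $Q^2 \times [\pm n]^2$, while the energy has the shape $A^2 \times B^2$. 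To reconcile them I would use the identity $E(A,B) = \sum r_{A-A}\, r_{B-B}$ and symmetrize, with constant-factor care. This gives an energy bound of at most $n^3/L$. \Cref{lem:low-energy-few-solutions} then yields $O(\sqrt{n^3/L \cdot n}) = O(n^2/L^{1/2})$ solutions with a $Q$-coordinate. The same argument with property (iv) gives $O(n^2/M^{1/2}) \le O(n^2/L^{1/2})$ solutions with a $U$-coordinate. I then sum over the three coordinate positions.

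\textbf{Lower bound for Type 1.} By \Cref{lem:approx-threshold}(ii), there are at least $\Omega(n^2/K)$ $2\Delta$-approximate solutions in total. The four types cover all solutions, so the Type 1 count is at least $\Omega(n^2/K) - O(n^2/L^{1/2})$. Since $L \ge cK^2$ with $c$ a large constant, $n^2/L^{1/2} \le n^2/(\sqrt{c}K)$. Choosing $c$ large enough makes this at most half of the total, so the Type 1 count is still $\Omega(n^2/K)$.
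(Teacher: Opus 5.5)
Your skeleton is the paper's argument. You bound all Type~2--4 solutions via \Cref{lem:low-energy-few-solutions}, using property~(iii) of \Cref{lem:conv-decomp-quasi-pop} for a $Q$-coordinate and property~(iv) with $L \le M$ for a $U$-coordinate. You then subtract these from the $\Omega(n^2/K)$ total given by \Cref{lem:approx-threshold}(ii). Your case split is slightly more careful than the paper's wording, which attributes Type~3 to a quasi-popular element; routing Type~3 through $U$ and~(iv) is the right bookkeeping. However, the two steps where you hesitate are resolved differently from what you propose, and your proposed fixes would not work.

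\emph{Spacing.} This is not an obstacle, because it holds by construction of the indexed set. For $i \neq j$ we have $|\iota_a(i) - \iota_a(j)| \geq W - |a[i]| - |a[j]|$. Since $W$ exceeds the entries and the thresholds by a suitable constant factor, distinct points of $I_a$ are more than $2\Delta$ apart. So $I_a$ and all its subsets are $(2\Delta,1)$-spaced; the paper makes the same observation in \Cref{lem:type1}. Your derivation from \Cref{lem:approx-threshold}(i) has no basis, since points lying close together do not by themselves produce approximate solutions. You also need no invariant from the surrounding reduction.

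\emph{Energy shape.} Do not try to ``symmetrize'' a same-side count $|\{q_1+q_2 = x_1+x_2 \pm 2\Delta\}|$ into the mixed energy $E^{\pm 2\Delta}(\hat Q, \hat A)$ with $\hat Q = \iota_a(Q)$ and $\hat A = I_a$, which counts $q_1 + x_1 = q_2 + x_2 \pm 2\Delta$. In the exact case the same-side count equals $\sum_y r_{\hat Q-\hat A}(y)\, r_{\hat Q-\hat A}(-y)$. By Cauchy--Schwarz this is at most $\sum_y r_{\hat Q-\hat A}(y)^2$, which is the mixed energy. It can be much smaller: take $\hat Q$ a short progression and $\hat A \setminus \hat Q$ a long progression far away. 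So the implication you need runs the wrong way.

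The correct observation is that (iii) and (iv), although written with both $Q$- (resp.\ $U$-)indices listed first, are obtained by pulling back \Cref{lem:decomp-pop}(ii) through $\iota_a$. They are therefore already the mixed energies $E^{\pm 2\Delta}(I_a,\iota_a(Q))$ and $E^{\pm 2\Delta}(I_a,\iota_a(U))$, with one such index on each side of $i+j=k+\ell$. That is exactly the quantity \Cref{lem:low-energy-few-solutions} consumes, and the rest of your argument then goes through unchanged.
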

\begin{proof}
Let $\hat A \coloneqq I_a$, let $\hat Q \coloneqq \set{\iota_a(i) : i \in P}$ and $\hat U \coloneqq \set{\iota_a(i) : i \in P}$, and define $\hat T \coloneqq \hat Q \cup \hat U$. Each $2\Delta$-approximate solutions of Type 2 or 3 involves at least one quasi-popular item. We can thus upper bound these solutions by the number of 3SUM solutions $x + y + z = 0 \pm 2\Delta$ for $x, y, z \in \hat A$, where at least one item is part of $\hat Q$. By \Cref{lem:low-energy-few-solutions}, the number of such solutions is at most
\begin{equation*}
    O\parens*{\sqrt{E^{\pm 2\Delta}(\hat A, \hat Q) |\hat A|}}.
\end{equation*}
\cref{lem:conv-decomp-quasi-pop}~(iii) gives an upper bound of $E^{\pm 2\Delta}(\hat A, \hat Q) \leq n^3 / L$, leading to the stated bound. The number of $2\Delta$-approximate Type-4 solutions can be bounded analogously with \cref{lem:conv-decomp-quasi-pop}~(iv) (and recalling that $L \leq M$).

In summary, the number of $2\Delta$-approximate solutions of Types 2, 3 and 4 is $O(n^2 / L^{1/2})$. However, \cref{lem:approx-threshold} states that the total number of $2\Delta$-approximate solutions is at least $\Omega(n^2 / K)$. Thus, when $L$ and $M$ are larger than $K^2$ by a sufficiently large factor, the number of $2\Delta$-approximate solutions of Type 1 must be at least $\Omega(n^2 / K)$.
\end{proof}

\subsection{Types 1 and 2: Reduction to 3SUM Listing}\label{sec:type12}

We find solutions of Types 1 and 2 in a similar way. For both solution types, the elements are popular enough to be hashed to integers with our machinery from \Cref{sec:toolkit}. After reducing $\Delta$-approximate solutions to exact integer solutions, we need to list all of them. By the choice of $\Delta$ in \Cref{lem:approx-threshold}, there are not too many solutions in total, so we can afford to list all of them for Type~1.

\begin{lemma}[Type-1 Solutions]\label{lem:type1}
    Assume that Integer 3SUM can be solved in time $O(n^{2-\epsilon})$, for some constant $\epsilon > 0$. Then there is a randomized algorithm to detect any solution in $P^3$ in time
    \begin{equation*}
        \tilde O\parens*{nL^{O(1)} + \frac{|P| n}{K^{\epsilon / 2}}}.
    \end{equation*}
\end{lemma}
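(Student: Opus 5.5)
We work with the indexed set $\hat P \coloneqq \iota_a(P) \subseteq I_a$. This turns Type-1 solutions into ordinary 3SUM solutions $x+y+z=0$ with $x,y,z\in\hat P$; the index condition $i+j+k=0$ is encoded in the $W$-part. The plan is to hash $\hat P$ to integers via \Cref{lem:hash-popular-to-int}, list all resulting integer 3SUM solutions with \Cref{lem:3sum-listing-to-decision}, and check each listed triple for being an exact solution.

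\textbf{Step 1: checking the hypotheses of \Cref{lem:hash-popular-to-int}.} First, $\hat P$ must be $(\Delta,O(1))$-spaced. This may fail for $I_a$, since distinct indices could have close values. If so, we first refine $\Delta$ via \Cref{obs:refine-cover}, or more simply work with a smaller threshold $\Delta/C$. Failing that, we argue spacing directly: the sequence entries are arbitrary, but we only need spacing within each index class, and $W$ separates the indices. In the convolution setting, each value $\iota_a(i)$ is its own index, so $I_a$ is trivially $(\Delta,1)$-spaced once $W>2\Delta$. Second, popularity is given by \Cref{lem:conv-decomp-quasi-pop}(i): each $x\in\hat P$ has $r^{\pm 2\Delta}_{\hat P+\hat P-\hat P}(x)\ge |\hat P|^2/4L$. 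We therefore apply the lemma with $C=2$ and parameter $4L$. This produces $h:\hat P\to[\pm 2^{(L\log n)^{O(1)}}]$ in time $\tilde O(nL+(L\log n)^{O(1)})=nL^{O(1)}$.

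\textbf{Step 2: listing.} By \Cref{lem:hash-popular-to-int}(i), each exact solution in $\hat P^3$ survives with constant probability, so we repeat $O(\log n)$ times. By (ii), every integer solution $h(x)+h(y)+h(z)=0$ corresponds to a $100\Delta$-approximate solution. Property (iii) makes each integer value have $O(1)$ preimages, so the number of distinct integer triples is at most the number of $100\Delta$-approximate Type-1 solutions. The integers have $(L\log n)^{O(1)}$ bits. We first reduce them to a polynomial universe by standard modular hashing (\Cref{lem:linear-hashing}), at the cost of extra false positives. Then we call \Cref{lem:3sum-listing-to-decision} with threshold $T$ equal to this count, and for each listed triple we check its $O(1)$ preimage triples exactly.

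\textbf{Step 3: bounding $T$ (main obstacle).} \Cref{lem:approx-threshold} bounds only the number of $\Delta$-approximate solutions, by $O(n^2/K)$, whereas we need to control $100\Delta$-approximate ones. The cleanest fix is to have \Cref{lem:hash-popular-to-int} operate at threshold $\Delta/100$, using \Cref{obs:refine-cover} and a larger $C$. Soundness then yields only $\Delta$-approximate solutions, while completeness is still needed only for exact solutions, so the guarantee does not degrade. We also count false positives from the modular step: with a prime of size about $n^2$, each pair gives $O(1)$ spurious third elements in expectation. Restricting the count to triples involving $P$ gives $T=O(|P|n/K)$, since each element of $P$ lies in few solutions on average; this needs a slightly finer per-element accounting of the $\Delta$-approximate solutions, which is the step I am least sure of.

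Listing via \Cref{lem:3sum-listing-to-decision} on sets of size $|P|$ costs $O(|P|^{2-\epsilon}T^{\epsilon/2}\log n)$. With $T\le |P|n/K$ this is at most
\begin{equation*}
|P|^{2-\epsilon}(|P|n)^{\epsilon/2}K^{-\epsilon/2}\le |P|n\,K^{-\epsilon/2},
\end{equation*}
using $|P|\le n$. Adding the $nL^{O(1)}$ cost of Step 1 gives the claimed bound.
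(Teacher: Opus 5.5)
Your pipeline is the paper's. You work with $\hat P=\iota_a(P)$ and apply \Cref{lem:hash-popular-to-int} at threshold $\Delta/100$ with $C=200$, so soundness only returns $\Delta$-approximate triples. You then shrink the universe with linear hashing, list with \Cref{lem:3sum-listing-to-decision}, check preimages, and repeat $O(\log n)$ times.

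The step that fails is the one you flagged: the bound $T=O(|P|n/K)$ via ``per-element accounting.'' Nothing in the decomposition supports it, and it is false in general. All you know is two things: the total number of $\Delta$-approximate solutions is $O(n^2/K)$ by \Cref{lem:approx-threshold}, and $\hat P^3$ contains at most $|P|^2$ of them, since $i+j+k=0$ fixes $k$. Popular indices are, if anything, the ones carrying many approximate solutions. Suppose $|P|=\Theta(n/\sqrt K)$, which the proof of \Cref{thm:conv-to-integer-reduction} explicitly has to allow. Then all $\Theta(n^2/K)=\Theta(|P|^2)$ approximate solutions may lie in $\hat P^3$, exceeding $|P|n/K=n^2/K^{3/2}$ by a factor $\sqrt K$.

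You do not need this refinement. Take $T=O(n^2/K)$ straight from \Cref{lem:approx-threshold}. Then listing costs $\tilde O(|P|^{2-\epsilon}T^{\epsilon/2})=\tilde O(|P|^{2-\epsilon}n^{\epsilon}K^{-\epsilon/2})\le\tilde O(|P|n/K^{\epsilon/2})$, because $|P|^{1-\epsilon}\le O(n^{1-\epsilon})$. This is exactly the paper's calculation.

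Relatedly, your false-positive accounting for the universe reduction is too coarse. $O(1)$ spurious third elements per pair would add up to $\Theta(|P|^2)$ triples, which can be far more than $n^2/K$. The paper instead hashes into a universe of size $m\ge n^4$, using the random offsets $\lambda_x\in\Lambda\cup\{-h_2(0)\}$ to compensate for the almost-additivity of \Cref{lem:linear-hashing}. A union bound over the $O(n^3)$ triples then shows that with high probability there are no false positives at all. So the number of integer solutions to be listed stays $O(n^2/K)$.
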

\begin{proof}
    Consider the indexed set $\hat P = \set{\iota_a(i) : i \in P}$, and observe that this set is $\Delta/100$-spaced (by choosing $W$ large enough). \Cref{lem:conv-decomp-quasi-pop} implies that
    \begin{equation*}
        r^{\pm 2\Delta}_{\hat P + \hat P - \hat P}(x) \ge \frac{|P|^2}{4L},
    \end{equation*}
    for all $x \in \hat P$.
    Therefore, we can apply our subset sum cover and Frank--Tardos machinery from \Cref{lem:hash-popular-to-int} to the popular elements $\hat P$ with threshold $\Delta / 100$ and constant $C = 200$. This yields a hash function $h_1 : \hat P \to [\pm 2^{(L \log n)^{O(1)}}]$, and runs in time $\tilde O(nL + (L\log n)^{O(1)})$.
    
    The remaining proof first applies a standard universe reduction for the resulting integer instance using linear hashing as in \Cref{lem:linear-hashing}. Finally, we show that all solutions can be listed efficiently.
    Invoke \Cref{lem:linear-hashing} to get a second hash function $h_2 : [\pm 2^{(L \log n)^{O(1)}}] \to [m]$ for $m = n^{O(1)}$ and define $h(x) \coloneqq h_2(h_1(x)) - \lambda_x$ for $x \in \hat P$ and a value $\lambda_x$ chosen uniformly at random from $\Lambda_0 \coloneqq \Lambda \cup \set{-h_2(0)}$. Combining \Cref{lem:hash-popular-to-int,lem:linear-hashing}, we claim that the following two properties hold.
    \begin{enumerate}[label=(\roman*)]
        \item For all $x,y,z \in \hat P$ with $x + y + z = 0 \pm \Delta/100$, we have $\Pr[h(x) + h(y) + h(z) = 0] \ge \Omega(1)$.
        \item With high probability, there are no \emph{false positives} $(x, y, z) \in \hat{P}^3$ with $h(x) + h(y) + h(z) = 0$ and $x + y + z \ne 0 \pm \Delta$. 
    \end{enumerate}
    
    Indeed, for $x,y,z \in \hat P$ with $x + y + z = 0 \pm \Delta / 100$ we have $\Pr[h_1(x) + h_1(y) + h_1(z) = 0] \ge \Omega(1)$ by \Cref{lem:hash-popular-to-int}. Conditioned on this event, almost-additivity of $h_2$ implies
    \begin{align*}
        h_2(0) &= h_2\big(h_1(x) + h_1(y) + h_1(z)\big)\\
        &= h_2\big(h_1(x) + h_1(y)\big) + h_2\big(h_1(z)\big) + \lambda\\
        &= h_2\big(h_1(x)\big) + h_2\big(h_1(y)\big) + h_2\big(h_1(z)\big) + \lambda + \lambda',
    \end{align*}
    for some $\lambda, \lambda' \in \Lambda$.
    Hence, $h(x) + h(y) + h(z) = 0$ in the event that $\lambda_x = -h_2(0)$, $\lambda_y = \lambda$ and $\lambda_z = \lambda'$, so property (i) holds.
    
    Regarding property (ii), by \Cref{lem:hash-popular-to-int}, for all $x,y,z\in \hat P$ with $h_1(x) + h_1(y) + h_1(z) = 0$, we have $x + y + z = 0 \pm \Delta$.
    So fix any triple $x,y,z \in \hat P$ with $h_1(x) + h_1(y) + h_1(z) = 0$, and fix any bucket~$k \in [m]$. The probability that
    \begin{equation*}
        h_2\big(h_1(x) + h_1(y)\big) = h_2\big(h_1(z)\big) + k
    \end{equation*}
    is at most $O(1/m)$ by pairwise independence (\Cref{lem:linear-hashing}). Moreover, the probability of this event for any choice $k \in \Lambda_0 + \Lambda_0 + \Lambda_0 - \Lambda_0$ is at most $O(1/m)$ by a union bound over $O(1)$ options. Conditioned on this event not happening, $(x, y, z)$ is not a false positive, as otherwise the almost-linearity would imply that
    \begin{align*}
        0 &= h(x) + h(y) + h(z) \\
        &= h_2\big(h_1(x)\big) + h_2\big(h_1(y)\big) + h_2\big(h_1(z)\big) - \lambda_x - \lambda_y - \lambda_z \\
        &= h_2\big(h_1(x) + h_1(y)\big) - h_2\big(h_1(z)\big) + \lambda - \lambda_x - \lambda_y - \lambda_z,
    \end{align*}
    for some $\lambda, \lambda_x, \lambda_y, \lambda_z \in \Lambda_0$. This shows that any $(x, y, z)$ becomes a false positive with probability at most $O(1/m)$. By a union bound over the $n^3$ triples, the probability of introducing any false positives is at most $O(n^3 / m) \leq 1/n^{\Omega(1)}$ for $m \ge n^4$, proving property~(ii).

    Putting both properties together, (i) with constant probability any fixed proper solution is retained in $h(\hat P)$, (ii) and at the same time we do not introduce any false positives. Thus, by our choice of threshold $\Delta$ using \Cref{lem:approx-threshold}, the number of solutions in $h(\hat{P})^3$ is at most $T = O(n^2 / K)$.
    By the listing-to-decision reduction (\Cref{lem:3sum-listing-to-decision}), we can list all of them in time \begin{equation*}
        \tilde O\parens*{n^{2-\epsilon} T^{\epsilon/2}} \le \tilde O\parens*{|P|^{2-\epsilon} \frac{n^\epsilon}{K^{\epsilon / 2}}} \le \tilde O\parens*{\frac{|P|n}{K^{\epsilon / 2}}}. 
    \end{equation*}
    For each solution listed in this way, we explicitly check whether its preimage under $h$ forms a solution.
    We repeat this algorithm $O(\log n)$ times to amplify the success probability.
\end{proof}

The algorithm for solutions of Type 2 follows the same structure as the Type 1 algorithm. The main difference is in the final argument to bound the number of listed solutions. In the previous lemma, we used the bound on the total number of $\Delta$-approximate solutions from \Cref{lem:approx-threshold}. This time, we get an even stronger bound on the number of solutions since one of the three elements is only quasi-popular and hence not involved in as many solutions as popular elements. Therefore, we apply the energy bound from \Cref{lem:conv-decomp-quasi-pop} that directly bounds the number of solutions via \Cref{lem:low-energy-few-solutions}.

\begin{lemma}[Type-2 Solutions]\label{lem:type2}
Assume that Integer 3SUM can be solved in time $O(n^{2-\epsilon})$, for some constant $\epsilon > 0$. Then there is a randomized algorithm to detect any solution in $P \times Q \times (P \cup Q)$ in time
\begin{equation*}
    \tilde O\parens*{nM^{O(1)} + \frac{n^2}{L^{\epsilon / 4}}}.
\end{equation*}
\end{lemma}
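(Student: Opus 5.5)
We follow the proof of \Cref{lem:type1} almost verbatim, now on the indexed set $\hat S \coloneqq \set{\iota_a(i) : i \in P \cup Q}$ instead of $\hat P$. \Cref{lem:conv-decomp-quasi-pop}~(i) and~(ii) together give, for every $x \in \hat S$,
\begin{equation*}
    r^{\pm 2\Delta}_{\hat S + \hat S - \hat S}(x) \geq \frac{|P \cup Q|^2}{4M}.
\end{equation*}
For $x \in \hat P$ this uses $M \geq L$ together with the fact that popular quadruples inside $P^3$ are also inside $(P\cup Q)^3$. As $\hat S$ is $\Delta/100$-spaced for $W$ large, \Cref{lem:hash-popular-to-int} applies to $\hat S$ with threshold $\Delta/100$, constant $C=200$ and parameter $4M$. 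This yields $h_1 : \hat S \to [\pm 2^{(M\log n)^{O(1)}}]$ in time $\tilde O(nM + (M\log n)^{O(1)}) = \tilde O(nM^{O(1)})$.

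Next we compose with the linear hash $h_2$ from \Cref{lem:linear-hashing} into a universe $[m]$, $m \geq n^4$, with random offsets $\lambda_x \in \Lambda \cup \set{-h_2(0)}$. The two properties from the Type-1 proof carry over with the same arguments. First, every $\Delta/100$-approximate solution survives with constant probability. Second, with high probability there are no false positives, i.e., every integer solution corresponds to a $\Delta$-approximate real solution. To keep the three roles separate, we form an instance with three sets $h(\hat P)$, $h(\hat Q)$, $h(\hat P \cup \hat Q)$, so that only triples in $P \times Q \times (P \cup Q)$ are listed; the three-set form is exactly what \Cref{lem:3sum-listing-to-decision} accepts. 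In particular, every exact Type-2 solution is among the listed triples with constant probability, and we repeat $O(\log n)$ times to amplify.

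The remaining point, and the only real difference from Type~1, is bounding the number $T$ of listed solutions. Every listed triple is a $\Delta$-approximate (hence $2\Delta$-approximate) solution containing at least one quasi-popular element. By \Cref{lem:low-energy-few-solutions}, together with the energy bound \Cref{lem:conv-decomp-quasi-pop}~(iii), $E^{\pm 2\Delta}(I_a,\hat Q) \le n^3/L$, this gives
\begin{equation*}
    T = O\parens*{\sqrt{n \cdot n^3/L}} = O\parens*{n^2/L^{1/2}},
\end{equation*}
exactly as in \Cref{lem:approx-sol-count}. We need $I_a$ to be $(2\Delta,O(1))$-spaced, which follows from the indexed-set construction since distinct indices are separated by $W$. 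Then \Cref{lem:3sum-listing-to-decision} lists all of them in time
\begin{equation*}
    \tilde O\parens*{n^{2-\epsilon} T^{\epsilon/2}} = \tilde O\parens*{n^{2-\epsilon} \cdot n^{\epsilon} L^{-\epsilon/4}} = \tilde O\parens*{n^2/L^{\epsilon/4}}.
\end{equation*}
For each listed triple we check its $O(1)$ preimages under $h$ (bucket load is constant by \Cref{lem:hash-popular-to-int}~(iii), and $h_2$ is injective on the relevant values with high probability for $m \geq n^4$) for an exact solution. The main obstacle I expect is making sure the soundness property~(ii) of \Cref{lem:hash-popular-to-int} only produces $\Delta$-approximate, not coarser, solutions. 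That way the count via~(iii) of \Cref{lem:conv-decomp-quasi-pop} at threshold $2\Delta$ applies; this is handled by the choice of threshold $\Delta/100$ as in Type~1.
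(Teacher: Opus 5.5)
Your proposal follows the paper's proof essentially step for step. It uses the same set $\hat S$, the same call to \Cref{lem:hash-popular-to-int} (threshold $\Delta/100$, $C=200$, parameter $4M$) composed with linear hashing as in \Cref{lem:type1}, and the same bound $T=O(n^2/L^{1/2})$ from \Cref{lem:conv-decomp-quasi-pop}~(iii) and \Cref{lem:low-energy-few-solutions}. This gives listing time $\tilde O(n^{2-\epsilon}T^{\epsilon/2})=\tilde O(n^2/L^{\epsilon/4})$. Listing in $h(\hat P)\times h(\hat Q)\times h(\hat S)$ instead of the paper's $h(\hat Q)\times h(\hat S)^2$ makes no difference.

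One justification needs repair. For $x\in\hat P$ you appeal to $M\ge L$ and $P^3\subseteq(P\cup Q)^3$. That only gives $r^{\pm 2\Delta}_{\hat S+\hat S-\hat S}(x)\ge |P|^2/4L$, and $|P|^2/4L\ge|P\cup Q|^2/4M$ does not follow from $M\ge L$ when $|P|$ is much smaller than $|P\cup Q|$.

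The correct reason comes from the construction in the proof of \Cref{lem:conv-decomp-quasi-pop}. There, $P\cup Q$ is exactly the popular part produced by the first application of \Cref{lem:conv-decomp-pop} with parameter $M$. Hence every index of $P\cup Q$, including those in $P$, has at least $|P\cup Q|^2/4M$ quadruples inside $(P\cup Q)^3$. Alternatively, under the parameter choice $M=L^2K^2$ of the final proof, you could use $|P|=\Omega(n/\sqrt K)$ from \Cref{lem:approx-sol-count}. With this fix your argument is complete.
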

\begin{proof}
Consider the indexed sets $\hat P = \set{\iota_a(i) : i \in P}$ and $\hat Q = \set{\iota_a(i) : i \in Q}$, and write $\hat S = \hat P \cup \hat Q$. By properties (i) and (ii) of \Cref{lem:conv-decomp-quasi-pop}, we have
\begin{equation*}
    r^{\pm 2\Delta}_{\hat S+ \hat S- \hat S}(x) \ge \frac{|\hat S|^2}{4M}
\end{equation*}
for all $x \in \hat S$. Hence, we can apply \Cref{lem:hash-popular-to-int} on $\hat S$ with parameter $\Delta \gets \Delta / 100$ and constant~\makebox{$C = 200$} to obtain a hash function $h_1 : \hat S \to [\pm 2^{(M \log n)^{O(1)}}]$ in time $\tilde O(nM + (M\log n)^{O(1)})$.
Next, we apply \Cref{lem:linear-hashing} to get a second hash function $h_2 : [\pm 2^{(M \log n)^{O(1)}}] \to [m]$ for $m = n^{O(1)}$ as well as a constant-sized set $\Lambda$. For $x \in \hat P$, we define $h(x) \coloneqq h_2(h_1(x)) - \lambda_x$ for a uniform random choice of~$\lambda_x$ from $\Lambda \cup \set{-h_2(0)}$. Just as in \cref{lem:type1}, we have the following two properties:
\begin{enumerate}[label=(\roman*)]
    \item For all $x,y,z \in \hat S$ with $x + y + z = 0 \pm \Delta/100$, we have $\Pr[h(x) + h(y) + h(z) = 0] \ge \Omega(1)$.
    \item With high probability, there are no \emph{false positives} $(x, y, z) \in \hat{S}^3$ with $h(x) + h(y) + h(z) = 0$ and $x + y + z \ne 0 \pm \Delta$. 
\end{enumerate}

Only now starts the proof to differ from \cref{lem:type1}.
To bound the number of $\Delta$-approximate solutions in $\hat{S}^3$ with at least one element from $\hat Q$, notice that by property (iii) of the decomposition from \Cref{lem:conv-decomp-quasi-pop}, we have
\begin{equation*}
    E^{\pm 2\Delta}(\hat Q, \hat S) \le E^{\pm 2\Delta}(\hat Q, \hat A) \le O\parens*{\frac{n^3}{L}}.
\end{equation*}
In this case, \Cref{lem:low-energy-few-solutions} implies that the number of solutions $x + y + z = 0 \pm 2\Delta$, where $x, y, z \in \hat P \cup \hat Q$ and where at least one item is from $\hat Q$, is at most $T = O(n^2 / L^{1/2})$. Moreover, property~(ii) guarantees that these are the only solutions in $h(\hat P \cup \hat Q)$, where at least one item is from $h(\hat Q)$. We apply the listing-to-decision reduction (\Cref{lem:3sum-listing-to-decision}) to list all such solutions (in the tripartite instance $h(\hat Q) \times h(\hat P \cup \hat Q)^2$) in time $\tilde O(n^2 / T^{\epsilon/2}) = \tilde O(n^2 / L^{\epsilon / 4})$. By property~(i), any fix proper solution survives in the hashed instance and is thus correctly listed. Finally, to amplify the success probability we repeat the overall algorithm $O(\log n)$ times.
\end{proof}

\subsection{Type 3: Fredman's Trick for Convolutions}\label{sec:type3}
Recall that Type-3 solutions are those which involve at least one popular and at least one unpopular item. We cannot hope to replace the unpopular item by an integer as for Types 1 and 2, and for this reason we resort to a different strategy, loosely inspired by~\cite{ChanVX23}.

On the one hand, we exploit that Type-3 solutions involve an unpopular item to bound the total number of Type-3 solutions (as in \cref{lem:approx-sol-count}); we can therefore afford to list all such solutions. On the other hand, we exploit that one of the items is popular to implement the listing efficiently. Intuitively, the popular items should exhibit a lot of structure by \cref{lem:conv-decomp-quasi-pop}~(i). A good example is to think of the popular items as a linear function, $a[i] = i \cdot \alpha$. But this makes the popular items quite predictable, and we may hope to test whether $a[i] + a[j] + a[k] = 0 \pm \Delta$ without actually evaluating $a[i]$. In the example, it would be sufficient to test whether $\alpha \cdot i + a[j] + a[k] = 0 \pm \Delta$, which can be equivalently expressed as $a'[j] + a'[k] = 0 \pm \Delta$ for the rewritten sequence $a'[j] = a[j] + j \cdot \alpha$. But this remaining condition $a'[j] + a'[k] = 0 \pm \Delta$ involves only two independent indices $j$ and $k$, and we can therefore list all solutions in output-sensitive linear time. Turning this intuition into a formal argument requires some more effort.

\begin{lemma}\label{lem:few-solutions-unpop}
    The number of solutions $(j,k,p,q,r) \in U \times [\pm n]^4$ to 
    \begin{equation*} 
        \begin{cases}
            j + k + p + q = r, \\
            a[j] + a[k] + a[p] + a[q] = a[r] \pm 2\Delta
        \end{cases}
    \end{equation*}
    is bounded by $O(n^4 / M^{1/2})$.
\end{lemma}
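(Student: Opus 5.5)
We count 5-tuples $(j,k,p,q,r)\in U\times[\pm n]^4$ with $j+k+p+q=r$ and $a[j]+a[k]+a[p]+a[q]=a[r]\pm 2\Delta$. The plan is to pass to the indexed set $\hat A = I_a$ and bound this as a weighted additive-quadruple count involving $U$. We then apply Cauchy--Schwarz, exactly as in \cref{lem:low-energy-few-solutions}, to reduce to the energy bound of \cref{lem:conv-decomp-quasi-pop}~(iv).

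Write $\hat U = \iota_a(U)$. Via $\iota_a$, the solutions correspond to tuples $(x,y,z,w,v)\in \hat U\times \hat A^4$ with $x+y = v - z - w \pm 2\Delta$. Fix the pair $(z,w)$, of which there are at most $n^2$ choices. For a fixed pair the count is the number of $(x,y,v)\in \hat U\times\hat A\times\hat A$ with $x+y-v = -(z+w) \pm 2\Delta$, which is $r^{\pm2\Delta}_{\hat U+\hat A-\hat A}(-(z+w))$. Summing over $(z,w)$ gives
\begin{equation*}
\sum_{(z,w)} r^{\pm 2\Delta}_{\hat U+\hat A-\hat A}\bigl(-(z+w)\bigr) \le \Bigl(\sum_{(z,w)} 1\Bigr)^{1/2}\Bigl(\sum_{(z,w)} r^{\pm2\Delta}_{\hat U+\hat A-\hat A}(-(z+w))^2\Bigr)^{1/2}.
\end{equation*}
The first factor is $O(n)$. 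In the second factor we regroup: the multiset of values $-(z+w)$ is spread out, since $\hat A$ is $(\Delta/100)$-spaced after choosing $W$ large. So we bucket the line into width-$\Delta$ cells as in \cref{lem:low-energy-few-solutions}. The squared count $\sum_{(z,w)} r(\cdot)^2$ is then at most $O(1)$ times the number of 6-tuples $(z,w,x,y,v,\ldots)$, which is not directly an energy. Because of this, the cleaner route is to split differently, pairing $x$ with $y$.

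\textbf{Main route.} Use $\sum_{s}$ over cells of $R_1(s)R_2(s)$, where $R_1(s)=r^{\pm 2\Delta}_{\hat U+\hat A}(s)$ counts pairs $(x,y)$ and $R_2(s)=r^{\pm2\Delta}_{\hat A-\hat A-\hat A}(s)$ counts triples $(v,z,w)$. By Cauchy--Schwarz the count is at most $O\bigl((\sum_s R_1(s)^2)^{1/2}(\sum_s R_2(s)^2)^{1/2}\bigr)$. The first sum is $O(E^{\pm 2\Delta}(\hat U,\hat A))$ up to constant-factor threshold changes, and \cref{lem:conv-decomp-quasi-pop}~(iv) gives $E^{\pm2\Delta}(\hat U,\hat A)\le n^3/M$. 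For the second sum, use the trivial bound $R_2(s)\le O(n^2)$ per cell, which holds since for fixed $(z,w)$ only $O(1)$ values $v$ land in a cell by spacing. Together with $\sum_s R_2(s)=O(n^3)$, this gives $\sum_s R_2(s)^2\le O(n^5)$. Multiplying yields $O\bigl(\sqrt{n^3/M\cdot n^5}\bigr)=O(n^4/M^{1/2})$, as claimed.

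\textbf{Main obstacle.} The technical care lies in the cell discretization. Each relation carries a $\pm 2\Delta$ slack, whereas the energy in (iv) is stated with threshold $2\Delta$. Bucketing into cells of width $\Delta$ means that pairs in the same or adjacent cells differ by at most $2\Delta$. The plan is to handle this as in \cref{lem:low-energy-few-solutions}: replace each $\pm2\Delta$ window by $O(1)$ adjacent cells, and note that each cell-pair is counted $O(1)$ times. This loses only constants.
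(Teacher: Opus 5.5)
Your main route is correct. It uses the same idea as the paper, Cauchy--Schwarz against $E^{\pm2\Delta}(\hat U,\hat A)\le n^3/M$, but organizes it differently.

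The paper first fixes two of the free $\hat A$-variables, $d$ and $e$. The remaining three-term condition then has its third variable ranging over the translate $\hat A+d-e$, which is still spaced. It applies \cref{lem:low-energy-few-solutions} as a black box to get $O\bigl(\sqrt{E^{\pm2\Delta}(\hat U,\hat A)\,|\hat A|}\bigr)=O(n^2/M^{1/2})$ per pair, and sums over the $O(n^2)$ pairs.

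You instead apply Cauchy--Schwarz once, globally, pairing the cell counts $R_1$ of $\hat U+\hat A$ with the cell counts $R_2$ of triples in $\hat A-\hat A-\hat A$. Since that multiset is not spaced, \cref{lem:low-energy-few-solutions} does not apply directly. You correctly substitute $\sum_s R_2(s)^2\le\max_s R_2(s)\cdot\sum_s R_2(s)=O(n^2)\cdot O(n^3)$, which needs only the spacing of $\hat A$ itself.

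The two bounds are closely related. Decompose $R_2=\sum_{(d,e)}R_2^{(d,e)}$ by the two variables the paper fixes. The paper's bound is then, up to constants, $\|R_1\|_2\sum_{(d,e)}\|R_2^{(d,e)}\|_2$. By Minkowski's inequality this is never smaller than your $\|R_1\|_2\|R_2\|_2$, and here both come to $O(n^4/M^{1/2})$. The paper's route is shorter because it reuses an existing lemma; yours is self-contained.

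For the write-up, use the width-$\Delta$ cell version from your last paragraph rather than the windowed $r^{\pm2\Delta}$ as literally written. With cells, $\sum_s R_1(s)^2\le E^{\pm\Delta}(\hat U,\hat A)\le E^{\pm2\Delta}(\hat U,\hat A)$, and each $\pm2\Delta$ window meets only $O(1)$ cells. You can also drop the abandoned first attempt.
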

\begin{proof}
    Let $\hat A = I_a$ denote the indexed set of $a$, and let $\hat U \subseteq \hat A$ denote the subset of elements corresponding to the entries in $U$. Then the number of solutions from the lemma statement is exactly
    \begin{equation*}
        \abs{\set{(u,b,c,d,e) \in \hat U \times \hat{A}^4 : u + b + c + d = e \pm 2\Delta}}.
    \end{equation*}
    This can be bounded by
    \begin{align*}
        &\abs{\set{(u,b,c,d,e) \in \hat U \times \hat{A}^4 : u + b + c + d = e \pm 2\Delta}}\\
        &\qquad=\sum_{d \in \hat A} \sum_{e \in \hat A} \abs{\set{(u,b,c) \in \hat U \times \hat A \times (\hat A + d - e): u + b + c = 0 \pm 2\Delta}}\\
        &\qquad\le O\parens*{\sum_{d \in \hat A} \sum_{e \in \hat A} \sqrt{E^{\pm 2\Delta}(\hat U, \hat A) \abs{\hat{A}}}}\\
        &\qquad\le O\parens*{\sum_{d \in \hat A} \sum_{e \in \hat A} \frac{n^2}{M^{1/2}}}\\
        &\qquad= O\parens*{\frac{n^4}{M^{1/2}}},
    \end{align*}
    where the first inequality is by \Cref{lem:low-energy-few-solutions} and the second inequality is by \Cref{lem:conv-decomp-quasi-pop}~(iv).
\end{proof}

\begin{lemma}[Type-3 Solutions]\label{lem:type3}
There is a randomized algorithm to detect any solution in $P \times U \times [\pm n]$ in time
\begin{equation*}
    \tilde O\parens*{n L + \frac{n^2 L}{M^{1/2}}}.
\end{equation*}
\end{lemma}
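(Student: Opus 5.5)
The plan is to combine the popularity of the $P$-index with the unpopularity of the $U$-index via a Fredman-type substitution, and then to list, rather than decide, the resulting approximate solutions.

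\emph{Substitution.} Fix a solution $(i,j,k)$ with $i \in P$, $j \in U$, $i+j+k=0$ and $a[i]+a[j]+a[k]=0$. By \cref{lem:conv-decomp-quasi-pop}~(i), $i$ takes part in at least $|P|^2/4L$ quadruples $(p,q,r) \in P^3$ with
\begin{equation*}
i+p=q+r \quad\text{and}\quad a[i]+a[p]=a[q]+a[r]\pm 2\Delta .
\end{equation*}
Substituting $a[i] = a[q]+a[r]-a[p] \pm 2\Delta$ turns the solution into a tuple $(j,k,q,r,p) \in U \times [\pm n]^4$ satisfying
\begin{equation*}
j+k+q+r=p \quad\text{and}\quad a[j]+a[k]+a[q]+a[r]=a[p]\pm 2\Delta .
\end{equation*}
These are exactly the tuples counted in \cref{lem:few-solutions-unpop}, so there are only $O(n^4/M^{1/2})$ of them. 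Each true solution has a $\Omega(1/L)$-fraction of the pairs $(q,r) \in P^2$ as valid witnesses. Hence if we draw $O(L\log n)$ random witness pieces, every Type-3 solution is represented by at least one sampled witness with high probability.

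\emph{Listing step.} After fixing the sampled piece of the witness (say the pair $(q,r)$, which determines the offset $a[q]+a[r]$ and the index shift $q+r$), I want the remaining condition to couple only two free indices. Concretely, I would rewrite the condition for the remaining unknowns as an approximate equality $x_j = y_k \pm O(\Delta)$ between two explicitly computable sequences $x$ and $y$, with all linked indices determined by $j$ and $k$. Such a condition can be listed output-sensitively after sorting, because the instance is $(\Delta,O(1))$-spaced thanks to the $W$-scaling in $I_a$. This gives time $\tilde O(n + \#\text{outputs})$ per sample. For every listed candidate we recover $i=-j-k$ and test $a[i]+a[j]+a[k]=0$ exactly.

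\emph{Accounting.} The sampling step costs $\tilde O(nL)$ in total over all samples. For the outputs, a fixed $5$-tuple is produced only when its particular witness piece is sampled, so the expected total output over $O(L\log n)$ samples is $O(L \log n)$ times the per-sample share of the $O(n^4/M^{1/2})$ tuples. This should come out to $\tilde O(n^2 L / M^{1/2})$, which matches the claimed bound.

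\emph{Main obstacle.} The difficult step is choosing which part of the witness quadruple to sample so that the residual condition really involves only two free indices, while each sample still lists only an $n^{-2}$-fraction of the tuples of \cref{lem:few-solutions-unpop}. This is the convolution analogue of Fredman's trick from the overview, where a linear popular sequence $a[i]=i\alpha$ is absorbed into a rewritten sequence $a'[j]=a[j]+j\alpha$. My first attempt is to sample the pair $(p,q)$, so that $r=i+p-q$ is tied to $i=-j-k$. I would then try to separate the dependence on $j$ and on $k$ by grouping terms as $\bigl(a[j]+\cdot\bigr)$ versus $\bigl(a[k]+\cdot\bigr)$, using the spacing of $I_a$ to bucket by index. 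If that does not decouple, the fallback is to sample the pair $(q,r)$ instead and to treat $(j,k)$ versus $p$ as a meet-in-the-middle listing.
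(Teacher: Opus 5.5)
There is a genuine gap, and it is the step you flag as the main obstacle. You never give a sampling scheme under which the residual condition splits into a term depending only on $j$ plus a term depending only on $k$, and neither of your candidates can work. In your notation the witness satisfies $i+p=q+r$ with $i=-j-k$. So if you fix any two of $p,q,r$ independently of $(j,k)$, the third is forced to be a constant plus or minus $j+k$: fixing $(p,q)$ gives $r=p-q-j-k$, and fixing $(q,r)$ gives $p=q+r+j+k$. Either way a term $a[c\pm(j+k)]$ survives. Listing the pairs $(j,k)$ with $a[j]+a[k]\pm a[c\pm(j+k)]=\mathrm{const}\pm 2\Delta$ is again a Convolution-3SUM-type listing problem. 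Neither sorting nor a meet-in-the-middle between $(j,k)$ and $p$ lists it in time $\tilde O(n+\text{outputs})$.

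The missing idea is to randomize the \emph{index sums} rather than the witness indices themselves. Sample $g,h\in[\pm 2n]$ uniformly and let the witness indices move with $j$ and $k$, namely $q=g-j$ and $r=h-k$. Then $p=q+r-i=g+h$ is a constant, and the residual condition becomes
\[
  b[j]+c[k]=a[g+h]\pm 2\Delta, \qquad b[j]=a[j]+a[g-j], \quad c[k]=a[k]+a[h-k].
\]
You can list this output-sensitively by sorting $c$ and scanning a width-$4\Delta$ window for each $j\in U$.

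For every fixed $(j,k)$, the pair $(g-j,h-k)$ is uniform on $[\pm n]^2$ conditioned on landing there, which happens with constant probability. So the popularity of $i$ still gives a hit with probability $\Omega(1/L)$ per iteration, as in the paper's argument. Moreover, a tuple $(j,k,q,r,p)$ counted in \cref{lem:few-solutions-unpop} is listed only when $(g,h)=(j+q,k+r)$, i.e.\ for exactly one of $(4n+1)^2$ equally likely choices. This gives expected output $O(n^2/M^{1/2})$ per iteration, and hence the claimed $\tilde O(nL+n^2L/M^{1/2})$ over $\Theta(L\log n)$ iterations.
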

\begin{proof}
The algorithm runs in $\Theta(L \log n)$ iterations. In each iteration, we sample $g, h \in [\pm 2n]$ uniformly at random. If $g + h \not\in [\pm n]$ (which happens with probability at most $1 - \Omega(1)$), we simply skip this iteration. Otherwise, construct the sequences $b, c \in (\Real \cup \set{\bot})^{[\pm n]}$ defined by
\begin{align*}
    b[j] &\coloneqq
    \begin{cases}
        a[j] + a[g - j] &\text{if $g - j \in [\pm n]$,} \\
        \bot &\text{otherwise,}
    \end{cases} \\
    c[k] &\coloneqq
    \begin{cases}
        a[k] + a[h - k] &\text{if $h - k \in [\pm n]$,} \\
        \bot &\text{otherwise.}
    \end{cases}
\end{align*}
We then enumerate all pairs $(j, k) \in U \times [\pm n]$ with
\begin{equation} \label{eq:fredman}
    b[j] + c[k] = a[g + h] \pm 2\Delta.
\end{equation}
More specifically, we first sort $\set{c[k] : k \in [\pm n]}$. Then we enumerate all $j \in U$, and all~\makebox{$k \in [\pm n]$} satisfying \eqref{eq:fredman} by scanning over the respective width-$4\Delta$ interval. For each such pair $j, k$ we test if it forms a solution, $a[-j-k] + a[j] + a[k] = 0$. If we ever find a solution in this way, we stop and report it.

\paragraph{Correctness}
Suppose $(i, j, k) \in P \times U \times [\pm n]$ is a solution. We argue that the algorithm detects $(i, j, k)$ with high probability over all iterations. Call an iteration \emph{successful} if the sampled indices $g$ and $h$ satisfy $g - j, h - k \in [\pm n]$ and
\begin{equation} \label{eq:fredman-energy}
    a[i] + a[g + h] = a[g - j] + a[h - k] \pm 2\Delta.
\end{equation}
We argue that each iteration is successful with probability at least $\Omega(1/L)$. Indeed, with constant probability we satisfy $g - j, h - k \in [\pm n]$, and whenever this happens $g - j$ and $h - k$ are uniformly distributed in $[\pm n]$. In particular, the pair $(g - j, h - k)$ is uniformly distributed in $[\pm n]^2$. Now recall that $i \in P$, and that as a consequence $i$ is involved in at least $\Omega(n^2 / L)$ additive quadruples $(i = i_1, i_2, i_3, i_4)$ by \cref{lem:conv-decomp-quasi-pop}~(i). Thus, with probability at least $\Omega(1/L)$, we hit one such tuple with $i_3 = g - j$ and $i_4 = h - k$. In this case, we also have $g + h = i_3 + i_4 + j + k = i_1 + i_2 + j + k = i_2$, recalling that $i + j + k = 0$, and thus \cref{eq:fredman-energy} holds.

We now argue that in any successful iteration, the solution $(i, j, k)$ is listed. Indeed, combining~\eqref{eq:fredman-energy} with the fact that $(i, j, k)$ is a solution, $a[i] + a[j] + a[k] = 0$, we get that
\begin{align*}
    b[j] + c[k]
    &= a[j] + a[g - j] + a[k] + a[h - k] \\
    &= a[g - j] + a[h - k] - a[i] \\
    &= a[g + h] \pm 2\Delta.
\end{align*}
Thus, the pair $(j, k)$ satisfies Equation~\eqref{eq:fredman} and we report $(i, j, k)$.

\paragraph{Running Time}
Finally, focus on the running time. The algorithm runs in $\Theta(L \log n)$ iterations, and the running time of each iteration is dominated by listing the solutions to Equation~\eqref{eq:fredman}, in time proportional to the number of solutions. We argue that the expected number of solution in each iteration is at most $O(n^2 / M^{1/2})$. From this the claimed time bound follows (at least in expectation; to achieve an upper bound with high probability one may repeat the algorithm a logarithmic number of times and interrupt the excessively expensive runs). Indeed, note that each solution to Equation~\eqref{eq:fredman} is of the form
\begin{equation*}
    a[j] + a[g - j] + a[k] + a[h - k] = a[g + h] \pm 2\Delta,
\end{equation*}
or equivalently, by renaming $p = g - j$ and $q = h - k$,
\begin{equation} \label{eq:big-energy}
    \begin{cases}
        j + k + p + q = r, \\
        a[j] + a[k] + a[p] + a[q] = a[r] \pm 2\Delta.
    \end{cases}
\end{equation}
Using the fact that we only enumerate solutions where $j$ is unpopular and applying \Cref{lem:few-solutions-unpop}, the total number of solutions to Equation~\eqref{eq:big-energy} is bounded by $n^4 / M^{1/2}$. This leaves $n^2 / M^{1/2}$ solutions in expectation over the choice of $g$ and $h$.
\end{proof}

\subsection{Type 4: Iterate}\label{sec:type4}

Solutions of Type 4 are not found in the current iteration. Instead, notice that any solution containing at least one element of $P$ is covered by Types 1, 2, and 3. Hence, we can remove all elements in $P$ and iterate, starting from finding a new threshold $\Delta$.
At some point, even the number of $\infty$-approximate solutions becomes truly subquadratic. Then, we can simply list all triples where no element has been removed, using the listing-to-decision reduction.

\begin{proof}[Proof of \Cref{thm:conv-to-integer-reduction}]
The algorithm works in iterations. In one iteration, we first compute the approximation threshold $\Delta$ with \Cref{lem:approx-threshold}.
Then, we compute the partition $[\pm n] = P \sqcup Q \sqcup U$ using \Cref{lem:conv-decomp-quasi-pop}.
Next, we check for solutions involving at least one index from $P$ by invoking \Cref{lem:type1,lem:type2,lem:type3}.
Afterwards, we remove $P$ from $a$, i.e., replace all entries $a[i]$ with $i \in P$ by $\bot$.
Finally, when \Cref{lem:approx-threshold} reports that there are at most $O(n^2 / K)$ $\infty$-approximate solutions, we list all solutions $i + j + k = 0$ for indices $i,j,k \in [\pm n]$ corresponding to non-$\bot$ entries using \Cref{lem:3sum-listing-to-decision} in time $O(n^2/ K^{\epsilon/2})$.

Let $I$ denote the total number of iterations and $P_i$ the set of popular items in the $i$-th iteration. Then the total time over all iterations is
\begin{equation*}
    O\parens*{\frac{n^2}{K^{\epsilon/2}}} + \sum_{i=1}^I \tilde O\parens*{n (K L M)^{O(1)} + \frac{|P_i|n}{K^{\epsilon / 2}} + \frac{n^2}{L^{\epsilon / 4}} + \frac{n^2L}{M^{1/2}}}.
\end{equation*}

It remains to bound $I$. To this end, we analyze the number of elements removed in each iteration,~$\abs{P}$. Here we apply \cref{lem:approx-sol-count} stating that the total number of $2\Delta$-approximate solutions of type 1 is at least $\Omega(n^2 / K)$. Recall that these are the solutions involving \emph{only} popular items. But the number of such solutions is clearly at most $|P|^2$, entailing that $|P| \geq \Omega(n / \sqrt{K})$. Thus, the number of iterations is at most $I \leq O(\sqrt K)$. With this bound, the previous bound on the running time turns into 
\begin{equation*}
    \tilde O\parens*{n(KLM)^{O(1)} + \frac{n^2}{K^{\epsilon / 2}} + \frac{n^2K^{1/2}}{L^{\epsilon/4}} + \frac{n^2LK^{1/2}}{M^{1/2}}}.
\end{equation*}
Choosing $L = K^{4/\epsilon}$ (which certainly satisfies the premise of \cref{lem:approx-sol-count}) and $M = L^2K^2$, this becomes
\begin{equation*}
    \tilde O\parens*{n K^{O(1/\epsilon)} + \frac{n^2}{K^{\Omega(\epsilon)}}}.
\end{equation*}
This is truly subquadratic, $n^{2-\Omega(\epsilon^2)}$, for some small enough choice $K = n^{\Theta(\epsilon)}$.
\end{proof}

Finally, \cref{thm:main} follows by concatenating the reductions from \cref{thm:real-to-real-conv,thm:conv-to-integer-reduction}. Having made sure that our reduction can be implemented in the realistic comparison-addition model (which mostly concerns the Frank--Tardos theorem in \cref{sec:toolkit:sec:frank-tardos:sec:realistic}), \cref{thm:equiv} also immediately follows.

%% file: main.bib
@inproceedings{Patrascu10,
  author       = {Mihai P{\u{a}}tra{\c{s}}cu},
  _editor       = {Leonard J. Schulman},
  title        = {Towards polynomial lower bounds for dynamic problems},
  booktitle    = {42nd Annual {ACM} Symposium on Theory of Computing ({STOC} 2010)},
  pages        = {603--610},
  publisher    = {{ACM}},
  year         = {2010},
  url          = {https://doi.org/10.1145/1806689.1806772},
  doi          = {10.1145/1806689.1806772},
  timestamp    = {Wed, 14 Nov 2018 10:51:36 +0100},
  biburl       = {https://dblp.org/rec/conf/stoc/Patrascu10.bib},
  bibsource    = {dblp computer science bibliography, https://dblp.org}
}

@inproceedings{ChanL15,
  author       = {Timothy M. Chan and
                  Moshe Lewenstein},
  _editor       = {Rocco A. Servedio and
                  Ronitt Rubinfeld},
  title        = {Clustered Integer {3SUM} via Additive Combinatorics},
  booktitle    = {47th Annual {ACM} Symposium on Theory of Computing ({STOC} 2015)},
  pages        = {31--40},
  publisher    = {{ACM}},
  year         = {2015},
  url          = {https://doi.org/10.1145/2746539.2746568},
  doi          = {10.1145/2746539.2746568},
  timestamp    = {Tue, 06 Nov 2018 11:07:04 +0100},
  biburl       = {https://dblp.org/rec/conf/stoc/ChanL15.bib},
  bibsource    = {dblp computer science bibliography, https://dblp.org}
}

@inproceedings{DudekGS20,
  author       = {Bartłomiej Dudek and
                  Paweł Gawrychowski and
                  Tatiana Starikovskaya},
  _editor       = {Konstantin Makarychev and
                  Yury Makarychev and
                  Madhur Tulsiani and
                  Gautam Kamath and
                  Julia Chuzhoy},
  title        = {All non-trivial variants of 3-LDT are equivalent},
  booktitle    = {52nd Annual {ACM} Symposium on Theory of Computing ({STOC} 2020)},
  pages        = {974--981},
  publisher    = {{ACM}},
  year         = {2020},
  url          = {https://doi.org/10.1145/3357713.3384275},
  doi          = {10.1145/3357713.3384275},
  timestamp    = {Mon, 18 Dec 2023 07:33:37 +0100},
  biburl       = {https://dblp.org/rec/conf/stoc/0001GS20.bib},
  bibsource    = {dblp computer science bibliography, https://dblp.org}
}

@inproceedings{ChanVX22,
  author       = {Timothy M. Chan and
                  Vassilevska Williams, Virginia and
                  Yinzhan Xu},
  _editor       = {Stefano Leonardi and
                  Anupam Gupta},
  title        = {Hardness for triangle problems under even more believable hypotheses:
                  reductions from real {APSP}, real {3SUM}, and {OV}},
  booktitle    = {54th Annual {ACM} Symposium on Theory of Computing ({STOC} 2022)},
  pages        = {1501--1514},
  publisher    = {{ACM}},
  year         = {2022},
  url          = {https://doi.org/10.1145/3519935.3520032},
  doi          = {10.1145/3519935.3520032},
  timestamp    = {Tue, 27 Dec 2022 09:06:31 +0100},
  biburl       = {https://dblp.org/rec/conf/stoc/ChanWX22.bib},
  bibsource    = {dblp computer science bibliography, https://dblp.org}
}

@inproceedings{AbboudBF23,
  author       = {Amir Abboud and
                  Karl Bringmann and
                  Nick Fischer},
  _editor       = {Barna Saha and
                  Rocco A. Servedio},
  title        = {Stronger 3-{SUM} Lower Bounds for Approximate Distance Oracles via Additive
                  Combinatorics},
  booktitle    = {55th Annual {ACM} Symposium on Theory of Computing ({STOC} 2023)},
  pages        = {391--404},
  publisher    = {{ACM}},
  year         = {2023},
  url          = {https://doi.org/10.1145/3564246.3585240},
  doi          = {10.1145/3564246.3585240},
  timestamp    = {Mon, 22 May 2023 13:15:17 +0200},
  biburl       = {https://dblp.org/rec/conf/stoc/AbboudBF23.bib},
  bibsource    = {dblp computer science bibliography, https://dblp.org}
}

@inproceedings{ChanVX23,
  author       = {Timothy M. Chan and
                  Vassilevska Williams, Virginia and
                  Yinzhan Xu},
  _editor       = {Barna Saha and
                  Rocco A. Servedio},
  title        = {{Fredman}'s Trick Meets Dominance Product: Fine-Grained Complexity of
                  Unweighted {APSP}, {3SUM} Counting, and More},
  booktitle    = {55th Annual {ACM} Symposium on Theory of Computing ({STOC} 2023)},
  pages        = {419--432},
  publisher    = {{ACM}},
  year         = {2023},
  url          = {https://doi.org/10.1145/3564246.3585237},
  doi          = {10.1145/3564246.3585237},
  timestamp    = {Sun, 19 Jan 2025 13:28:33 +0100},
  biburl       = {https://dblp.org/rec/conf/stoc/ChanWX23.bib},
  bibsource    = {dblp computer science bibliography, https://dblp.org}
}

@inproceedings{JinX23,
  author       = {Ce Jin and
                  Yinzhan Xu},
  _editor       = {Barna Saha and
                  Rocco A. Servedio},
  title        = {Removing Additive Structure in {3SUM}-Based Reductions},
  booktitle    = {55th Annual {ACM} Symposium on Theory of Computing ({STOC} 2023)},
  pages        = {405--418},
  publisher    = {{ACM}},
  year         = {2023},
  url          = {https://doi.org/10.1145/3564246.3585157},
  doi          = {10.1145/3564246.3585157},
  timestamp    = {Sun, 19 Jan 2025 13:28:39 +0100},
  biburl       = {https://dblp.org/rec/conf/stoc/JinX23.bib},
  bibsource    = {dblp computer science bibliography, https://dblp.org}
}

@inproceedings{VassilevskaX20b,
  author       = {Vassilevska Williams, Virginia and
                  Yinzhan Xu},
  _editor       = {Sandy Irani},
  title        = {Monochromatic Triangles, Triangle Listing and {APSP}},
  booktitle    = {61st Annual {IEEE} Symposium on Foundations of Computer Science ({FOCS} 2020)},
  pages        = {786--797},
  publisher    = {{IEEE}},
  year         = {2020},
  url          = {https://doi.org/10.1109/FOCS46700.2020.00078},
  doi          = {10.1109/FOCS46700.2020.00078},
  timestamp    = {Wed, 15 Dec 2021 10:49:16 +0100},
  biburl       = {https://dblp.org/rec/conf/focs/WilliamsX20.bib},
  bibsource    = {dblp computer science bibliography, https://dblp.org}
}

@inproceedings{KopelowitzPP16,
  author       = {Tsvi Kopelowitz and
                  Seth Pettie and
                  Ely Porat},
  _editor       = {Robert Krauthgamer},
  title        = {Higher Lower Bounds from the {3SUM} Conjecture},
  booktitle    = {27th Annual {ACM-SIAM} Symposium on Discrete Algorithms ({SODA} 2016)},
  pages        = {1272--1287},
  publisher    = {{SIAM}},
  year         = {2016},
  _url          = {https://doi.org/10.1137/1.9781611974331.ch89},
  doi          = {10.1137/1.9781611974331.CH89},
  timestamp    = {Tue, 02 Feb 2021 17:07:56 +0100},
  biburl       = {https://dblp.org/rec/conf/soda/KopelowitzPP16.bib},
  bibsource    = {dblp computer science bibliography, https://dblp.org}
}

@inproceedings{Fischer24,
  author       = {Nick Fischer},
  _editor       = {David P. Woodruff},
  title        = {Deterministic Sparse Pattern Matching via the {Baur}-{Strassen} Theorem},
  booktitle    = {35th Annual {ACM-SIAM} Symposium on Discrete Algorithms ({SODA} 2024)},
  pages        = {3333--3353},
  publisher    = {{SIAM}},
  year         = {2024},
  url          = {https://doi.org/10.1137/1.9781611977912.119},
  doi          = {10.1137/1.9781611977912.119},
  timestamp    = {Thu, 04 Apr 2024 19:38:10 +0200},
  biburl       = {https://dblp.org/rec/conf/soda/Fischer24.bib},
  bibsource    = {dblp computer science bibliography, https://dblp.org}
}

@inproceedings{Fischer25,
  author       = {Nick Fischer},
  _editor       = {Yossi Azar and
                  Debmalya Panigrahi},
  title        = {Sumsets, {3SUM}, Subset Sum: Now for Real!},
  booktitle    = {36th Annual {ACM-SIAM} Symposium on Discrete Algorithms ({SODA} 2025)},
  pages        = {4520--4546},
  publisher    = {{SIAM}},
  year         = {2025},
  url          = {https://doi.org/10.1137/1.9781611978322.155},
  doi          = {10.1137/1.9781611978322.155},
  timestamp    = {Tue, 28 Jan 2025 14:38:41 +0100},
  biburl       = {https://dblp.org/rec/conf/soda/Fischer25.bib},
  bibsource    = {dblp computer science bibliography, https://dblp.org}
}

@inproceedings{FischerJX25,
  author       = {Nick Fischer and
                  Ce Jin and
                  Yinzhan Xu},
  _editor       = {Yossi Azar and
                  Debmalya Panigrahi},
  title        = {New Applications of {3SUM}-Counting in Fine-Grained Complexity and Pattern
                  Matching},
  booktitle    = {36th Annual {ACM-SIAM} Symposium on Discrete Algorithms ({SODA} 2025)},
  pages        = {4547--4595},
  publisher    = {{SIAM}},
  year         = {2025},
  url          = {https://doi.org/10.1137/1.9781611978322.156},
  doi          = {10.1137/1.9781611978322.156},
  timestamp    = {Tue, 28 Jan 2025 14:38:41 +0100},
  biburl       = {https://dblp.org/rec/conf/soda/Fischer0X25.bib},
  bibsource    = {dblp computer science bibliography, https://dblp.org}
}

@inproceedings{DurajKPW20,
  author       = {Lech Duraj and Krzysztof Kleiner and Adam Polak and Vassilevska Williams, Virginia},
  title        = {Equivalences between triangle and range query problems},
  booktitle    = {31st Annual {ACM-SIAM} Symposium on Discrete Algorithms ({SODA} 2020)},
  pages        = {30--47},
  publisher    = {{SIAM}},
  year         = {2020},
  doi          = {10.1137/1.9781611975994.3},
}

@inproceedings{ChanX24,
  author       = {Timothy M. Chan and
                  Yinzhan Xu},
  _editor       = {Merav Parter and
                  Seth Pettie},
  title        = {Simpler Reductions from Exact Triangle},
  booktitle    = {7th Symposium on Simplicity in Algorithms ({SOSA} 2024)},
  pages        = {28--38},
  publisher    = {{SIAM}},
  year         = {2024},
  url          = {https://doi.org/10.1137/1.9781611977936.4},
  doi          = {10.1137/1.9781611977936.4},
  timestamp    = {Wed, 10 Apr 2024 20:26:09 +0200},
  biburl       = {https://dblp.org/rec/conf/sosa/ChanX24.bib},
  bibsource    = {dblp computer science bibliography, https://dblp.org}
}

@inproceedings{ChanH20,
  author       = {Timothy M. Chan and
                  Qizheng He},
  title        = {Reducing {3SUM} to Convolution-{3SUM}},
  booktitle    = {3rd Symposium on Simplicity in Algorithms, {SOSA} 2020},
  pages        = {1--7},
  publisher    = {{SIAM}},
  year         = {2020},
  doi          = {10.1137/1.9781611976014.1},
}

@inproceedings{LincolnWWW16,
  author       = {Andrea Lincoln and Vassilevska Williams, Virginia and Joshua R. Wang and R. Ryan Williams},
  title        = {Deterministic Time-Space Trade-Offs for k-{SUM}},
  booktitle    = {43rd International Colloquium on Automata, Languages, and Programming, {ICALP} 2016},
  series       = {LIPIcs},
  pages        = {58:1--58:14},
  publisher    = {Schloss Dagstuhl - Leibniz-Zentrum f{\"{u}}r Informatik},
  year         = {2016},
  doi          = {10.4230/LIPICS.ICALP.2016.58},
}

@inproceedings{RandolphW24,
  author       = {Tim Randolph and
                  Karol Węgrzycki},
  _editor       = {Timothy M. Chan and
                  Johannes Fischer and
                  John Iacono and
                  Grzegorz Herman},
  title        = {Parameterized Algorithms on Integer Sets with Small Doubling: Integer
                  Programming, Subset Sum and {$k$}-{SUM}},
  booktitle    = {32nd Annual European Symposium on Algorithms ({ESA} 2024)},
  series       = {LIPIcs},
  volume       = {308},
  pages        = {96:1--96:19},
  publisher    = {Schloss Dagstuhl - Leibniz-Zentrum f{\"{u}}r Informatik},
  year         = {2024},
  url          = {https://doi.org/10.4230/LIPICS.ESA.2024.96},
  doi          = {10.4230/LIPICS.ESA.2024.96},
  timestamp    = {Mon, 14 Oct 2024 16:58:03 +0200},
  biburl       = {https://dblp.org/rec/conf/esa/0001W24.bib},
  bibsource    = {dblp computer science bibliography, https://dblp.org}
}

@inproceedings{GoldS17,
  author       = {Omer Gold and Micha Sharir},
  title        = {Improved Bounds for {3SUM}, k-{SUM}, and Linear Degeneracy},
  booktitle    = {25th Annual European Symposium on Algorithms, {ESA} 2017},
  series       = {LIPIcs},
  pages        = {42:1--42:13},
  publisher    = {Schloss Dagstuhl - Leibniz-Zentrum f{\"{u}}r Informatik},
  year         = {2017},
  doi          = {10.4230/LIPICS.ESA.2017.42},
}

@inproceedings{FischerKP24,
  author       = {Nick Fischer and
                  Piotr Kaliciak and
                  Adam Polak},
  title        = {Deterministic {3SUM}-Hardness},
  booktitle    = {15th Innovations in Theoretical Computer Science Conference, {ITCS} 2024},
  series       = {LIPIcs},
  pages        = {49:1--49:24},
  publisher    = {Schloss Dagstuhl - Leibniz-Zentrum f{\"{u}}r Informatik},
  year         = {2024},
  doi          = {10.4230/LIPICS.ITCS.2024.49},
}

@inbook{VW18,
  author    = {Vasillevska Williams, Virginia},
  title     = {On some fine-grained questions in algorithms and complexity},
  booktitle = {Proceedings of the International Congress of Mathematicians (ICM 2018)},
  pages     = {3447-3487},
  doi       = {10.1142/9789813272880_0188},
  year      = {2018},
  publisher = {World Scientific},
}

@inproceedings{Dietzfelbinger96,
  author       = {Martin Dietzfelbinger},
  _editor       = {Claude Puech and
                  R{\"{u}}diger Reischuk},
  title        = {Universal Hashing and k-Wise Independent Random Variables via Integer
                  Arithmetic without Primes},
  booktitle    = {13th Annual Symposium on Theoretical Aspects of Computer Science ({STACS} 1996)},
  series       = {Lecture Notes in Computer Science},
  pages        = {569--580},
  publisher    = {Springer},
  year         = {1996},
  url          = {https://doi.org/10.1007/3-540-60922-9\_46},
  doi          = {10.1007/3-540-60922-9\_46},
  timestamp    = {Tue, 14 May 2019 10:00:48 +0200},
  biburl       = {https://dblp.org/rec/conf/stacs/Dietzfelbinger96.bib},
  bibsource    = {dblp computer science bibliography, https://dblp.org}
}

@inproceedings{HartmanisS74,
  author       = {Juris Hartmanis and Janos Simon},
  title        = {On the Power of Multiplication in Random Access Machines},
  booktitle    = {15th Annual Symposium on Switching and Automata Theory ({SWAT} 1974)},
  pages        = {13--23},
  publisher    = {{IEEE} Computer Society},
  year         = {1974},
  doi          = {10.1109/SWAT.1974.20},
}

@article{VassilevskaW18,
  author       = {Vassilevska Williams, Virginia and
                  R. Ryan Williams},
  title        = {Subcubic Equivalences Between Path, Matrix, and Triangle Problems},
  journal      = {J. {ACM}},
  volume       = {65},
  number       = {5},
  pages        = {27:1--27:38},
  year         = {2018},
  url          = {https://doi.org/10.1145/3186893},
  doi          = {10.1145/3186893},
  bibsource    = {dblp computer science bibliography, https://dblp.org}
}

@article{GronlundP18,
  note         = {Announced at {FOCS} 2014},
  author       = {Allan Gr{\o}nlund and Seth Pettie},
  title        = {Threesomes, Degenerates, and Love Triangles},
  journal      = {J. {ACM}},
  volume       = {65},
  number       = {4},
  pages        = {22:1--22:25},
  year         = {2018},
  doi          = {10.1145/3185378},
}

@article{Fredman76,
  author       = {Michael L. Fredman},
  title        = {New Bounds on the Complexity of the Shortest Path Problem},
  journal      = {{SIAM} J. Comput.},
  volume       = {5},
  number       = {1},
  pages        = {83--89},
  year         = {1976},
  url          = {https://doi.org/10.1137/0205006},
  doi          = {10.1137/0205006},
  bibsource    = {dblp computer science bibliography, https://dblp.org}
}

@article{WilliamsW13,
  note         = {Announced at {STOC} 2009},
  author       = {Vassilevska Williams, Virginia and Ryan Williams},
  title        = {Finding, Minimizing, and Counting Weighted Subgraphs},
  journal      = {{SIAM} J. Comput.},
  volume       = {42},
  number       = {3},
  pages        = {831--854},
  year         = {2013},
  doi          = {10.1137/09076619X},
}

@article{Chan20,
  note         = {Announced at {SODA} 2018},
  author       = {Timothy M. Chan},
  title        = {More Logarithmic-factor Speedups for {3SUM}, (median,+)-convolution, and Some Geometric {3SUM}-hard Problems},
  journal      = {{ACM} Trans. Algorithms},
  volume       = {16},
  number       = {1},
  pages        = {7:1--7:23},
  year         = {2020},
  doi          = {10.1145/3363541},
}

@article{Freund17,
  author       = {Ari Freund},
  title        = {Improved Subquadratic {3SUM}},
  journal      = {Algorithmica},
  volume       = {77},
  number       = {2},
  pages        = {440--458},
  year         = {2017},
  doi          = {10.1007/S00453-015-0079-6},
}

@article{BaranDP08,
  note         = {Announced at {WADS} 2005},
  author       = {Ilya Baran and
                  Erik D. Demaine and
                  Mihai P{\u{a}}tra{\c{s}}cu},
  title        = {Subquadratic Algorithms for {3SUM}},
  journal      = {Algorithmica},
  volume       = {50},
  number       = {4},
  pages        = {584--596},
  year         = {2008},
  doi          = {10.1007/S00453-007-9036-3},
}

@article{Yuval76,
  author       = {Gideon Yuval},
  title        = {An Algorithm for Finding All Shortest Paths Using N{\^{}}(2.81) Infinite-Precision
                  Multiplications},
  journal      = {Inf. Process. Lett.},
  volume       = {4},
  number       = {6},
  pages        = {155--156},
  year         = {1976},
  doi          = {10.1016/0020-0190(76)90085-5},
}

@article{Boruvka28,
  author       = {Otakar Borůvka},
  title        = {Příspěvek k řešení otázky ekonomické stavby elektrovodních sítí},
  journal      = {Elektronický Obzor},
  year         = {1928},
  volume       = {15},
  pages        = {153--154}
}

@misc{Freiman66,
  author       = {Gregory A. Fre{\u{\i}}man},
  title        = {\foreignlanguage{russian}{Начала структурной теории сложения множеств}},
  year         = {1966},
  language     = {Russian},
  howpublished = {Kazan': {Kazan}. {Gosudarstv}. {Ped}. {Inst}.},
}

@article{FrankT87,
  author       = {Andr{\'{a}}s Frank and
                  {\'{E}}va Tardos},
  title        = {An application of simultaneous Diophantine approximation in combinatorial
                  optimization},
  journal      = {Comb.},
  volume       = {7},
  number       = {1},
  pages        = {49--65},
  year         = {1987},
  url          = {https://doi.org/10.1007/BF02579200},
  doi          = {10.1007/BF02579200},
  bibsource    = {dblp computer science bibliography, https://dblp.org}
}

@article{Ruzsa94,
  author       = {Imre Z. Ruzsa},
  title        = {Generalized arithmetical progressions and sumsets},
  journal      = {Acta Mathematica Hungarica},
  volume       = {65},
  number       = {4},
  pages        = {379--388},
  year         = {1994},
  url          = {https://doi.org/10.1007/BF01876039},
  doi          = {10.1007/BF01876039},
}

@article{GajentaanO95,
  author       = {Anka Gajentaan and
                  Mark H. Overmars},
  title        = {On a Class of {$O(n^2)$} Problems in Computational Geometry},
  journal      = {Comput. Geom.},
  volume       = {5},
  pages        = {165--185},
  year         = {1995},
  url          = {https://doi.org/10.1016/0925-7721(95)00022-2},
  doi          = {10.1016/0925-7721(95)00022-2},
  bibsource    = {dblp computer science bibliography, https://dblp.org}
}

@article{Bilu99,
  author       = {Yuri Bilu},
  title        = {Structure of sets with small sumset},
  journal      = {Astérisque},
  year         = {1999},
  volume       = {258},
  pages        = {77--108},
}

@article{Chang02,
  author       = {Mei-Chu Chang},
  title        = {A polynomial bound in {F}reiman’s theorem},
  journal      = {Duke Mathematical Journal},
  year         = {2002},
  volume       = {113},
  number       = {3},
  url          = {https://doi.org/10.1215/S0012-7094-02-11331-3},
  doi          = {10.1215/S0012-7094-02-11331-3},
}

@article{GreenR07,
  author       = {Ben Green and
                  Imre Z. Ruzsa},
  title        = {Freiman’s theorem in an arbitrary abelian group},
  journal      = {Journal of the London Mathematical Society},
  volume       = {75},
  number       = {1},
  pages        = {163--175},
  year         = {2007},
  url          = {https://doi.org/10.1112/jlms/jdl021},
  doi          = {10.1112/jlms/jdl021},
}

@article{Sanders12,
  author       = {Tom Sanders},
  title        = {On the {Bogolyubov}–{Ruzsa} lemma},
  journal      = {Analysis \& PDE},
  volume       = {5},
  number       = {3},
  pages        = {627--655},
  year         = {2012},
  publisher    = {Mathematical Sciences Publishers},
  url          = {https://doi.org/10.2140/apde.2012.5.627},
  doi          = {10.2140/apde.2012.5.627},
}

@article{GowersGMT25,
  author       = {W.T. Gowers and
                  Ben Green and
                  Freddie Manners and
                  Terence Tao},
  title        = {On a conjecture of {Marton}},
  journal      = {Annals of Mathematics},
  volume       = {201},
  number       = {2},
  pages        = {515--549},
  year         = {2025},
  url          = {https://doi.org/10.4007/annals.2025.201.2.5},
  doi          = {10.4007/annals.2025.201.2.5},
}

@article{LenstraLL1982,
  author  = {Lenstra, A. K. and
             Lenstra, H. W. and
             Lov{\'a}sz, L.},
  title   = {Factoring polynomials with rational coefficients},
  journal = {Mathematische Annalen},
  year    = {1982},
  volume  = {261},
  number  = {4},
  pages   = {515--534},
  doi     = {10.1007/BF01457454},
}

@book{TaoV06,
  author       = {Terence Tao and
                  Van H. Vu},
  title        = {Additive Combinatorics},
  publisher    = {Cambridge University Press},
  year         = {2006},
  url          = {https://doi.org/10.1017/cbo9780511755149},
  doi          = {10.1017/cbo9780511755149},
}

@incollection{Eisenbrand10,
  author       = {Friedrich Eisenbrand},
  _editor       = {Michael J{\"{u}}nger and
                  Thomas M. Liebling and
                  Denis Naddef and
                  George L. Nemhauser and
                  William R. Pulleyblank and
                  Gerhard Reinelt and
                  Giovanni Rinaldi and
                  Laurence A. Wolsey},
  title        = {Integer Programming and Algorithmic Geometry of Numbers - {A} tutorial},
  booktitle    = {50 Years of Integer Programming 1958-2008 - {From} the Early Years to
                  the State-of-the-Art},
  pages        = {505--559},
  publisher    = {Springer},
  year         = {2010},
  url          = {https://doi.org/10.1007/978-3-540-68279-0\_14},
  doi          = {10.1007/978-3-540-68279-0\_14},
  timestamp    = {Wed, 14 Nov 2018 10:12:21 +0100},
  biburl       = {https://dblp.org/rec/books/daglib/p/Eisenbrand10.bib},
  bibsource    = {dblp computer science bibliography, https://dblp.org}
}

@book{Vazirani01,
  title        = {Approximation algorithms},
  author       = {Vazirani, Vijay V},
  year         = {2001},
  publisher    = {Springer},
  url          = {https://doi.org/10.1007/978-3-662-04565-7},
  doi          = {10.1007/978-3-662-04565-7},
}
